\theoremstyle{plain}
\newtheorem{theorem}{Theorem}
\newtheorem{lemma}{Lemma}
\theoremstyle{definition}
\newtheorem{example}{Example}
\newlength{\rulevgap}
\newlength{\ruleheight}
\newlength{\ruledepth}
\newsavebox{\rulebox}
\newlength{\GapLength}
\newcommand{\Rule}[2]{\savebox{\rulebox}[\width][b]                         %
                              {\( \frac{\raisebox{0in} {\( #1 \)}}       %
                                       {\raisebox{-0.03in}{\( #2 \)}} \)}   %
                      \settoheight{\ruleheight}{\usebox{\rulebox}}          %
                      \addtolength{\ruleheight}{\rulevgap}                  %
                      \settodepth{\ruledepth}{\usebox{\rulebox}}            %
                      \addtolength{\ruledepth}{\rulevgap}                   %
                      \raisebox{0in}[\ruleheight][\ruledepth]               %
                               {\usebox{\rulebox}}}
\newcommand{\Axiom}[1]{\savebox{\rulebox}[\width][b]                        %
                               {$\frac{}{\raisebox{-0.03in}{$#1$}}$}        %
                      \settoheight{\ruleheight}{\usebox{\rulebox}}          %
                      \addtolength{\ruleheight}{\rulevgap}                  %
                      \settodepth{\ruledepth}{\usebox{\rulebox}}            %
                      \addtolength{\ruledepth}{\rulevgap}                   %
                      \raisebox{0in}[\ruleheight][\ruledepth]               %
                               {\usebox{\rulebox}}}
\newcommand{\ie}{i.e.\ }
\newcommand{\eg}{e.g.\ }
\title{Elaborating Inductive Definitions}
\author{Pierre-\'{E}variste Dagand \and Conor McBride}
\date{}
\begin{document}

\maketitle


\begin{abstract}



We present an elaboration of inductive definitions down to a universe
of datatypes. The universe of datatypes is an internal presentation of
strictly positive families within type theory. By elaborating an
inductive definition -- a syntactic artifact -- to its code -- its
semantics -- we obtain an internalized account of inductives inside
the type theory itself: we claim that reasoning about inductive
definitions could be carried in the type theory, not in the
meta-theory as it is usually the case. Besides, we give a formal
specification of that elaboration process. It is therefore amenable to
formal reasoning too. We prove the soundness of our translation and
hint at its correctness with respect to Coq's \texttt{Inductive}
definitions. The practical benefits of this approach are numerous. For
the type theorist, this is a small step toward bootstrapping, \ie
implementing the inductive fragment in the type theory itself. For the
programmer, this means better support for generic programming: we
shall present a lightweight \texttt{deriving} mechanism, entirely
definable by the programmer and therefore not requiring any extension
to the type theory.

\end{abstract}


In a dependent type theory, inductive types come in various shapes and
forms. Unsurprisingly, we can define data-types \`{a} la ML,
following the \emph{sum-of-product} recipe: we offer a choice of
constructors and, for each constructor, comes a product of
arguments. An example of such definition is the vintage and classic
\(\List{}\) datatype:
\[
\ListDef
\]
For the working semanticist, this brings fond memory of a golden era:
this syntax has a trivial categorical interpretation in term of
\emph{signature functor}, here \(L_A X = 1 + A \times X\). Without a
second thought, we can brush away the syntax, mapping the syntactic
representations of sum and product to their categorical
counterpart. Handling parameters comes at a minor complexity cost: we
merely parameterize the functor itself, for instance with \(A\) here.


However, these data-type definition are child's play in a
dependently-typed setting: they do not let us use any type dependency
in their definition, nor do they let us define inductive dependent 
types. To obtain grown-ups' datatypes, we move to inductive
families~\citep{dybjer:inductive-families}: we first introduce the notion of
\emph{index} that, unlike parameters, we can \emph{constrain} to a
particular value. The archetypal example of an inductive family is the
\(\Vector{}\) datatype, which can understood as a list indexed by its
length:
\[
\VectorDefEquality
\]
In our syntax, we are purposely explicit about constraints on
indices. Indeed, these constraints eventually turn into a statement
using whatever propositional equality the underlying type theory has
to offer.


Alternatively, our model of inductive families~\citep{alti:lics09}
suggests that vectors could be defined by first \emph{matching} over
the index \(n\): if \(n\) is \(\Zero\), then the only possible
constructor is \(\VNil\), otherwise, if \(n\) is \(\Suc[m]\) for some
\(m\), it must be a \(\VCons\) which tail is of length \(m\). We
reflect this definition style -- computing over indices -- by the
following syntax:
\[
\VectorDef
\]
Note that we are here using the Epigram \emph{by} (\(\DoBy\))
gadget~\citep{mcbride.mckinna:view-from-the-left} to perform the case
analysis on \(n\). A pattern-matching
notation~\citep{coquand:pattern-matching,sozeau:equations} could be
used as well. When the patterns are unsurprising, we shall abuse
notation and write a standard pattern match.

This definition style lets us maximally use the information provided
by the indices to structure datatypes. In the constraint-based
definition, we have to store an index \(n'\) against which we
constrain \(n\). In the computation-based definition, we simply match
against the index. Such difference of presentation has been studied by
\citet{brady:index-inductive-families} in the context of various
optimizations on inductive types. In our work on
ornaments~\citep{mcbride:ornament,dagand:fun-orn}, this definition
style is instrumental in structuring our universes of functional
ornament and enables us to lift functions across ornamented types.


Now, we ought to make sure that our language of data-type is correct,
let alone semantically meaningful. Indeed, if we were to accept the
following definition
\Spacedcommand{\Bottom}{\Canonical{Bad}}
\[
\Data{\Bottom}
     {\Param{\Var{A}}{\Set}}
     {\Set}{
\Emit{\Bottom}{\Var{A}}{\Constructor{ex}\: \PiTel{\Var{f}}{\Bottom[\Var{A}] \To \Var{A}}}}
\]
we would make many formal developments a lot easier to prove! To ban
these bogus definitions, theorem provers such as
Agda~\citep{norell:agda} or Coq~\citep{coq} rely on a positivity
checker to ensure that all recursive arguments are in a
strictly positive position.
The positivity checker is therefore part of the trusted computing base
of the theorem prover. Besides, by working on the syntactic
representation of datatypes, it is a non negligible piece of software
that is a common source of frustration: it either stubbornly prevents
perfectly valid definitions -- as it sometimes is the case in Coq --
or happily accepts obnoxious definitions -- as Agda users discover
every so often.


For the working semanticist, this is an awakening and a rude one:
while reasoning about ML datatypes used to be at a functor away, she
now has to cope with equality constraints and computations on
indices. Most infuriatingly, we have some elegant models for inductive
families but we seem stuck with some clumsy syntactic presentation:
quoting \citet{harper:elaboration}, ``the treatment of datatypes is
technically complex, but conceptually straightforward''. Following
Stone and Harper, most authors~\citep{coen:refinement,luo:utt,
  mcbride:construction-constructor} have no choice but to throw in the
towel and proceed over a ``\ldots''-filled skeleton of inductive
definition. While this does not make these works any less correct, it
makes them hard for the author to get right and for the reader to
understand.


We attribute these difficulties to the formal gap between the syntax
of inductive definitions and their semantics. While inductive families
have an interpretation in term of strictly positive functors and their
initial algebra, we are unable to leverage this knowledge. Being stuck
with a syntactic artifact, the ghost of the de Bruijn criterion haunts
our type theories: inductive definitions elude the type checker and
must be enforced by a not-so-small positivity checker. Besides, since
the syntax of inductive definitions is hardly amenable to formal
reasoning, we are left wondering if its intended semantics is indeed
always respected. How many inductive skeletons might be hidden in the
dark closet of your favorite theorem prover?


An alternative to a purely syntactic approach is to reflect inductive
types inside the type theory itself. Following
\citet{benke:universe-generic-prog}, we extend a Martin-L\"{o}f type
theory with a universe of inductive families, all that for a minor
complexity cost~\citep{dagand:levitation}. From within the type
theory, we are then able to create and manipulate datatypes but also
compute over them. However, from a user perspective, these codes are a
no-go: manually coding datatypes, for instance, is too cumbersome. Rather
than writing low-level codes, we would like to write a
honest-to-goodness inductive definition and get the computer to
automatically \emph{elaborate} it to a code in the universe.


In this paper, we elaborate upon (pun intended) the syntax of
datatypes introduced in \citet{dagand:fun-orn}. In our previous work,
we presented a conservative extension of an \texttt{Inductive}-like
syntax, the twist being in our support of computation over
indices. While this syntax had been informally motivated, this paper
gives a formal specification of its elaboration down to our universe
of datatypes. Our contributions are the following:

\begin{itemize}


\item In Section~\ref{sec:elab-course}, we give a crash course in
  elaboration for dependent types. We will present a bidirectional
  type checker~\citep{pierce:bidirectional-tc} for our type theory. We
  then extend it to make programming a less cryptic experience. To
  that purpose, we shall use types as \emph{presentations} of more
  high-level concepts, such as the notions of finite set or of
  datatype constructor. While this Section does not contain any new
  result \emph{per se}, we aim at introducing the reader to a coherent
  collection of techniques that, put together, form a general
  framework for type-directed elaboration ;


\item In Section~\ref{sec:elab-data-types}, we specify the elaboration
  of inductive types down to a simple universe of inductive
  types. While the system we present in this Section is restricted to
  strictly positive types, we take advantage of its simplicity to
  develop our intuition. The same ideas are at play in the case of
  inductive families ;


\item In Section~\ref{sec:elab-data-family}, we specify the
  elaboration of inductive families down to our universe of inductive
  families. This system subsumes the one introduced in
  Section~\ref{sec:elab-data-types} but we should reuse much of the
  concepts developed in that Section. The novelty of our syntax is to
  support computation on indices, as made possible by our model of
  inductive families and its universe presentation ;


\item In Section~\ref{sec:discussion}, we draw the consequences of our
  design choice. For the proof-assistant implementer, we show how
  meta-theoretical results on inductives, such
  as~\citet{mcbride:construction-constructor}, can be internalized and
  formally presented in the type theory. For the programmer, we show
  how a generic \texttt{deriving} mechanism \`{a} la Haskell
  can be implemented from within the type theory.

\end{itemize}


\paragraph{Scope of this work:} 

This paper aims at \emph{specifying} an elaboration procedure from an
inductive definition down to its representation in a universe of
inductive types. At the risk of disappointing implementers, we are
not describing an implementation. In particular, we shall present the
elaboration in a relational style, hence conveniently glancing over
the operational details. Our goal is to ease the formal study of
inductive definitions, hence the choice of this more abstract style.
Nonetheless, this paper is not entirely disconnected from
implementation. First, it grew out of our work on the Epigram
system~\citep{pigs:epigram}, in which Peter Morris implemented a
tactic elaborating an earlier form of inductive definition down to our
universe of code. Second, a tutorial implementation of an elaborator
for inductive types is currently underway.


\section{The Type Theory}
\label{sec:type-theory}


For this paper to be self-contained, we shall recall a few definitions
from our previous work~\citep{dagand:levitation}. We shall not dwell
on the meta-theoretical properties of this system: the interested
reader should consult~\citet{luo:utt} for a study of Martin-L\"{o}f
type theory and~\citet{dagand:levitation} for its extension with a
universe of datatypes.


We present our core type theory in Figure~\ref{fig:type-theory}. It is
a standard Martin-L\"{o}f type theory, with \(\Sigma\)-types and
\(\Pi\)-types. We shall write \(\Set[k]\) for the hierarchy of types,
implicitly assuming cumulativity of universes. In order to be
equality-agnostic, we simply specify our expectations through a
judgmental presentation. Our presentation is hopefully not
controversial and should adapt easily to regional variations, such as
Coq, Agda or an observational type theory~\citep{altenkirch:ott}.

\renewcommand{\EpigramEquality}{
\begin{array}{@{}c}
\Rule{\begin{array}{@{}l}
        \TypeJudgment{\Gamma}{S}{\Set} \quad
        \TypeJudgment{\Gamma ; \XS}{t}{T} \\
        \TypeJudgment{\Gamma}{s}{S}
      \end{array}}
     {\EqualJudgment{\Gamma}{(\LamAnn{\X}{S} t)\:s}{t[s/\X]}{T[s/\X]}}
\\
\Rule{\begin{array}{@{}l}
       \TypeJudgment{\Gamma}{s}{S} \quad
       \TypeJudgment{\Gamma ; \XS}{T}{\Set} \\
       \TypeJudgment{\Gamma}{t}{T[s/\X]}
      \end{array}}
     {\EqualJudgment{\Gamma}{\Fst[(\PairAnn{s}{t}{\X.T})]}{s}{S}}
\\
\Rule{\begin{array}{@{}l}
       \TypeJudgment{\Gamma}{s}{S} \quad
       \TypeJudgment{\Gamma ; \XS}{T}{\Set} \\
       \TypeJudgment{\Gamma}{t}{T[s/\X]}
      \end{array}}
     {\EqualJudgment{\Gamma}{\Snd[(\PairAnn{s}{t}{\X.T})]}{t}{T[s/\X]}}
\end{array}
}

\renewcommand{\EpigramContextValidity}{
    \Axiom{\ContextValid{}}
    \\
    \Rule{\ContextValid{\Gamma} \quad
          \TypeJudgment{\Gamma}{\Meta{S}}{\Set[k]}}
         {\ContextValid{\Gamma ; \XS}}\;\X\not\in\Gamma
}

\begin{figure}[tbp]
\centering
\begin{tabular}{l@{\qquad}l}
\begin{tabular}{c}
\subfloat[][Context validity]
         {
{\small
\(\Code[c]{
\EpigramContextValidity \\
\Rule{\ContextValid{\Gamma} \quad
      \Code[c]{
        \TypeJudgment{\Gamma}{\Meta{S}}{\Set[k]} \\
        \TypeJudgment{\Gamma}{\Meta{t}}{\Meta{S}}}}
     {\ContextValid{\Gamma ; \TypeAnn{\X \mapsto \Meta{t}}{\Meta{S}}}}\;\X\not\in\Gamma
}\)}}  \\
\subfloat[][Judgmental equality]{\small \(\EpigramEquality\)}
\end{tabular}
&
\subfloat[][Typing judgments]{\small \(\EpigramTypeSystem\)}
\end{tabular}

\caption{Type theory}
\label{fig:type-theory}

\end{figure}


\Spacedcommand{\spi}{\Function{\(\pi\)}}

A first addition to this core calculus is a universe of
enumerations~(Fig.~\ref{fig:enum-universe}). The purpose of this
universe is to let us define finite collections of labels. Labels are
introduced through the \(\UId\) type and are then used to define
finite sets through the \(\EnumU\) universe. To index a specific
element in such a set, we write an \(\EnumT\) code. To eliminate
finite sets, we form a small \(\Pi\)-type \( \TypeAnn{\spi}
{\PiTel{\Var{E}}{\EnumU} \PiTo{\Var{P}}{\EnumT[\Var{E}] \To \Set}
  \Set} \) that builds a lookup tuple mapping, for all label \(e\) in
the enumeration, a value of type \(P\: e\). To perform the lookup, we
define the eliminator:
\[ \TypeAnn{\EnumElim} {\PiTel{\Var{E}}{\EnumU}
  \PiTel{\Var{P}}{\EnumT[\Var{E}] \To \Set} \To \spi[\Var{E}\:
    \Var{P}] \To \PiTel{\Var{x}}{\EnumT[\Var{E}]} \To \Var{P}\: \X}
\]

\begin{example}[Coding \(\Collection{\Tag{a}, \Tag{b}, \Tag{c}}\)]

We define this set by merely enumerating its labels, in effect
building a list of tags:
\[
\Collection{\Tag{a}, \Tag{b}, \Tag{c}} \triangleq \TypeAnn{\ConsEnum[\Tag{a}\: (\ConsEnum[\Tag{b}\: (\ConsEnum[\Tag{c}\: \NilEnum])])]}{\EnumU}
\]

\end{example}

\begin{example}[Coding \(\TypeAnn{\CollectionElim{\Tag{a} \mapsto e_a, \Tag{b} \mapsto e_b, \Tag{c} \mapsto e_c}}
                                 {\PiTo{\Var{x}}{\EnumT[\Collection{\Tag{a}, \Tag{b}, \Tag{c}}]} P\: x}\)]

We define this function by a straightforward application of the
\(\EnumElim\) eliminator:
\[
\CollectionElim{\Tag{a} \mapsto e_a, \Tag{b} \mapsto e_b, \Tag{c} \mapsto e_c}
    \triangleq
        \EnumElim[(\ConsEnum[\Tag{a}\: (\ConsEnum[\Tag{b}\: (\ConsEnum[\Tag{c}\: \NilEnum])])])\:
                  P\:
                  \Pair{e_a}{\Pair{e_b}{\Pair{e_c}{\Void}}}]
\]

\end{example}

\begin{figure}[tb]

{\small
\begin{tabular}{lll}
\subfloat[][Tags]{\(\Code[c]{\EpigramTypeUId\\ \EpigramTypeTag}\) \label{fig:label}} &
\subfloat[][Enumeration]{\(\Code[c]{\EpigramTypeEnumU\\ \EpigramTypeNilEnum\\ \EpigramTypeConsEnum}\)} &
\subfloat[][Index]{\(\Code[c]{\EpigramTypeEnumT\\ \EpigramTypeZeroEnumT\\ \EpigramTypeSucEnumT}\)}
\end{tabular}
}

\caption{Universe of enumerations}
\label{fig:enum-universe}

\end{figure}


We recall the definition of our universe of inductive types in
Figure~\ref{fig:universe-types}. This universe captures
strictly positive types, a generalization of ML datatypes to dependent
types. For pedagogical reason, we choose this simple universe as a
first step toward a full-blown universe of inductive families.
The idea at work behind this presentation is the following: to define
new datatypes, we give their code by \emph{desc}ribing their signature
functor in \(\Desc\). The interpretation function
\(\InterpretDesc{\_}\) turns such a description into the corresponding
endofunctor over \(\Set\). Its definition is obvious from the codes: a
\(\DSigma\) is interpreted into a \(\Sigma\)-type, and so on. The
notable exception is \(\DVar\), which describes the identity
functor. Remark that the resulting functor are strictly positive, by
construction. Hence, we can construct their least fix-point using
\(\Mu\) and safely provide a generic elimination principle,
\(\induction\). The \(\All\) function computes the inductive
hypothesis. The interested reader will find their precise definition
in \citet{dagand:levitation} but the basic intuition we have given
here is enough to understand this paper.

\begin{figure}[tb]

\centering
\subfloat[][Codes]{\small\(
\Code[c]{
  \EpigramTypeDesc \\
  \EpigramTypeDescVar \qquad   \EpigramTypeDescUnit \\
  \EpigramTypeDescTimes \\
  \EpigramTypeDescPi \\
  \EpigramTypeDescSigma \\
  \EpigramTypeDescsigma
}\)}
\qquad
\subfloat[][Fix-point]{\small\(
\Code[c]{
  \Let{\InterpretDesc{\PiTel{\Var{D}}{\Desc}} & \PiTel{\Var{X}}{\Set}}{\Set}{}
  \\
  \Case{
    \Return{\InterpretDesc{\DVar} & \Var{X}}
           {\Var{X}}
    \Return{\InterpretDesc{\DUnit} & \Var{X}}
           {\Unit}
    \Return{\InterpretDesc{\Var{A} \DTimes \Var{B}} & \Var{X}}
           {\InterpretDesc{\Var{A}}[\Var{X}] \Times \InterpretDesc{\Var{B}}[\Var{X}]}
    \Return{\InterpretDesc{\DPi[\Var{S}\: \Var{T}]} & \Var{X}}
           {\PiTo{\Var{s}}{\Var{S}}{\InterpretDesc{\Var{T}\: \Var{s}}[\Var{X}]}}
    \Return{\InterpretDesc{\DSigma[\Var{S}\: \Var{T}]} & \Var{X}}
           {\SigmaTimes{\Var{s}}{\Var{S}}{\InterpretDesc{\Var{T}\: \Var{s}}[\Var{X}]}}
    \Return{\InterpretDesc{\Dsigma[\Var{E}\: \Var{T}]} & \Var{X}}
           {\SigmaTimes{\Var{e}}{\EnumT[\Var{E}]}{\InterpretDesc{\Var{T}\: \Var{e}}[\Var{X}]}}
  }
  \\
  \\
  \EpigramTypeMu \qquad
  \EpigramTypeCon
}\)}
\\
\subfloat[][Elimination principle]{\small
\(\TypeAnn{\induction} 
        {\PiTel{\Var{D}}{\Desc}
         \PiTo{\Var{P}}{\Mu{\Var{D}} \To \Set}
         (\PiTo{\Var{d}}{\InterpretDesc{\Var{D}}[(\Mu{\Var{D}})]}
          \All[\Var{D}\: (\Mu{\Var{D}})\: \Var{P}\: \Var{d}] \To \Var{P} (\In{\Var{d}})) \To
          \PiTo{\Var{x}}{\Mu{\Var{D}}} \Var{P} \Var{x}}\)}

\caption{Universe of inductive types}
\label{fig:universe-types}

\end{figure}


\begin{example}[Describing natural numbers]

\renewcommand{\NatDDef}{
  \Let{\NatD}{\Desc}{
    \Return{\NatD}
           {\Dsigma[\Collection{
                        \begin{array}{l}
                          \Tag{0},\\ 
                          \Tag{\CN{suc}}
                        \end{array}}\:
                    \CollectionElim{
                      \begin{array}{l@{\DoReturn}l}
                        \Tag{0} & \DUnit, \\
                        \Tag{\CN{suc}} & \DVar
                      \end{array}
                    }]}}
}

Natural numbers can be presented as the least fix-point of the
functor \(F X = 1 + X\). This is described by:
\[
\NatDDef 
\qquad
\Let{\Nat}{\Set}{
  \Return{\Nat}
         {\Mu[\NatD]}
}
\]
Note that we are using a small \(\Dsigma\) here: a datatype definition
always starts with a finite choice of constructors. This corresponds
to the \emph{tagged descriptions} of \citet{dagand:levitation}: using
this structure, we can implement some generic constructions -- such as
the Zipper or the free monad -- and generic theorems -- such as in
Section~\ref{sec:const-on-const}.

\end{example}

\newcommand{\TreeD}{\Function{TreeD}}

\begin{example}[Describing binary trees]

Categorically, binary trees are modeled by the least fix-point of the
functor \(T_A X = 1 + A \times X^2\). In our universe, we obtain trees by the following definitions:
\[
\Code{
  \Let{\TreeD & \PiTel{\Var{A}}{\Set}}{\Desc}{
    \Return{\TreeD & \Var{A}}
           {\Dsigma[\Collection{
                        \begin{array}{l}
                          \Tag{\CN{leaf}},\\ 
                          \Tag{\CN{node}}
                        \end{array}}\:
                    \CollectionElim{
                      \begin{array}{l@{\DoReturn}l}
                        \Tag{\CN{leaf}} & \DUnit, \\
                        \Tag{\CN{node}} & \DVar \DTimes \DSigma[\Var{A}\: \Lam{\_} \DVar]
                      \end{array}
                    }]}}
\\
\Let{\Function{Tree} & \PiTel{\Var{A}}{\Set}}{\Set}{
  \Return{\Function{Tree} & \Var{A}}
         {\Mu[(\TreeD\: \Var{A})]}
}}
\]
\end{example}



Finally, we recall the universe of indexed descriptions
(Fig.~\ref{fig:universe-families}), which captures inductive
families. This universe subsumes the previous one by not only allowing
parameters but also indices: thanks to this universe, we can describe
datatypes such as vector. Just as before, we give a universe of codes,
\(\IDesc\), that are then interpreted to build the least
fix-point. Note that the argument of \(\IMu\) is a \emph{function}
from indices to codes: this particularity lets us define datatypes by
computation over the index. We illustrate this point through two
possible implementations of the vector datatype.

\renewcommand{\InterpretIDescDef}{
\Code{
  \Let{\InterpretIDesc{\PiTel{\Var{D}}{\IDesc[\Var{I}]}} & 
              \PiTel{\Var{X}}{\Var{I} \To \Set}}
      {\Set}{}\\
\Case{
\Return{\InterpretIDesc{\DVar[\Var{i}]} & \Var{X}}{\Var{X}\: \Var{i}}
\Return{\InterpretIDesc{\DUnit} & \Var{X}}{\Unit}
\Return{\InterpretIDesc{\Var{A} \DTimes \Var{B}} & \Var{X}}
       {\InterpretIDesc{\Var{A}}[\Var{X}] \Times \InterpretIDesc{\Var{B}}[\Var{X}]}
\Return{\InterpretIDesc{\DPi[\Var{S}\: \Var{T}]} & \Var{X}}{\PiTo{\Var{s}}{\Var{S}}{\InterpretIDesc{\Var{T}\: \Var{s}}[\Var{X}]}}
\Return{\InterpretIDesc{\DSigma[\Var{S}\: \Var{T}]} & \Var{X}}{\SigmaTimes{\Var{s}}{\Var{S}}{\InterpretIDesc{\Var{T}\: \Var{s}}[\Var{X}]}}
\Return{\InterpretIDesc{\Dsigma[\Var{E}\: \Var{T}]} & \Var{X}}
       {\SigmaTimes{\Var{e}}{\EnumT[\Var{E}]}
                   {\InterpretIDesc{\Var{T}\: \Var{e}}[\Var{X}]}}
}}}

\begin{figure}[tbp]

\centering
\subfloat[][Codes]{\small \(
\Code[c]{
  \EpigramTypeIDesc \\
  \EpigramTypeIDescVar \\
  \EpigramTypeIDescUnit \\
  \EpigramTypeIDescTimes \\
  \EpigramTypeIDescPi \\
  \EpigramTypeIDescSigma \\
  \EpigramTypeIDescsigma 
}\)}
\qquad
\subfloat[][Fix-point]{\small \(
\Code[c]{
  \InterpretIDescDef
  \\
  \\
  \EpigramTypeIMu \\
  \EpigramTypeICon
}\)}
\\
\subfloat[][Elimination principle]{
\small
\(
\begin{array}{l@{\:\:}l}
\TypeAnn{\Iinduction} 
        {& \PiTel{\Var{R}}{\Var{I} \To \IDesc[\Var{I}]}
           \PiTo{\Var{P}}{(\SigmaTimes{\Var{i}}{\Var{I}}
                                      {\IMu[\Var{R}\: \Var{i}]}) \To \Set} \\
         & (\PiTel{\Var{i}}{\Var{I}} 
           \PiTo{\Var{xs}}{\InterpretIDesc{\Var{R}\: \Var{i}}[(\IMu[\Var{R}])]}
           \IAll[(\Var{R}\: \Var{i})\:
                 (\IMu[\Var{R}])\:
                 \Var{xs}\:
                 \Var{P}] \To
           \Var{P}\: \Pair{\Var{i}}{\In[\Var{xs}]}) \To  \\
         & \PiTel{\Var{i}}{\Var{I}}
           \PiTo{\Var{x}}{\IMu{\Var{I}}{\Var{R}}{\Var{i}}}
           \Var{P}\: \Pair{\Var{i}}{\Var{x}}}
\end{array}
\)
}

\caption{Universe of inductive families}
\label{fig:universe-families}

\end{figure}


\if 0

\begin{example}[Describing natural numbers]

\renewcommand{\NatDDef}{
  \Let{\NatD}{\IDesc[\Unit]}{
    \Return{\NatD}
           {\Dsigma[\Collection{
                        \begin{array}{l}
                          \Tag{0},\\ 
                          \Tag{\CN{suc}}
                        \end{array}}\:
                    \CollectionElim{
                      \begin{array}{l@{\DoReturn}l}
                        \Tag{0} & \DUnit, \\
                        \Tag{\CN{suc}} & \DVar[\Void]
                      \end{array}
                    }]}}
}

Natural numbers can be presented as the least fix-point of the
functor \(F X = 1 + X\). This is described by:
\[
\NatDDef 
\qquad
\Let{\Nat}{\Set}{
  \Return{\Nat}
         {\IMu[\NatD\: \Void]}
}
\]
Note that we are using a small \(\Dsigma\) here: a datatype definition
always starts with a finite choice of constructors. This corresponds
to the \emph{tagged descriptions} of \citet{dagand:levitation}: using
this structure, we can implement some generic constructions -- such as
the Zipper or the free monad -- and generic theorems -- such as in
Section~\ref{sec:const-on-const}.

\end{example}

\newcommand{\TreeD}{\Function{TreeD}}

\begin{example}[Describing binary trees]

Categorically, binary trees are modeled by the least fix-point of the
functor \(T_A X = 1 + A \times X^2\). In our universe, we obtain trees by the following definitions:
\[
\Code{
  \Let{\TreeD & \PiTel{\Var{A}}{\Set}}{\IDesc[\Unit]}{
    \Return{\TreeD & \Var{A}}
           {\Dsigma[\Collection{
                        \begin{array}{l}
                          \Tag{\CN{leaf}},\\ 
                          \Tag{\CN{node}}
                        \end{array}}\:
                    \CollectionElim{
                      \begin{array}{l@{\DoReturn}l}
                        \Tag{\CN{leaf}} & \DUnit, \\
                        \Tag{\CN{node}} & \DVar[\Void] \DTimes \DSigma[\Var{A}\: \Lam{\_} \DVar[\Void]]
                      \end{array}
                    }]}}
\\
\Let{\Function{Tree} & \PiTel{\Var{A}}{\Set}}{\Set}{
  \Return{\Function{Tree} & \Var{A}}
         {\IMu[(\TreeD\: \Var{A})\: \Void]}
}}
\]
\end{example}

\fi

\begin{example}[Vector, \`{a} la Coq]
  
  The standard presentation of inductive families is by using equality
  constraints to force the indexing strategy. In effect, this
  corresponds to the following definition of vectors:
  \[\Code{
  \Let{\VectorD{} & 
           \PiTel{\Var{A}}{\Set} & 
           \PiTel{\Var{n}}{\Nat}}
      {\IDesc[\Nat]}{
        \Return{\VectorD{} & \Var{A} & \Var{n}}
               {\begin{array}[t]{l@{\:}l}
                   \Dsigma & 
                        \Collection{
                            \Code{\Tag{\CN{nil}},\\
                                  \Tag{\CN{cons}}}}\\
                           &
                        \CollectionElim{
                            \Code{\Tag{\CN{nil}} \mapsto \DSigma[(\Var{n} \PropEqual \Zero)\: \Lam{\_}{\DUnit}], \\
                                  \Tag{\CN{cons}} \mapsto \DSigma[\Nat\: 
                                                       \Lam{\Var{m}}{\DSigma[(\Var{n} \PropEqual (\Suc[\Var{m}]))\: 
                                                         \Lam{\_}{\DSigma[\Var{A}\: 
                                                           \Lam{\_}{\DVar[\Var{m}]}]}]}]}}
                 \end{array}}}
  \\
  \Let{\Vector{} & 
         \PiTel{\Var{A}}{\Set} &
         \PiTel{\Var{n}}{\Nat}}
      {\Set}{
        \Return{\Vector{} & \Var{A} & \Var{n}}
               {\IMu[(\VectorD{}\: \Var{A})\: \Var{n}]}
      }
}
  \]

\end{example}

\begin{example}[Vector, alternatively]

  Interestingly, in the previous definition, we are defining
  \(\Vector{D}\) as a function from \(\PiTel{\Var{n}}{\Nat}\) to
  \(\IDesc[\Nat]\) but we are not making use of our ability to inspect
  \(\Var{n}\): we merely constrain it to be what we wish it was. We
  can make full use of that function space by \emph{first} checking
  whether \(\Var{n}\) is \(\Zero\) or \(\Suc[\Var{m}]\) and \emph{only
    then} give a code for the corresponding constructor, respectively
  \(\VNil\) and \(\VCons\):
  \[
  \Let{\VectorD{} & 
           \PiTel{\Var{A}}{\Set} & 
           \PiTel{\Var{n}}{\Nat}}
      {\IDesc[\Nat]}{
        \Return{\VectorD{} & \Var{A} & \Zero}{\DUnit}
        \Return{\VectorD{} & \Var{A} & (\Suc[m])}{\DSigma[\Var{A}\: 
                                                           \Lam{\_}{\DVar[\Var{m}]}]}}
  \qquad
  \Let{\Vector{} & 
         \PiTel{\Var{A}}{\Set} &
         \PiTel{\Var{n}}{\Nat}}
      {\Set}{
        \Return{\Vector{} & \Var{A} & \Var{n}}
               {\IMu[(\VectorD{}\: \Var{A})\: \Var{n}]}
      }
  \]

\end{example}


\section{First Steps in Elaboration}
\label{sec:elab-course}

In this Section, we shall make our first steps in elaboration. First,
we present a bidirectional type system for the calculus introduced in
the previous section. Using the flow of information from type
synthesis to type checking, we are then able to declutter our term
language. Secondly, we adapt the notion of programming
problem~\citep{mcbride.mckinna:view-from-the-left} to our system and
we shall see how, with little effort, we can move away from an austere
calculus and closer to a proper programming language.


\subsection{Bidirectional type checking}


The idea of bidirectional type
checking~\citep{pierce:bidirectional-tc} is to capture, in the
specification of the type checker, the local flow of typing
information. On the one hand, we will \emph{synthesize} types from
variables and functions while, on the other hand, we will \emph{check}
terms against these synthesized types. By checking terms against their
types, we can use types to structure the term language: for example,
we remove the need for writing the domain type in an abstraction, or
we can use a tuple notation for telescopes of
\(\Sigma\)-types. Following \citep{harper:elaboration}, we distinguish
two categories of terms. The core type theory defines the
\emph{internal} language. The bidirectional approach lets us then
extend this core language into a more convenient \emph{external}
language.
We shall hasten to add that using a bidirectional approach in a
dependently-typed framework is not a novel idea: for instance, it is
the basis of Matita's refinement system~\citep{coen:refinement}. Also,
we gave a similar presentation in some earlier
work~\citep{dagand:levitation}.

\begin{figure}[tb]

\centering
\begin{tabular}{cc}
\subfloat[][Type synthesis]{
\(\Code[c]{
\boxed{\ElabInfer{\Gamma}{t}{t'}{T}}
\\
\\
\EpigramElabInfer
}\) 
\label{fig:type-synthesis}} &
\subfloat[][Type checking]{
\(\Code[c]{
\boxed{\ElabCheck{\Gamma}{t}{T}{t'}}
\\
\\
\EpigramElabCheck
}\) \label{fig:type-checking}}
\end{tabular}

\caption{Bidirectional type checker}
\label{fig:bidir-tc}

\end{figure}



Let us present type synthesis~(Fig.~\ref{fig:type-synthesis})
first. The judgment \(\ElabInfer{\Gamma}{t}{t'}{T}\) states that the
external term \(t\) elaborates to the internal term \(t'\) of
type \(T\) in the context \(\Gamma\). By convention, we shall keep
inputs to the relation to the left of the \(\ElabRel{}\) symbol, while
outputs will be on the right. We expect the following soundness
property to hold:
\begin{theorem}[Soundness of type synthesis]
If \(\ElabInfer{\Gamma}{t}{t'}{T}\),
then \(\TypeJudgment{\Gamma}{t'}{T}\).
\end{theorem}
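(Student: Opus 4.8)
The plan is to prove this by a simultaneous rule induction on the derivations of the two elaboration judgments, strengthening the statement to the conjunction of soundness of type synthesis (as stated) and soundness of type checking: if \(\ElabCheck{\Gamma}{t}{T}{t'}\) then \(\TypeJudgment{\Gamma}{t'}{T}\), under the invariant that \(\ContextValid{\Gamma}\) and \(\TypeJudgment{\Gamma}{T}{\Set[k]}\) for some \(k\). The two judgments are defined mutually — the synthesis rules appeal to checking for the arguments of applications and for the components of pairs, and the checking rules appeal to synthesis at the change-of-direction rule — so a single induction covering both is needed. Along the way we carry the obvious well-formedness side conditions: every synthesis rule is applied in a valid context, and every checking rule receives a type already well-formed in that context; these are re-established rule by rule from the premises together with the standard weakening and substitution lemmas for the core theory of Figure~\ref{fig:type-theory}.

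The inductive step is then one case per elaboration rule, and in each case the goal is discharged by applying the corresponding typing rule of the core calculus to the internal terms produced by the induction hypotheses. For synthesis: the variable rule elaborates \(x\) to \(x\) and reads its type from \(\Gamma\), so the variable typing rule applies directly; for an application \(f\,s\), the induction hypothesis on \(f\) gives \(\TypeJudgment{\Gamma}{f'}{(\Pi x{:}S.\,T)}\), whence \(S\) is well-formed, so the induction hypothesis on \(s\) (checked against \(S\)) gives \(\TypeJudgment{\Gamma}{s'}{S}\), and \(\Pi\)-elimination yields \(\TypeJudgment{\Gamma}{f'\,s'}{T[s'/x]}\), which is exactly the synthesized type; the projections are analogous, using the \(\Sigma\)-elimination rules and, for the second projection, the substitution in the second component type; the type-ascription (``\emph{the}'') form elaborates by checking its body against the ascribed type, so the induction hypothesis closes it immediately. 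For checking: the abstraction rule extends the context and appeals to the induction hypothesis in the extended context, then applies \(\Pi\)-introduction; the pairing rule appeals to the induction hypotheses for the two components — the second checked against the instantiated component type — and applies \(\Sigma\)-introduction; and the change-of-direction rule, which checks an inferable term \(t\) against \(T\) by synthesizing \(\TypeJudgment{\Gamma}{t'}{S}\) and demanding \(\EqualJudgment{\Gamma}{S}{T}{\Set[k]}\), is closed by the conversion rule of the type theory. The richer external constructs introduced later — tuple notation for \(\Sigma\)-telescopes, the enumeration and constructor presentations — are handled identically: each elaborates to an application of a combinator from Figures~\ref{fig:enum-universe}--\ref{fig:universe-families} whose type is known, so the induction hypotheses on the sub-elaborations together with \(\Pi\)-elimination suffice.

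The main obstacle is not any single case but the bookkeeping that makes the induction go through: one must check that the well-formedness invariants — valid context for synthesis, well-formed expected type for checking — are genuinely re-established at every recursive call, in particular that the domain \(S\) extracted from a synthesized \(\Pi\)-type and the instantiated types arising in the \(\Sigma\) cases are well-formed, which leans on inversion for the core typing rules and on the substitution lemma. The second delicate point is the change-of-direction rule: its soundness is exactly where judgmental equality enters, and it is the reason the theorem must be proved hand in hand with soundness of type checking rather than for synthesis in isolation; here the equality-agnostic, judgmental presentation of Figure~\ref{fig:type-theory} is all we rely on, so the argument stays robust to the regional variations — Coq, Agda, observational type theory — mentioned above.
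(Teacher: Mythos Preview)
Your proof is correct and follows the standard approach: mutual rule induction over the synthesis and checking derivations, carrying the obvious well-formedness invariants, with each case discharged by the corresponding core typing rule and the conversion rule at the change-of-direction step.

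Note, however, that the paper does not actually prove this theorem. It merely states it as an expected property (``We expect the following soundness property to hold'') and moves on, treating the bidirectional type checker as background infrastructure for the main contribution on elaborating inductive definitions. So there is no paper proof to compare against; your argument is exactly the routine one the authors are implicitly relying on.
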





While type synthesis initiates the flow of typing information, type
checking (Fig.~\ref{fig:type-checking}) lets us make use of this
information to enrich the language of terms. The interpretation of the
judgment \(\ElabCheck{\Gamma}{t}{T}{t'}\) is that the external term
\(t\) is checked against the type \(T\) in context \(\Gamma\) and
elaborates to an internal term \(t'\). Again, we expect the following
soundness property to hold:
\begin{theorem}[Soundness of type checking]
If \(\ElabCheck{\Gamma}{t}{T}{t'}\), then \(\TypeJudgment{\Gamma}{t'}{T}\).
\end{theorem}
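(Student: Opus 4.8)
The plan is to prove both soundness statements simultaneously by a mutual induction on the derivations of $\ElabInfer{\Gamma}{t}{t'}{T}$ and $\ElabCheck{\Gamma}{t}{T}{t'}$. These two judgments are mutually recursive — type synthesis will invoke type checking on subterms (for instance, checking an argument against a synthesized domain type) and type checking will invoke synthesis (for the change-of-direction rule that embeds an inferable term into a checkable position) — so neither theorem can be established in isolation. I would therefore state a single combined induction hypothesis: for every derivation, if its conclusion is a synthesis judgment $\ElabInfer{\Gamma}{t}{t'}{T}$ then $\TypeJudgment{\Gamma}{t'}{T}$, and if its conclusion is a checking judgment $\ElabCheck{\Gamma}{t}{T}{t'}$ then $\TypeJudgment{\Gamma}{t'}{T}$.

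The proof then proceeds by case analysis on the last rule of the elaboration derivation, and in each case the obligation is to assemble a typing derivation for the elaborated internal term $t'$ from the typing derivations supplied by the induction hypothesis on the premises. For the synthesis rules this is routine: the variable rule reads the type off the context (using context validity); the rule for annotated terms checks the body and re-exposes the annotation; the application rule uses the induction hypothesis to get that the function elaborates to a term of $\Pi$-type, uses it again (in checking mode) on the argument, and then applies the $\Pi$-elimination typing rule, discharging the substitution in the result type against the judgmental-equality rules of Figure~\ref{fig:type-theory}. For the checking rules, the key cases are the structural ones that exploit the flow of type information: checking a bare $\lambda$-abstraction against a $\Pi$-type reconstructs the domain annotation and invokes the induction hypothesis on the body in the extended context; checking a tuple against a $\Sigma$-type (or a telescope thereof) builds the corresponding annotated pair; and the change-of-direction rule invokes the synthesis half of the induction hypothesis and then, crucially, uses the fact that the synthesized type and the expected type are judgmentally equal, together with the conversion rule of the type system, to retype $t'$ at the checked type.

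The main obstacle is precisely this last point: the change-of-direction rule is the place where the two judgments are glued together, and it is where a side condition comparing the inferred type to the expected type must be available in the type-checking rules (Fig.~\ref{fig:type-checking}). The argument needs that whatever equality check the rule performs is sound with respect to the judgmental equality $\EqualJudgment{\Gamma}{\cdot}{\cdot}{\Set}$ of the core theory, so that conversion applies; the whole theorem is parametric in that choice, matching the paper's stated intent to remain equality-agnostic. A secondary, more bookkeeping-heavy obstacle is the treatment of the enumeration and datatype-universe constructs ($\spi$, $\EnumElim$, $\Mu$, $\induction$, and their indexed analogues): each elaboration rule for these must be matched against the corresponding typing rule, and one must check that the generic eliminators are applied at well-formed motives and methods — tedious, but no new ideas beyond carefully threading the induction hypothesis and the judgmental equalities through the substitutions that appear in the motive's return type.
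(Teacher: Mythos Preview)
The paper does not actually prove this theorem: it is stated as a soundness property that the authors ``expect [...] to hold,'' serving as a specification constraint on the bidirectional checker rather than a result they discharge. Your proposal therefore already goes beyond what the paper offers.

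That said, the approach you outline --- a simultaneous induction on the mutually-defined synthesis and checking derivations, with the change-of-direction rule handled by invoking the synthesis half of the hypothesis and then appealing to the conversion rule of the core calculus --- is the standard argument for soundness of bidirectional elaboration and is correct in outline. Your identification of the change-of-direction case as the crux, and of the need for the equality test performed there to be sound for judgmental equality, is exactly right. The only mild caveat is that the concrete rules in Figure~\ref{fig:bidir-tc} (and the later extensions for tuples, finite-set notation, and datatype constructors) are what determine the actual case split; but your anticipation of those cases matches the shape of the system the paper describes, and none of them introduces difficulties beyond the bookkeeping you mention.
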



\subsection{Putting types at work}

While the type checker we have specified so far only allows untyped
abstractions, we extend it with some convenient features. 


\paragraph{Tuples:}
By design of the interpretation of our universes of enumeration and
inductive types, we are often going to build inhabitants of
\(\Sigma\)-telescope of the form
\(\SigmaTimes{\Var{a}}{A}\SigmaTimes{\Var{b}}{B}\ldots\Times
\SigmaTimes{\Var{z}}{Z} \Unit\). To reduce the syntactic burden of
these nested pairs, we elaborate a LISP-inspired tuple notation:
\[
\Axiom{\ElabCheck{\Gamma}{()}{\Unit}
                 {\Void}} \qquad
\Rule{\ElabCheck{\Gamma}{\Meta{x}}{\Meta{A}}
                {\Meta{x'}} \quad
      \ElabCheck{\Gamma}{(\Meta{xs})}{\Meta{B}[\Meta{x'}/\Meta{a}]}
                {\Meta{xs'}}}
     {\ElabCheck{\Gamma}{(\Meta{x}\: \Meta{xs})}{\SigmaTimes{\Var{a}}{\Meta{A}}{\Meta{B}}}
                {\PairAnn{\Meta{x'}}{\Meta{xs'}}{a. B}}}
\]


\paragraph{Finite sets, introduction and elimination:}
In Section~\ref{sec:type-theory}, we have used an informal set-like
notation for enumerations: we can make that notation formal through
elaboration. To do so, we extend the type checker with the
following rules:
\[\Code{
\Axiom{\ElabCheck{\Gamma}{\Collection{}}{\EnumU}{\NilEnum}} \qquad
\Rule{\ElabCheck{\Gamma}{\Collection{\Meta{ts}}}{\EnumU}{\Meta{E}}}
     {\ElabCheck{\Gamma}{\Collection{\Tag{a}, \Meta{ts}}}{\EnumU}{\ConsEnum[\Tag{a}\: \Meta{E}]}}
\\
\\
\Rule{\ElabCheck{\Gamma}
                {\Collection{\Tag{l_0}, \ldots, \Tag{l_k}}}
                {\EnumU}
                {\Meta{E}} 
       \qquad
       \ElabCheck{\Gamma}
                {(\Meta{e_0} \ldots \Meta{e_k})}
                {\spi[\Meta{E}\: \Meta{P}]}
                {\Meta{es}}}
     {\ElabCheck{\Gamma}
                {\CollectionElim{\Tag{l_0} \mapsto \Meta{e_0}, \ldots \Tag{l_k} \mapsto \Meta{e_k}}}
                {\PiTo{\Var{e}}{\EnumT[\Meta{E}]} \Meta{P}\:\Var{e}}
                {\EnumElim[\Meta{E}\: \Meta{P}\: \Meta{es}]}}
}\]

\paragraph{Indexing finite sets:}
Also, rather than indexing into enumerations through the \(\EnumT\)
codes, we would like to be able to write the label and have it
elaborate to the corresponding index. This is achieved by the
following extension:
\[
\Axiom{\ElabCheck{\Gamma}{\Tag{t}}{\ConsEnum[\Tag{t}\: E]}{\ZeroEnumT}}
\qquad
\Rule{\ElabCheck{\Gamma}{\Tag{t}}{E}{n}}
     {\ElabCheck{\Gamma}{\Tag{t}}{\ConsEnum[\Tag{u}\: E]}{\SucEnumT[n]}}
\]


\paragraph{Datatype constructors:}
Finally, while we do not yet have a proper syntax for \emph{declaring}
inductive types, we can already extend our term language with
constructors. Upon elaborating an external term \(\Meta{c}\: a_0
\ldots a_k\) against the fix-point of a tagged description, we replace
this elaboration problem with the one consisting of elaborating the
tuple consisting of the constructor label and the arguments:
\[
\Rule{\ElabCheck{\Gamma}{\In[(\Tag{c}\: a_0 \ldots a_k)]}{\Mu[(\Dsigma[\Meta{E}\: \Meta{T}])]}{t}}
     {\ElabCheck{\Gamma}{\Meta{c}\: a_0 \ldots a_k}{\Mu[(\Dsigma[\Meta{E}\: \Meta{T}])]}{t}}
\qquad
\Rule{\ElabCheck{\Gamma}{\In[(\Tag{c}\: a_0 \ldots a_k)]}{\IMu[(\Dsigma[\Meta{E}\: \Meta{T}])\: \Meta{i}]}{t}}
     {\ElabCheck{\Gamma}{\Meta{c}\: a_0 \ldots a_k}{\IMu[(\Dsigma[\Meta{E}\: \Meta{T}])\: \Meta{i}]}{t}}
\]


\if 0


\newcommand{\Label}{\Canonical{label}}

\newcommand{\InterpretParam}[1]{\green{\llbracket}  #1 \green{\rrbracket_{P}}}
\newcommandx{\InterpretIndex}[2][2=\!]{\green{\llbracket} #1 \green{\rrbracket_{I}} \xspace\:#2}

\[\Code[c]{
\Code[c]{
\Axiom{\TypeJudgment{\Gamma}{\mathbf{f}}{\Label}} \\
\Rule{\TypeJudgment{\Gamma}{\Meta{l}}{\Label} \qquad
      \TypeJudgment{\Gamma}{\Meta{P}}{\InterpretParam{\Meta{l}} \To \Set[1]}}
     {\TypeJudgment{\Gamma}{\Meta{l}\: \ParamBracket{\Meta{P}}}{\Label}} \\
\Rule{\TypeJudgment{\Gamma}{\Meta{l}}{\Label} \qquad
      \TypeJudgment{\Gamma}{\Meta{I}}{\PiTo{\Var{p}}{\InterpretParam{\Meta{l}}} \InterpretIndex{\Meta{l}}[\Var{p}] \To \Set}}
     {\TypeJudgment{\Gamma}{\Meta{l}\: \IndexBracket{\Meta{I}}}{\Label}}
}
\\
\\
\begin{array}{lr}
\Let{\InterpretParam{\PiTel{\Var{l}}{\Label}}}
    {\Set[1]}{
\Return{\InterpretParam{\mathbf{f}}}
       {\Unit}
\Return{\InterpretParam{\Var{l}\: \ParamBracket{\Var{P}}}}
       {\SigmaTimes{\Var{p}}{\InterpretParam{\Var{l}}}{\Var{P}\: \Var{p}}}
\Return{\InterpretParam{\Var{l}\: \IndexBracket{\Var{I}}}}
       {\InterpretParam{\Var{l}}}
}
&
\Let{\InterpretIndex{\PiTel{\Var{l}}{\Label}} & 
       \PiTel{\Var{p}}{\InterpretParam{\Var{l}}}}
    {\Set}{
\Return{\InterpretIndex{\mathbf{f}} & \Void}
       {\Unit}
\Return{\InterpretIndex{\Var{l}\: \ParamBracket{\Var{P}}} &
        \Pair{\Var{p}}{\_}}
       {\InterpretIndex{\Var{l}}[\Var{ps}]}
\Return{\InterpretIndex{\Var{l}\: \IndexBracket{\Var{I}}} &
        \Var{p}}
       {\SigmaTimes{\Var{i}}{\InterpretIndex{\Var{l}}[\Var{p}]}
                   {\Var{I}\: \Var{p}\: \Var{i}}}
}
\end{array}
}\]

\newcommand{\Presentation}[3]{\blue{\langle} #1 \blue{\cdot} #2 \mathop{\blue{:}} #3 \blue{\rangle}}
\Spacedcommand{\PresRet}{\Constructor{return}}
\newcommand{\PresCall}[4]{\Function{call}\: 
                              \Presentation{#1}{#2}{#3}\: #4}

\[
\Rule{\Code[c]{
       \TypeJudgment{\Gamma}{\Meta{l}}{\Label} \\
       \TypeJudgment{\Gamma}{\Meta{p}}{\InterpretParam{\Meta{l}}} \\
       \TypeJudgment{\Gamma}{\Meta{T}}{\InterpretIndex{\Meta{l}}[\Meta{p}] \To \Set[k]}}}
     {\TypeJudgment{\Gamma}
                   {\Presentation{\Meta{l}}{\Meta{p}}{\Meta{T}}}
                   {\Set[k]}}
\qquad
\Rule{\TypeJudgment{\Gamma}
                   {\Meta{t}}
                   {\PiTo{\Var{i}}{\InterpretIndex{\Meta{l}}[\Meta{p}]} \Meta{T}\: \Var{i}}}
     {\TypeJudgment{\Gamma}
                   {\PresRet[\Meta{t}]}
                   {\Presentation{\Meta{l}}{\Meta{p}}{\Meta{T}}}}
\qquad
\Rule{\TypeJudgment{\Gamma}
                   {\Meta{t}}
                   {\Presentation{\Meta{l}}{\Meta{p}}{\Meta{T}}}}
     {\TypeJudgment{\Gamma}
                   {\PresCall{\Meta{l}}{\Meta{p}}{\Meta{T}}{\Meta{t}}}
                   {\PiTo{\Var{i}}{\InterpretIndex{\Meta{l}}{\Meta{p}}} \Meta{T}\: \Var{i}}}
\]


\fi


\subsection{Elaborating Programs}

\newcommand{\Label}{\Canonical{label}}
\newcommand{\LabelDef}[2]{\blue{\langle} #1 \mathop{\blue{:}} #2 \blue{\rangle}}
\Spacedcommand{\LabelRet}{\Constructor{return}}
\newcommand{\LabelCall}[3]{\Function{call} \green{\langle} #1 \mathop{\blue{:}} #2 \green{\rangle} #3}


In the previous Section, we have define type-specific languages,
relying on the types to guide the elaboration process. In this
Section, we go a step further and purposely define finely indexed
types for the purpose of elaboration. This idea of using types as a
\emph{presentation} of our high-level intentions rather than a
\emph{representation} of low-level restrictions originated
in~\citet{mcbride.mckinna:view-from-the-left}. To illustrate our
point, we redevelop the elaboration of programs presented in that
paper and adapt it to our framework. We shall see how these
presentation types guide the elaboration of programs and let us regain
a pattern-matching notation without hard-wiring pattern-matching
itself.


\begin{figure}[tbp]

\centering

\subfloat[][Programming label]{
\(\Code[c]{
\Rule{\ContextValid{\Gamma}}
     {\TypeJudgment{\Gamma}{\mathbf{f}}{\Label}}
\qquad
\Rule{\TypeJudgment{\Gamma}{\Meta{l}}{\Label} \quad
      \TypeJudgment{\Gamma}{\Meta{t}}{\Meta{T}}}
     {\TypeJudgment{\Gamma}{\Meta{l}\: \Meta{t}}{\Label}}
\\
\Rule{\TypeJudgment{\Gamma}{l}{\Label} \qquad
      \TypeJudgment{\Gamma}{T}{\Set}}
     {\TypeJudgment{\Gamma}{\LabelDef{l}{T}}{\Set}}
\qquad
\Rule{\TypeJudgment{\Gamma}{t}{T} \qquad
      \TypeJudgment{\Gamma}{l}{\Label}}
     {\TypeJudgment{\Gamma}{\LabelRet[t]}{\LabelDef{l}{T}}}
\qquad
\Rule{\TypeJudgment{\Gamma}{c}{\LabelDef{l}{T}}}
     {\TypeJudgment{\Gamma}{\LabelCall{l}{T}{c}}{T}}
}\)
}
\\
\subfloat[][Program declaration]{
\(\Code[c]{
  \boxed{\ElabLet{\Gamma}
                 {f}
                 {\overrightarrow{\PiTel{\Var{x}}{X}}}
                 {T}
                 {t}
                 {\Delta}}
\\
\Rule{\Code{
      \ElabCheck{\Gamma}{\overrightarrow{\PiTel{\Var{x}}{X}} \To T}{\Set}
                {\overrightarrow{\PiTel{\Var{x}}{X'}} \To T'} \\
      \ElabProg{\Gamma, \overrightarrow{\PiTel{\Var{x}}{X'}}}
               {t}
               {\LabelDef{\mathbf{f}\: \vec{x}}{T'}}
               {t'}
     }}
     {\ElabLet{\Gamma}
              {f}
              {\overrightarrow{\PiTel{\Var{x}}{X}}}
              {T}
              {t}
              {\Gamma [ f \mapsto \TypeAnn{\LamAnn{\vec{x}}{\vec{X'}}
                                                  {\LabelCall{\mathbf{f}\:\vec{x}}{T'}{t'}}}
                                          {\overrightarrow{\PiTel{\Var{x}}{X'}} \To T'}]}}
}\)
\label{fig:program-decl}
}
\\
\subfloat[][Program definition]{
\(\Code[c]{
  \boxed{\ElabProg{\Gamma}
                  {t}
                  {\LabelDef{l}{T}}
                  {t'}}
\\
\Rule{\ElabRec{l}{pt} \qquad
      \ElabCheck{\Gamma}{e}{T}{e'}}
     {\ElabProg{\Gamma}
               {pt \DoReturn e}
               {\LabelDef{l}{T}}
               {\LabelRet[e']}}
\qquad
\Rule{\ElabRec{l}{pt} 
      \qquad 
      \Code{
      \ElabEWM{\Gamma}
              {e}
              {e'}
              {\overrightarrow{(\overrightarrow{\PiTel{\Var{x_i}}{X_i}} \To \LabelDef{l_i}{S_i})} \To \LabelDef{l}{T}} \\
      \ElabProg{\Gamma, \overrightarrow{\PiTel{x_i}{X_i}}}
               {p_i}
               {\LabelDef{l_i}{S_i}}
               {s_i}
     }}
     {\ElabProg{\Gamma}
               {pt \DoBy e \{ \vec{p_i} \}}
               {\LabelDef{l}{T}}
               {e'\: (\LamAnn{\vec{x_i}}{\vec{X_i}}{s_i})}}
}\)
\label{fig:program-def}}

\caption{Programming labels}
\label{fig:prog-label}

\end{figure}



To guide elaboration of programs, we define \emph{programming label}
types~(Fig.~\ref{fig:prog-label}). In essence, a programming label
\(\LabelDef{l}{T}\) is a phantom type around the type \(T\). However,
the label \(l\) is crucial in that it represents the arguments of the
function we are implementing. A motivating example of label types is
the definition of addition by explicitly using the elimination
principle of natural numbers:
\[
\Let{\PiTel{\Var{m}}{\Nat} & \NatPlus & \PiTel{\Var{n}}{\Nat}}
    {\LabelDef{\Tag{+}\: \Var{m}\: \Var{n}}{\Nat}}{
\Return{\Var{m} & \NatPlus & \Var{n}}
       {\NatInd
           \begin{array}[t]{l}
             (\Lam{\Var{m'}}{\LabelDef{\Tag{+}\: \Var{m'}\: \Var{n}}{\Nat}})\: \Var{m}\\
             \HoleAnn{\LabelDef{\Tag{+}\: \Zero\: \Var{n}}{\Nat}}\\
             \HoleAnn{\PiTo{\Var{m}}{\Nat} 
                      \LabelDef{\Tag{+}\: \Var{m}\: \Var{n}}{\Nat} \To
                      \LabelDef{\Tag{+}\: (\Suc[\Var{m}])\: \Var{n}}{\Nat}}
           \end{array}}}
\]
Thanks to the label types, we can relate every case of the elimination
principle with the state of our programming problem.



Using these label types, we can extend our external language to ease
programming. First, we introduce a programming problem by the
\(\Define\) command~(Fig.~\ref{fig:program-decl}). The elaboration
judgment is subject to the following soundness property:
\begin{theorem}[Soundness of program elaboration]
If \(\ElabLet{\Gamma}
             {f}
             {\overrightarrow{\PiTel{\Var{x}}{X}}}
             {T}
             {t}
             {\Delta}\), 
then \(\ContextValid{\Delta}\).
\end{theorem}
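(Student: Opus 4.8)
The plan is to prove soundness of program elaboration by induction on the derivation of $\ElabLet{\Gamma}{f}{\overrightarrow{\PiTel{\Var{x}}{X}}}{T}{t}{\Delta}$, following the single rule in Figure~\ref{fig:program-decl}. Since that rule has a unique shape, the proof amounts to one case: assuming $\ContextValid{\Gamma}$ (which we may take as an implicit invariant on the inputs, just as in the earlier soundness theorems), we must show $\ContextValid{\Gamma [ f \mapsto \TypeAnn{\LamAnn{\vec{x}}{\vec{X'}}{\LabelCall{\mathbf{f}\:\vec{x}}{T'}{t'}}}{\overrightarrow{\PiTel{\Var{x}}{X'}} \To T'}]}$. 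By the context-validity rule for definitional extensions (Fig.~\ref{fig:type-theory}(a)), this reduces to exhibiting the judgment $\TypeJudgment{\Gamma}{\LamAnn{\vec{x}}{\vec{X'}}{\LabelCall{\mathbf{f}\:\vec{x}}{T'}{t'}}}{\overrightarrow{\PiTel{\Var{x}}{X'}} \To T'}$ together with the side condition $f \notin \Gamma$, the latter being immediate from the naming discipline.

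First I would use the first premise, $\ElabCheck{\Gamma}{\overrightarrow{\PiTel{\Var{x}}{X}} \To T}{\Set}{\overrightarrow{\PiTel{\Var{x}}{X'}} \To T'}$, and invoke Soundness of type checking to obtain $\TypeJudgment{\Gamma}{\overrightarrow{\PiTel{\Var{x}}{X'}} \To T'}{\Set}$. Inverting the formation rules for $\Pi$-types then gives, in the extended context $\Gamma, \overrightarrow{\PiTel{\Var{x}}{X'}}$, that $\TypeJudgment{\Gamma, \overrightarrow{\PiTel{\Var{x}}{X'}}}{T'}{\Set}$, and in particular that $\ContextValid{\Gamma, \overrightarrow{\PiTel{\Var{x}}{X'}}}$. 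Next, from the second premise $\ElabProg{\Gamma, \overrightarrow{\PiTel{\Var{x}}{X'}}}{t}{\LabelDef{\mathbf{f}\: \vec{x}}{T'}}{t'}$ I would appeal to soundness of program \emph{definition}, i.e.\ the companion statement that $\ElabProg{\Gamma}{t}{\LabelDef{l}{T}}{t'}$ implies $\TypeJudgment{\Gamma}{t'}{\LabelDef{l}{T}}$ (this should be proved by a mutual induction alongside the present theorem, using the typing rules for programming labels in Fig.~\ref{fig:prog-label}(a)). This yields $\TypeJudgment{\Gamma, \overrightarrow{\PiTel{\Var{x}}{X'}}}{t'}{\LabelDef{\mathbf{f}\: \vec{x}}{T'}}$. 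Applying the $\LabelCall{}{}{}$ elimination rule then gives $\TypeJudgment{\Gamma, \overrightarrow{\PiTel{\Var{x}}{X'}}}{\LabelCall{\mathbf{f}\:\vec{x}}{T'}{t'}}{T'}$, and iterated $\lambda$-abstraction over the telescope $\overrightarrow{\PiTel{\Var{x}}{X'}}$ produces exactly the required $\TypeJudgment{\Gamma}{\LamAnn{\vec{x}}{\vec{X'}}{\LabelCall{\mathbf{f}\:\vec{x}}{T'}{t'}}}{\overrightarrow{\PiTel{\Var{x}}{X'}} \To T'}$. Feeding this into the context-extension rule closes the case.

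The main obstacle is that this theorem cannot be proved in isolation: the rule for $\Define$ invokes $\ElabProg{}{}{}{}$, which in turn (Fig.~\ref{fig:program-def}) invokes $\ElabCheck{}{}{}{}$, $\ElabEWM{}{}{}{}$ and recursively $\ElabProg{}{}{}{}$ again, and one of the $\ElabEWM{}$ premises reintroduces label types in the codomain of the eliminating function. So the real work is to set up a single simultaneous induction over all the elaboration judgments, with matching soundness statements: for $\ElabProg{\Gamma}{t}{\LabelDef{l}{T}}{t'}$ that $\TypeJudgment{\Gamma}{t'}{\LabelDef{l}{T}}$, and for $\ElabEWM{\Gamma}{e}{e'}{\cdots \To \LabelDef{l}{T}}$ an analogous well-typedness of $e'$ at its stated (label-decorated) type. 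The delicate points inside that induction are the $\DoBy$ case, where one must check that the motive and methods assembled for the eliminator $e'$ have exactly the types demanded by $\ElabEWM{}$ so that the application $e'\:(\LamAnn{\vec{x_i}}{\vec{X_i}}{s_i})$ typechecks, and the handling of the label $l$ itself, which is inert for typing (a phantom over $T$) but must be threaded consistently so that the $\LabelRet[]$/$\LabelCall{}{}{}$ rules line up. Once the mutual statement is correctly phrased, each individual case is a routine appeal to the corresponding typing rule of Figures~\ref{fig:type-theory}, \ref{fig:bidir-tc} and~\ref{fig:prog-label}.
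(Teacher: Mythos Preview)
The paper does not actually supply a proof of this theorem: it is stated, like the earlier Soundness of type synthesis and Soundness of type checking theorems, as a property the system is expected to satisfy, and the text immediately moves on to describing the \(\DoReturn\)/\(\DoBy\) gadgets. So there is no paper proof to compare your proposal against.

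That said, your proposal is the natural reconstruction one would write and it looks correct. You have identified the right shape: a single case analysis on the \(\Define\) rule, reduction to well-typedness of the definiens via the context-extension rule, soundness of type checking for the signature, a companion lemma giving soundness of \(\ElabProg{}{}{}{}\), and then the \(\LabelCall{}{}{}\) elimination followed by \(\lambda\)-abstraction over the telescope. Your observation that the argument must be packaged as a mutual induction with soundness of \(\ElabProg{}{}{}{}\) (because the \(\DoBy\) rule recurses) is exactly right, and your treatment of \(\ElabEWM{}{}{}{}\) as a black box returning a well-typed eliminator matches the paper's own stance (``we \ldots\ simply assume that the generated term is well-typed with respect to \(T\)''). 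The only cosmetic point is that the \(\ElabRec{l}{pt}\) premise in both \(\ElabProg{}{}{}{}\) rules is purely a syntactic pattern check and contributes nothing to typing, so it can be discharged trivially in the mutual induction; you implicitly do this by not mentioning it.
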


To elaborate the definition of the program, we restrict our attention
to two constructs: returning a value using the \emph{Return}
(\(\DoReturn\)) gadget and appealing to an elimination principle using
the \emph{By} (\(\DoBy\)) gadget. We shall ignore views and other
syntactic conveniences originally introduced
by~\citet{mcbride.mckinna:view-from-the-left}. The elaboration rules
are given in Figure~\ref{fig:program-def}. The idea is that they are
two rules to realize a programming label: either we return a term, or
we decompose the problem further using an eliminator. We have coded
the \emph{programming grammar} within this specification. Remark the
presence of patterns \(pt\) on the left hand side to mimic a
pattern-matching notation. Intuitively, the relation
\(\ElabRec{l}{pt}\) makes sure that the pattern entered by user
corresponds to the label computed by the type-checker. The relation
\(\ElabEWM{\Gamma}{e}{e'}{T}\) abstracts away the task of generating a
term \(e'\) in the internal language from an elimination form \(e\),
as described in \citet{mcbride:elim-2,mcbride:elim}. We will not
recall this construction and simply assume that the generated term is
well-typed with respect to T, \ie \(\TypeJudgment{\Gamma}{e'}{T}\).


\if 0

\begin{example}[Elaborating addition]

Rather than dwelling on the abstract specification, let us work
through an example. We shall elaborate the following definition of
addition down to a raw term in our type theory. We assume that natural
numbers and their eliminators \(\NatRec\) and \(\NatCase\) to be
provided: we shall see later that they are actually available already.

In the external language, we write:
\[
\Let{\Define\: \PiTel{\Var{m}}{\Nat} & \NatPlus & \PiTel{\Var{n}}{\Nat}}
    {\Nat \Is}{
\By{\Var{m} & \NatPlus & \Var{n}}{\NatRec[\Var{m}]}{
  \By{\quad\Var{m} & \NatPlus & \Var{n}}{\NatCase[\Var{m}]}{
    \Return{\quad\quad\Zero & \NatPlus & \Var{n}}{\Var{n}}
    \Return{\quad\quad(\Suc[\Var{m}]) & \NatPlus & \Var{n}}{\Suc[(\Var{m} \NatPlus \Var{n})]}
  }
}
}
\]
%

\end{example}

\fi


\section{Elaborating Datatypes}
\label{sec:elab-data-types}


In this Section, we specify the elaboration of inductive types down to
our \(\Desc\) universe. While this universe only captures strictly
positive types, it is a good exercise to understand the general idea
governing the elaboration of inductive definitions. Besides, because
the syntax is essentially the same, our presentation should be easy to
understand for readers familiar with Coq or Agda.


We adopt the standard sum-of-product high-level notation:
\[
\Data{\Canonical{D}}
     {\overrightarrow{\Param{\Var{p}}{\Meta{P}}}}
     {\Set}{
\Emit{\Canonical{D}}
     {\overrightarrow{\Var{p}}}
     {\Constructor{c}_{\red{0}}\: \overrightarrow{\PiTel{\Var{a_0}}{\Meta{T_0}}}}
\OrEmit{\ldots}
\OrEmit{\Constructor{c}_{\red{k}}\: \overrightarrow{\PiTel{\Var{a_k}}{\Meta{T_k}}}}
}
\]
Where the arguments \(\vec{p}\) are parameters. A \(T_i\) can be
recursive, \ie refers to \(\Canonical{D}\:\vec{p}\). Note that it is
crucial that the parameters are the same in the definition and the
recursive calls.

Our translation to code follows the structure of the definition. The
first level structure consists of the choice of constructors and is
translated to a \(\Dsigma\) code over the finite set of
constructors. The second level structure consists of the
\(\Sigma\)-telescope of arguments: these are translated to
right-nested \(\DSigma\) codes. When parsing arguments, we must make
sure that the recursive arguments are valid and translate them to the
\(\DVar\) code.


\subsection{Description labels}
\label{sec:desc-labels}

\newcommand{\datatel}[1]{\mathrm{datatel}}
\newcommand{\TypeDatatel}[2]{\Judgment{#1}{#2 \:\datatel{}}}

\newcommand{\IntDatatel}[1]{\green{\llbracket} #1 \green{\rrbracket}_P}

\newcommand{\LabelDesc}[1]{\langle #1 \rangle}
\Spacedcommand{\ReturnDesc}{\Constructor{return}}
\newcommand{\CallDesc}[2]{\Function{call}\: \LabelDesc{#1}\: #2}

To guide the elaboration of inductive definitions, we extend the type
theory with \emph{description
  labels}~(Fig.~\ref{fig:desc-label}). Their role is akin to
programming labels: they structure the elaboration task and are used
to ensure that recursive arguments are correctly elaborated.

A description label \(\LabelDesc{l}\) is a list starting with the name
of the datatype being defined and followed by the parameters of that
datatype. It can be thought as a phantom type around \(\Desc\). We can
introduce such type using \(\ReturnDesc\) that takes the (finite) set
of constructors and their respective code: doing so, we ensure that we
are only accepting tagged descriptions. With \(\CallDesc{l}\), we
eliminate \(\ReturnDesc\) by joining constructors and their codes
in a \(\Dsigma\) code, effectively interpreting the choice of
constructors.

\begin{figure}[bt]

\[
\Code[c]{
\Axiom{\TypeDatatel{\Gamma}{\textbf{D}}} 
\qquad
\Rule{\Code{
      \TypeDatatel{\Gamma}{\Meta{l}} \\
      \TypeJudgment{\Gamma}{\Meta{t}}{\Meta{T}}}}
     {\TypeDatatel{\Gamma}{\Meta{l}\: \Meta{t}}}
\\
\\
\Rule{\TypeDatatel{\Gamma}{l}}
     {\TypeJudgment{\Gamma}{\LabelDesc{l}}{\Set[1]}}
\qquad
\Rule{\Code{
    \TypeJudgment{\Gamma}{E}{\EnumU} \\
    \TypeJudgment{\Gamma}{T}{\EnumT[E] \To \Desc}}}
     {\TypeJudgment{\Gamma}{\ReturnDesc[E\: T]}{\LabelDesc{l}}}
\qquad
\Rule{\TypeJudgment{\Gamma}{t}{\LabelDesc{l}}}
     {\TypeJudgment{\Gamma}{\CallDesc{l}{t}}{\Desc}}
}\]

\caption{Description label}
\label{fig:desc-label}

\end{figure}


\subsection{Elaborating inductive types}


\Spacedcommand{\Tree}{\Canonical{Tree}}
\Spacedcommand{\FatTree}{\mathbf{Tree}}
\newcommand{\TreeLeaf}{\Constructor{leaf}}
\Spacedcommand{\TreeNode}{\Constructor{node}}

We shall present our translation in a top-down manner: from a complete
definition, we show how the pieces fit together, giving some intuition
for the subsequent translations. We then move on to disassemble and
interpret each sub-component separately. As we progress, the reader
should check that the intuition we gave for the whole is indeed
valid. Every elaboration step is backed by a soundness property:
proving these properties is inherently bottom-up. Only after having
developed all our definition will we be able to prove the soundness of
elaboration of inductive definitions. However, the proof is
technically unsurprising: we shall briefly sketch it at the end of
this Section.  To further ease the understanding of our machinery, we
shall illustrate every step by elaborating the following definition of
binary trees:
\[
\Data{\Tree}
     {\Param{\Var{A}}{\Set}}
     {\Set}{
\Emit{\Tree}
     {\Var{A}}
     {\TreeLeaf}
\OrEmit{\TreeNode[\PiTel{\Var{l}}{\Tree[\Var{A}]}
                  \PiTel{\Var{a}}{\Var{A}}
                  \PiTel{\Var{r}}{\Tree[\Var{A}]}]}
}
\]


\begin{figure}[htp]

\centering


\subfloat[][Elaboration of definition]{
\(\Code[c]{
  \boxed{\ElabData{\Gamma}
                  {D}
                  {\overrightarrow{\PiTel{\Var{p}}{P}}}
                  {\Set}
                  {\mathrm{choices}}
                  {\Delta}}
\\
\\
\Rule{\Code{
    \ElabCheck{\Gamma}
              {\overrightarrow{\PiTel{\Var{p}}{P}} \To \Set}{\Set[1]}
              {\overrightarrow{\PiTel{\Var{p}}{P'}} \To \Set} \\
    \ElabChoices{\Gamma, \overrightarrow{\TypeAnn{p}{P'}}}
                {\mathrm{choices}}
                {\LabelDesc{\mathbf{D}\: \vec{p}}}
                {\mathrm{code}}}}
     {\ElabData{\Gamma}
               {D}
               {\overrightarrow{\Param{p}{P}}}
               {\Set}
               {\mathrm{choices}}
               {\Gamma [D \mapsto 
                   \TypeAnn{\Lam{\vec{p}}{\Mu[(\CallDesc{\mathbf{D}\: \vec{p}}{\mathrm{code}})]}}
                           {\overrightarrow{\PiTel{\Var{p}}{P'}} \To \Set}]}
               }
}\)
\label{fig:elab-desc-data}
}
\\
\subfloat[][Elaboration of choices]{
\(\Code[c]{
\boxed{\ElabChoices{\Gamma}
                      {\mathrm{choices}}
                      {\LabelDesc{l}}
                      {\mathrm{code}}}
\\
\\
\Rule{\ElabConstr{\Gamma}
                 {c_i}
                 {\LabelDesc{l}}
                 {t_i}
                 {\mathrm{code}_i} 
      \qquad
      \Code[c]{
        \ElabCheck{\Gamma}
                  {\Collection{t_i}}
                  {\EnumU}
                  {\Meta{E}} \\
        \ElabCheck{\Gamma}
                  {\CollectionElim{t_i \mapsto \mathrm{code}_i}}
                  {\EnumT[\Meta{E}] \To \Desc}
                  {\Meta{T}}
      }}
     {\ElabChoices{\Gamma}
                  {c_0 | \ldots | c_n}
                  {\LabelDesc{l}}
                  {\ReturnDesc[\Meta{E}\: \Meta{T}]}}
}\)
\label{fig:elab-desc-choices}
}
\\
\subfloat[][Elaboration of constructor]{
\(\Code[c]{
\boxed{\ElabConstr{\Gamma}
                     {c}
                     {\LabelDesc{l}}
                     {t}
                     {\mathrm{code}}}
\\
\\
\Rule{\Code{
      \ElabCheck{\Gamma}
                {\Tag{t}}
                {\UId}
                {t'} \\
      \ElabArg{\Gamma}
              {\mathrm{args}}
              {\LabelDesc{l}}
              {\mathrm{code}}
     }}
     {\ElabConstr{\Gamma}
                 {t\: \mathrm{args}}
                 {\LabelDesc{l}}
                 {t'}
                 {\mathrm{code}}}
}\)
\label{fig:elab-desc-constr}
}
\\
\subfloat[][Elaboration of arguments]{
\(\Code[c]{
\boxed{\ElabArg{\Gamma}
                 {\mathrm{args}}
                 {\LabelDesc{l}}
                 {\mathrm{code}}}
\\
\\
\Rule{
    \ElabCheck{\Gamma}{T}{\Set}{T'} \qquad
    \ElabArg{\Gamma, \TypeAnn{\Var{x}}{T'}}
            {\Delta}
            {\LabelDesc{l}}
            {\mathrm{code}_\Delta}}
     {\ElabArg{\Gamma}
              {\PiTel{\Var{x}}{T} \Delta}
              {\LabelDesc{l}}
              {\DSigma[T'\: \Lam{\Var{x}}{\mathrm{code}_\Delta}]}}
\qquad
\Rule{
    \ElabRec{T}{l} \qquad
    \ElabArg{\Gamma}
            {\Delta}
            {\LabelDesc{l}}
            {\mathrm{code}_\Delta}}
     {\ElabArg{\Gamma}
              {\PiTel{\Var{x}}{T} \Delta}
              {\LabelDesc{l}}
              {\DVar \DTimes \mathrm{code}_\Delta}}
\\
\Rule{
      \ElabCheck{\Gamma}{T}{\Set}{T'}
      \qquad
      \Code{
      \ElabArg{\Gamma, \TypeAnn{t}{T'}}
              {\nabla}
              {\LabelDesc{l}}
              {\mathrm{code}_\nabla} \\
      \ElabArg{\Gamma}
              {\Delta}
              {\LabelDesc{l}}
              {\mathrm{code}_\Delta}
     }}
     {\ElabArg{\Gamma}
              {\PiTel{\Var{f}}{\PiTo{\Var{t}}{T} \nabla} \Delta}
              {\LabelDesc{l}}
              {(\DPi[T'\: \Lam{\Var{t}}{\mathrm{code}_\nabla}]) \DTimes \mathrm{code}_\Delta}}
\qquad
\Axiom{\ElabArg{\Gamma}
               {\epsilon}
               {\LabelDesc{l}}
               {\DUnit}}
}\)
\label{fig:elab-desc-arg}
}
\\
\subfloat[][Recursion matching]{
\(\Code[c]{
\boxed{\ElabRec{T}
               {l}}
\\
\\
\Axiom{\ElabRec{D}
               {\textbf{D}}}
\qquad
\Rule{\ElabRec{T}{l}}
     {\ElabRec{\Meta{T}\: \Meta{p}}
              {\Meta{l}\: \Meta{p}}}
}\)
\label{fig:elab-desc-rec}
}

\caption{Elaboration of inductive types}
\label{fig:elab-desc}

\end{figure}


\paragraph{Elaboration of an inductive definition (Fig.~\ref{fig:elab-desc-data}):}
Elaborating an inductive definition extends the input context with the
datatype definition. To obtain this definition, we first elaborate the
parameters and move onto elaborating the choice of constructors,
introducing a description label in the process.

\begin{example}[Elaborating \(\Tree\)]

Applied to our example, we obtain:
\[
\ElabData{\Gamma}
         {\Tree}
         {\PiTel{\Var{A}}{\Set}}
         {\Set}
         {[\mathrm{choices}]}
         {\Gamma [\Tree \mapsto 
                   \TypeAnn{\Lam{\Var{A}}{\Mu[(\CallDesc{\FatTree[\Var{A}]}{[\mathrm{code}]})]}}
                           {\Var{A} \To \Set}]}
\]
where
\begin{align*}
\mathrm{choices} &\triangleq 
\TreeLeaf \:|\: \TreeNode[\PiTel{\Var{l}}{\Tree[\Var{A}]}
                  \PiTel{\Var{a}}{\Var{A}}
                  \PiTel{\Var{r}}{\Tree[\Var{A}]}]
\\
\mathrm{code} &\triangleq 
\ReturnDesc[\Collection{
                        \begin{array}{l}
                          \Tag{\CN{leaf}},\\ 
                          \Tag{\CN{node}}
                        \end{array}}\:
                    \CollectionElim{
                      \begin{array}{l@{\DoReturn}l}
                        \Tag{\CN{leaf}} & \DUnit, \\
                        \Tag{\CN{node}} & \DVar \DTimes \DSigma[\Var{A}\: \Lam{\_} \DVar \DTimes \DUnit]
                      \end{array}
                    }]
\end{align*}

\end{example}


\paragraph{Elaboration of constructors (Fig.~\ref{fig:elab-desc-choices}):} 
To elaborate the choice of constructors, we elaborate each individual
constructor, hence obtaining their respective label and code. We then
return the finite collection of constructor names and their
corresponding codes. This elaboration step is subject to the soundness
property:
\begin{lemma}\label{lemma:elab-desc-choices}
If \(
\left\{
\begin{array}{l}
  \TypeDatatel{\Gamma}{\Meta{l}} \\
  \ElabChoices{\Gamma}
              {\mathrm{choices}}
              {\LabelDesc{l}}
              {\mathrm{code}}
\end{array}\right.\), 
then \(\TypeJudgment{\Gamma}{\mathrm{code}}{\LabelDesc{l}}\)
\end{lemma}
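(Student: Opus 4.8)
The plan is to invert the elaboration derivation and then discharge the two resulting type-checking obligations using the soundness of the bidirectional type checker. Since $\ElabChoices{\Gamma}{\mathrm{choices}}{\LabelDesc{l}}{\mathrm{code}}$ can only be derived by the single rule of Fig.~\ref{fig:elab-desc-choices}, inverting the hypothesis tells us that $\mathrm{choices}$ has the form $c_0 \mid \ldots \mid c_n$, that $\mathrm{code} = \ReturnDesc[E\: T]$ for some $E$ and $T$, and that the premises of that rule hold in $\Gamma$; in particular $\ElabCheck{\Gamma}{\Collection{t_i}}{\EnumU}{E}$ and $\ElabCheck{\Gamma}{\CollectionElim{t_i \mapsto \mathrm{code}_i}}{\EnumT[E] \To \Desc}{T}$, where the $t_i$ and $\mathrm{code}_i$ are the outputs of the per-constructor premises $\ElabConstr{\Gamma}{c_i}{\LabelDesc{l}}{t_i}{\mathrm{code}_i}$.

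Next I would apply the soundness of type checking (Section~\ref{sec:elab-course}) to each of those two checking judgments, obtaining $\TypeJudgment{\Gamma}{E}{\EnumU}$ and $\TypeJudgment{\Gamma}{T}{\EnumT[E] \To \Desc}$. It is worth noting that for this lemma we do not actually need a separate soundness statement about constructor elaboration: the well-typedness of the lookup table $T$ falls out directly from the fact that the external term $\CollectionElim{t_i \mapsto \mathrm{code}_i}$ was successfully checked against $\EnumT[E] \To \Desc$. Finally, the introduction rule for $\ReturnDesc$ in Fig.~\ref{fig:desc-label} takes $\TypeJudgment{\Gamma}{E}{\EnumU}$ and $\TypeJudgment{\Gamma}{T}{\EnumT[E] \To \Desc}$ to $\TypeJudgment{\Gamma}{\ReturnDesc[E\: T]}{\LabelDesc{l}}$, which is exactly $\TypeJudgment{\Gamma}{\mathrm{code}}{\LabelDesc{l}}$; the side condition that $\LabelDesc{l}$ be a well-formed type is supplied by the hypothesis $\TypeDatatel{\Gamma}{l}$.

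I do not expect a genuine obstacle for this particular statement: all the substance has already been absorbed into soundness of type checking, and the lemma merely reassembles the pieces. The only things demanding a little care are bookkeeping: threading the hypothesis $\TypeDatatel{\Gamma}{l}$ through so that $\LabelDesc{l}$ is available as a type when the $\ReturnDesc$ rule is applied, and keeping in mind that although this is one node in the mutually dependent web of soundness properties (to be proved bottom-up, alongside or after the lemmas for $\ElabConstr$ and $\ElabArg$), the argument above in fact calls only on soundness of type checking.
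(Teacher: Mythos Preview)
Your argument is correct. Inverting the single rule for choice elaboration exposes the two checking premises, soundness of type checking turns them into the typing judgments for $E$ and $T$, and the introduction rule for $\ReturnDesc$ (together with the hypothesis $\TypeDatatel{\Gamma}{l}$) closes the gap. There is nothing missing.

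Your route differs from the paper's, though. The paper's proof sketch proceeds strictly bottom-up: it first establishes Lemma~\ref{lemma:elab-desc-arg}, then Lemma~\ref{lemma:elab-desc-constr}, and obtains Lemma~\ref{lemma:elab-desc-choices} ``by applying this lemma to all constructors''. That is, the paper treats the well-typedness of the per-constructor codes $\mathrm{code}_i$ as the essential input, and builds the well-typedness of the lookup table from them. You instead observe that, because the rule in Fig.~\ref{fig:elab-desc-choices} already packages the $\mathrm{code}_i$ into an $\ElabCheck$ premise against $\EnumT[E] \To \Desc$, soundness of type checking delivers $\TypeJudgment{\Gamma}{T}{\EnumT[E] \To \Desc}$ directly, and Lemma~\ref{lemma:elab-desc-constr} is not formally required at this node. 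This is a genuine simplification of the dependency graph for this particular lemma; the paper's layered approach is more uniform across the family of soundness lemmas (and Lemma~\ref{lemma:elab-desc-constr} must be proved anyway for the overall theorem), but for Lemma~\ref{lemma:elab-desc-choices} in isolation your shortcut is both valid and cleaner.
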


\begin{example}[Elaborating \(\Tree\)]

Applied to our example, we obtain:
\[
\ElabChoices{\Gamma, \TypeAnn{\Var{A}}{\Set}}
            {[\mathrm{choices}]}
            {\LabelDesc{\FatTree[\Var{A}]}}
            {[\mathrm{code}]}
\]
Where \(\mathrm{choices}\) and \(\mathrm{code}\) have been defined
above.

\end{example}


\paragraph{Elaboration of constructor (Fig.~\ref{fig:elab-desc-constr}):}
The role of this elaboration step is twofold. First, we extract the
constructor name and elaborate it into a label. Second, we elaborate
the arguments of that constructor, hence obtaining a \(\Desc\)
code. We return the pair of the label and the arguments' code, subject
to the following soundness property:
\begin{lemma}\label{lemma:elab-desc-constr}
If 
\(
\left\{
\Code{
\TypeDatatel{\Gamma}{l} \\
\ElabConstr{\Gamma}
           {c}
           {\LabelDesc{l}}
           {t}
           {\mathrm{code}}
}
\right.
\), then
\(
\left\{\Code{
\TypeJudgment{\Gamma}{t}{\UId} \\
\TypeJudgment{\Gamma}{\mathrm{code}}{\Desc}
}\right.
\).

\end{lemma}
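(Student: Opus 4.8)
The plan is to proceed by inversion on the derivation of $\ElabConstr{\Gamma}{c}{\LabelDesc{l}}{t}{\mathrm{code}}$. Since Fig.~\ref{fig:elab-desc-constr} offers a single rule concluding this judgment, the external constructor $c$ must decompose as a name applied to an argument telescope $\mathrm{args}$, and the derivation supplies exactly two sub-derivations: $\ElabCheck{\Gamma}{\Tag{a}}{\UId}{t}$, checking the constructor name $a$ against $\UId$ and producing the elaborated index $t$, and the argument-elaboration judgment $\ElabArg{\Gamma}{\mathrm{args}}{\LabelDesc{l}}{\mathrm{code}}$ of Fig.~\ref{fig:elab-desc-arg}. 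The first conjunct, $\TypeJudgment{\Gamma}{t}{\UId}$, is then immediate: it is precisely the conclusion of the soundness theorem for type checking (Section~\ref{sec:elab-course}) applied to the first sub-derivation. So the only real content of the lemma is the second conjunct, $\TypeJudgment{\Gamma}{\mathrm{code}}{\Desc}$, which I would obtain from an auxiliary soundness statement for the argument-elaboration judgment.

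Hence the main step is to first prove, as an auxiliary lemma, the soundness of argument elaboration: \emph{if} $\TypeDatatel{\Gamma}{l}$ and $\ElabArg{\Gamma}{\mathrm{args}}{\LabelDesc{l}}{\mathrm{code}}$, \emph{then} $\TypeJudgment{\Gamma}{\mathrm{code}}{\Desc}$. This goes by induction on the derivation of $\ElabArg$, with one case per rule in Fig.~\ref{fig:elab-desc-arg}. For $\epsilon \mapsto \DUnit$ we use that $\DUnit : \Desc$ holds in any valid context, the validity of $\Gamma$ being recovered from $\TypeDatatel{\Gamma}{l}$. For a non-recursive argument, $\PiTel{\Var{x}}{T}\, \Delta \mapsto \DSigma[T'\: \Lam{\Var{x}}{\mathrm{code}_\Delta}]$, soundness of type checking gives $\TypeJudgment{\Gamma}{T'}{\Set}$; weakening $\TypeDatatel{\Gamma}{l}$ along the extension to $\Gamma, \TypeAnn{\Var{x}}{T'}$ lets the induction hypothesis apply, yielding $\TypeJudgment{\Gamma, \TypeAnn{\Var{x}}{T'}}{\mathrm{code}_\Delta}{\Desc}$, hence $\Lam{\Var{x}}{\mathrm{code}_\Delta} : T' \To \Desc$, and the formation rule for $\DSigma$ from Fig.~\ref{fig:universe-types} closes the case. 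The higher-order case, $\PiTel{\Var{f}}{\PiTo{\Var{t}}{T}\, \nabla}\, \Delta \mapsto (\DPi[T'\: \Lam{\Var{t}}{\mathrm{code}_\nabla}]) \DTimes \mathrm{code}_\Delta$, is analogous, now using the rules for $\DPi$ and $\DTimes$ and two instances of the induction hypothesis (one for $\mathrm{code}_\nabla$ in the extended context, one for the tail $\mathrm{code}_\Delta$). For a recursive argument, $\PiTel{\Var{x}}{T}\, \Delta \mapsto \DVar \DTimes \mathrm{code}_\Delta$, the premise $\ElabRec{T}{l}$ only licenses the translation and plays no role in typing its output: $\DVar : \Desc$ unconditionally, the induction hypothesis gives $\TypeJudgment{\Gamma}{\mathrm{code}_\Delta}{\Desc}$, and $\DTimes$ applies. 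With this auxiliary lemma available, Lemma~\ref{lemma:elab-desc-constr} follows by feeding the $\ElabArg$ sub-derivation obtained from the inversion into it, together with $\TypeDatatel{\Gamma}{l}$.

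The step I expect to be the genuine (if modest) obstacle is not a typing fact but the bookkeeping inside the induction: one must establish a weakening principle for description labels, so that $\TypeDatatel{\Gamma}{l}$ implies $\TypeDatatel{\Gamma, \TypeAnn{\Var{x}}{T'}}{l}$, since the induction hypothesis is invoked in a context extended by the newly-parsed argument in the $\DSigma$ and $\DPi$ cases; and symmetrically one must check that $\TypeDatatel{\Gamma}{l}$ already entails $\ContextValid{\Gamma}$, which is what the $\DUnit$ base case relies on. Everything else is a direct read-off of the $\Desc$ formation rules of Fig.~\ref{fig:universe-types} and of the soundness of type checking; in particular no property of $\ElabRec$ beyond its mere derivability is used, since every recursive position is translated to the closed code $\DVar$. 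Once recorded, this argument is exactly what then feeds Lemma~\ref{lemma:elab-desc-choices} and, eventually, the soundness of the elaboration of a complete inductive definition.
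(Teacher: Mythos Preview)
Your proposal is correct and follows essentially the same route as the paper: the paper's proof of the soundness theorem explicitly says it first proves Lemma~\ref{lemma:elab-desc-arg} by induction on the list of arguments and then obtains Lemma~\ref{lemma:elab-desc-constr} from it, which is exactly your decomposition (your ``auxiliary lemma'' is verbatim Lemma~\ref{lemma:elab-desc-arg}). Your case analysis for that induction, together with the weakening observation for description labels, is more detailed than anything the paper spells out, but it is the intended argument.
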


\begin{example}[Elaborating \(\Tree\)]

Since our datatype definition has two constructors, there are two
instances of constructor elaboration:
\[\Code{
\ElabConstr{\Gamma, \TypeAnn{A}{\Set}}
           {\TreeLeaf}
           {\LabelDesc{\FatTree[\Var{A}]}}
           {\Tag{\CN{leaf}}}
           {\DUnit} \\
\ElabConstr{\Gamma, \TypeAnn{A}{\Set}}
           {\TreeNode[\PiTel{\Var{l}}{\Tree[\Var{A}]}
                      \PiTel{\Var{a}}{\Var{A}}
                      \PiTel{\Var{r}}{\Tree[\Var{A}]}]}
           {\LabelDesc{\FatTree[\Var{A}]}}
           {\Tag{\CN{node}}}
           {\DVar \DTimes \DSigma[\Var{A}\: \Lam{\_} \DVar \DTimes \DUnit]}
}\]

\end{example}


\paragraph{Elaboration of arguments (Fig.~\ref{fig:elab-desc-arg}):}
The last and perhaps most interesting rule is the elaboration of
arguments. Intuitively, the arguments form a telescope of
\(\Sigma\)-types, hence our translation to \(\DSigma\) and \(\DTimes\)
codes. The first two rules are non-deterministic: \(T\) could either
be a proper type or a recursive call. In the first case, this maps to
a standard \(\DSigma\) code, while in the second case, we must make
sure that the recursive call is valid and, if so, we generate a
\(\DVar\) code. We also support exponentials in the definitions,
mapping them to the \(\DPi\) code. Once all arguments have been
processed, we conclude by generating the \(\DUnit\) code. This
translation is subject to the following soundness property:
\begin{lemma}\label{lemma:elab-desc-arg}
If 
\(
\left\{
\begin{array}{l}
\TypeDatatel{\Gamma}{l} \\
\ElabArg{\Gamma}
          {\mathrm{args}}
          {\LabelDesc{l}}
          {\mathrm{code}}
\end{array}
\right.\), then
\(\TypeJudgment{\Gamma}{\mathrm{code}}{\Desc}\).
\end{lemma}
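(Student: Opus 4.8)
The plan is to proceed by induction on the derivation of $\ElabArg{\Gamma}{\mathrm{args}}{\LabelDesc{l}}{\mathrm{code}}$, \ie by case analysis on the last rule applied (Fig.~\ref{fig:elab-desc-arg}). There are four cases: the empty telescope $\epsilon$; a first-order non-recursive argument $\PiTel{\Var{x}}{T}\Delta$; a recursive argument $\PiTel{\Var{x}}{T}\Delta$ (with side premise $\ElabRec{T}{l}$); and a higher-order argument $\PiTel{\Var{f}}{\PiTo{\Var{t}}{T}\nabla}\Delta$. In every case the goal $\TypeJudgment{\Gamma}{\mathrm{code}}{\Desc}$ is discharged by combining (i) the soundness of type checking (Theorem ``Soundness of type checking''), which turns each side premise $\ElabCheck{\Gamma}{T}{\Set}{T'}$ into $\TypeJudgment{\Gamma}{T'}{\Set}$; (ii) the induction hypotheses applied to the sub-telescopes $\Delta$ and $\nabla$; and (iii) the introduction rules for the $\Desc$ codes from Fig.~\ref{fig:universe-types}. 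The non-determinism between the first two rules (a head type being a genuine type versus a recursive occurrence) is not a concern here, since we are handed a derivation to analyse rather than having to construct one.

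Concretely: for $\epsilon$ we emit $\DUnit$, which is well-typed as soon as $\ContextValid{\Gamma}$, and the latter follows from $\TypeDatatel{\Gamma}{l}$. For the recursive case we emit $\DVar\DTimes\mathrm{code}_\Delta$; here $\DVar : \Desc$ holds unconditionally — the premise $\ElabRec{T}{l}$ only certifies that the recursive occurrence applies the datatype to the right parameters and plays no role in typing — and $\TypeJudgment{\Gamma}{\mathrm{code}_\Delta}{\Desc}$ is the induction hypothesis (in the \emph{same} context $\Gamma$, so the hypothesis $\TypeDatatel{\Gamma}{l}$ is directly available), whence $\DVar\DTimes\mathrm{code}_\Delta : \Desc$ by the $\DTimes$ rule. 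For the non-recursive case we emit $\DSigma[T'\: \Lam{\Var{x}}{\mathrm{code}_\Delta}]$: soundness of type checking gives $\TypeJudgment{\Gamma}{T'}{\Set}$, the induction hypothesis gives $\TypeJudgment{\Gamma, \TypeAnn{\Var{x}}{T'}}{\mathrm{code}_\Delta}{\Desc}$, hence $\TypeJudgment{\Gamma}{\Lam{\Var{x}}{\mathrm{code}_\Delta}}{T' \To \Desc}$, and we conclude by the $\DSigma$ rule. The higher-order case is the same pattern nested once more: we obtain $\TypeJudgment{\Gamma}{T'}{\Set}$, then $\TypeJudgment{\Gamma, \TypeAnn{\Var{t}}{T'}}{\mathrm{code}_\nabla}{\Desc}$ by the induction hypothesis, so $\DPi[T'\: \Lam{\Var{t}}{\mathrm{code}_\nabla}] : \Desc$ by the $\DPi$ rule, and finally $\DTimes$-ing with $\mathrm{code}_\Delta : \Desc$ (again the induction hypothesis).

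The one point requiring care — and the place where the hypothesis $\TypeDatatel{\Gamma}{l}$ is actually exercised — is that in the non-recursive and higher-order cases the sub-derivation of $\ElabArg$ lives in an \emph{extended} context ($\Gamma, \TypeAnn{\Var{x}}{T'}$, resp.\ $\Gamma, \TypeAnn{\Var{t}}{T'}$), so applying the induction hypothesis there requires re-establishing its first premise, \ie $\TypeDatatel{\Gamma, \TypeAnn{\Var{x}}{T'}}{l}$. This follows from an admissible weakening lemma for the $\datatel{}$ judgment: if $\TypeDatatel{\Gamma}{l}$ and $\ContextValid{\Gamma ; \XS}$ then $\TypeDatatel{\Gamma ; \XS}{l}$, proved by a straightforward induction on the derivation of $\TypeDatatel{\Gamma}{l}$ (its rules only appeal to context validity and ordinary typing judgments, both stable under weakening). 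The side condition $\ContextValid{\Gamma, \TypeAnn{\Var{x}}{T'}}$ is itself obtained from $\ContextValid{\Gamma}$ together with $\TypeJudgment{\Gamma}{T'}{\Set}$. Apart from this bookkeeping the argument is routine, and it is the base case on which Lemmas~\ref{lemma:elab-desc-constr} and~\ref{lemma:elab-desc-choices} are built, the latter additionally invoking the soundness of the enumeration-elaboration rules of Section~\ref{sec:elab-course} for the constructor-choice layer.
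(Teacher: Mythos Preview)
Your proof is correct and follows exactly the approach the paper indicates, namely induction on the list of arguments (equivalently, on the derivation of the $\ElabArg$ judgment); the paper itself only sketches this as a one-liner inside the proof of the soundness theorem, so your case analysis, appeal to soundness of type checking, and the weakening of the $\datatel{}$ judgment to push the hypothesis under the extended contexts simply spell out what the paper leaves implicit. One tiny quibble: in the non-indexed system the base rule for $\datatel{}$ is stated as a bare axiom (no $\ContextValid{\Gamma}$ premise), so strictly speaking $\ContextValid{\Gamma}$ does not follow from $\TypeDatatel{\Gamma}{l}$ alone---but this is harmless, since either the $\DUnit$ typing rule does not require it or context validity is already an ambient invariant of the whole elaboration.
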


\begin{example}[Elaborating \(\Tree\)]

Elaborating the arguments of the \(\TreeLeaf\) constructor is trivial:
\[
\Axiom{\ElabArg{\Gamma, \TypeAnn{A}{\Set}}
               {\epsilon}
               {\LabelDesc{\FatTree[\Var{A}]}}
               {\DUnit}}
\]
As for the \(\TreeNode\) constructor, we obtain its code through the
following sequence of elaborations:
\[
\Rule{
  \ElabRec{\Tree[\Var{A}]}{\FatTree[\Var{A}]} \qquad
  \Rule{
    \Rule{
      \ElabRec{\Tree[\Var{A}]}{\FatTree[\Var{A}]} \qquad
      \Axiom{\ElabArg{\Gamma, \TypeAnn{A}{\Set}, \TypeAnn{a}{A}}
                     {\epsilon}
                     {\LabelDesc{\FatTree[\Var{A}]}}
                     {\DUnit}}}
         {\ElabArg{\Gamma, \TypeAnn{A}{\Set}, \TypeAnn{a}{A}}
                  {\PiTel{\Var{r}}{\Tree[\Var{A}]}}
                  {\LabelDesc{\FatTree[\Var{A}]}}
                  {\DVar \DTimes \DUnit}}}
       {\ElabArg{\Gamma, \TypeAnn{A}{\Set}}
                {\PiTel{\Var{a}}{\Var{A}}
                 \PiTel{\Var{r}}{\Tree[\Var{A}]}}
                {\LabelDesc{\FatTree[\Var{A}]}}
                {\DSigma[\Var{A}\: \Lam{\_} \DVar \DTimes \DUnit]}}}
     {\ElabArg{\Gamma, \TypeAnn{A}{\Set}}
              {\PiTel{\Var{l}}{\Tree[\Var{A}]}
                \PiTel{\Var{a}}{\Var{A}}
                \PiTel{\Var{r}}{\Tree[\Var{A}]}}
              {\LabelDesc{\FatTree[\Var{A}]}}
              {\DVar \DTimes \DSigma[\Var{A}\: \Lam{\_} \DVar] \DTimes \DUnit}}
\]

\end{example}



We can now prove the soundness of the whole translation. We
formulate soundness as follow:
\begin{theorem}[Soundness of elaboration]

If \(\ElabData{\Gamma}
              {D}
              {\overrightarrow{\PiTel{\Var{p}}{P}}}
              {\Set}
              {\mathrm{choices}}
              {\Delta}\), 
then \(\ContextValid{\Delta}\).

\end{theorem}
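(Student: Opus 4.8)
The plan is to unfold the single rule defining $\ElabData{}{}{}{}{}{}$ (Fig.~\ref{fig:elab-desc-data}) and discharge each premise using the soundness lemmas already established, then reassemble the pieces to show the output context is valid. Concretely, suppose $\ElabData{\Gamma}{D}{\overrightarrow{\PiTel{\Var{p}}{P}}}{\Set}{\mathrm{choices}}{\Delta}$. By inversion on the rule, $\Delta = \Gamma[D \mapsto \TypeAnn{\Lam{\vec{p}}{\Mu[(\CallDesc{\mathbf{D}\: \vec{p}}{\mathrm{code}})]}}{\overrightarrow{\PiTel{\Var{p}}{P'}} \To \Set}]$, and we have the two premises $\ElabCheck{\Gamma}{\overrightarrow{\PiTel{\Var{p}}{P}} \To \Set}{\Set[1]}{\overrightarrow{\PiTel{\Var{p}}{P'}} \To \Set}$ and $\ElabChoices{\Gamma, \overrightarrow{\TypeAnn{p}{P'}}}{\mathrm{choices}}{\LabelDesc{\mathbf{D}\: \vec{p}}}{\mathrm{code}}$. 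By the context-validity rule for term bindings, it suffices to show that in context $\Gamma$ the term $\Lam{\vec{p}}{\Mu[(\CallDesc{\mathbf{D}\: \vec{p}}{\mathrm{code}})]}$ has type $\overrightarrow{\PiTel{\Var{p}}{P'}} \To \Set$, assuming $\ContextValid{\Gamma}$.

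Next I would chase the types. From the first premise, soundness of type checking (Theorem on soundness of type checking) gives $\TypeJudgment{\Gamma}{\overrightarrow{\PiTel{\Var{p}}{P'}} \To \Set}{\Set[1]}$, so each $P'$ is a well-formed type and $\Gamma, \overrightarrow{\TypeAnn{p}{P'}}$ is a valid context; in particular $\TypeDatatel{\Gamma, \overrightarrow{\TypeAnn{p}{P'}}}{\mathbf{D}\: \vec{p}}$ holds by the formation rules for description labels (Fig.~\ref{fig:desc-label}), since $\mathbf{D}$ is a label and each $\Var{p}$ is a well-typed variable. Feeding this and the second premise into Lemma~\ref{lemma:elab-desc-choices} yields $\TypeJudgment{\Gamma, \overrightarrow{\TypeAnn{p}{P'}}}{\mathrm{code}}{\LabelDesc{\mathbf{D}\: \vec{p}}}$. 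Applying the elimination rule for $\CallDesc{}{}$ gives $\TypeJudgment{\Gamma, \overrightarrow{\TypeAnn{p}{P'}}}{\CallDesc{\mathbf{D}\: \vec{p}}{\mathrm{code}}}{\Desc}$, and then the formation rule for $\Mu$ (Fig.~\ref{fig:universe-types}) gives $\TypeJudgment{\Gamma, \overrightarrow{\TypeAnn{p}{P'}}}{\Mu[(\CallDesc{\mathbf{D}\: \vec{p}}{\mathrm{code}})]}{\Set}$. Abstracting over $\vec{p}$ with repeated applications of the $\Pi$-introduction rule produces $\TypeJudgment{\Gamma}{\Lam{\vec{p}}{\Mu[(\CallDesc{\mathbf{D}\: \vec{p}}{\mathrm{code}})]}}{\overrightarrow{\PiTel{\Var{p}}{P'}} \To \Set}$, which is exactly what is needed; one more application of the context-validity rule for term bindings (noting $D \notin \Gamma$, which we take as an implicit freshness side-condition of the $\Define$ command) concludes $\ContextValid{\Delta}$.

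The only genuinely delicate point — and the reason the paper defers this proof to the end of the section — is establishing the chain of lemmas $\ref{lemma:elab-desc-arg} \Rightarrow \ref{lemma:elab-desc-constr} \Rightarrow \ref{lemma:elab-desc-choices}$ that this theorem rests on, together with the soundness of the auxiliary judgments $\ElabRec{}{}$ and the type-checking judgment on constructor arguments. In particular, Lemma~\ref{lemma:elab-desc-arg} must be proved by induction on the derivation of $\ElabArg{}{}{}{}$, and the subtle case is the recursive-argument rule: there we need that $\ElabRec{T}{l}$ with $\TypeDatatel{\Gamma}{l}$ guarantees $\DVar$ is well-typed, i.e.\ that the syntactic recursive occurrence really does match the datatype label with its parameters in exactly the right order — this is where the remark ``it is crucial that the parameters are the same in the definition and the recursive calls'' is cashed out. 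Once those lemmas are in hand, the theorem itself is the short, mechanical top-level assembly sketched above, and I would present it in precisely that order: state the inversion, invoke the type-checking soundness theorem, build the description label, apply Lemma~\ref{lemma:elab-desc-choices}, type the $\CallDesc{}{}$ and $\Mu$ forms, re-abstract, and close with context validity.
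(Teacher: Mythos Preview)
Your proposal is correct and follows essentially the same approach as the paper: establish the chain Lemma~\ref{lemma:elab-desc-arg} $\Rightarrow$ Lemma~\ref{lemma:elab-desc-constr} $\Rightarrow$ Lemma~\ref{lemma:elab-desc-choices} (with Lemma~\ref{lemma:elab-desc-arg} by induction on the argument telescope), then assemble the top-level result. You spell out the final assembly step (typing $\CallDesc{}{}$, $\Mu$, re-abstracting, and closing with context validity) in considerably more detail than the paper, which merely writes ``From this Lemma, we finally obtain the soundness theorem''; but the structure and the key lemmas are the same.
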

\begin{proof}

First, we prove Lemma~\ref{lemma:elab-desc-arg} by induction on the
list of arguments. We then obtain
Lemma~\ref{lemma:elab-desc-constr}. By applying this lemma to all
constructors, we obtain Lemma~\ref{lemma:elab-desc-choices}. From this
Lemma, we finally obtain the soundness theorem.

\end{proof}


While our soundness theorem gives some hint as to the correctness of
our specification, we could obtain a stronger result by proving an
equivalence between Coq's \texttt{Inductive} definitions and the
corresponding datatype declaration in our system. This equivalence
amounts to proving the equivalence of the associated elimination
forms, \ie \texttt{Fix} in Coq and \(\induction\) in our
system. However, since we do not know of any formal description of
elimination principles generated from an \texttt{Inductive}
definition, we shall use the simpler presentation given in
\citet{gimenez:coq-induction}.

\newcommand{\InterpretConstForm}[1]{\lfloor #1 \rfloor}

The fact that our induction principle is provable in Coq is a known
result~\citep{dagand:levitation}. The other direction consists in
proving that any Coq \texttt{Fix} definition can be implemented using
our induction principle. To this end, we use Gim\'{e}nez reduction of
\texttt{Fix} definitions down to elimination rules. To prove this
result, we first translate Gim\'{e}nez constructor forms to our
universe of code:
\begin{align*}
  \InterpretConstForm{X \vec{N}} 
      &\mapsto
          \DUnit \\
  \InterpretConstForm{(\TypeAnn{x}{M}) \to \mathcal{C}}
      &\mapsto
          \DSigma[M\: \Lam{\Var{x}}{\InterpretConstForm{\mathcal{C}}}] \\
  \InterpretConstForm{(\TypeAnn{x}{M} \to X\: \vec{N}) \to \mathcal{C}}
      &\mapsto
          (\DPi[M\: \Lam{\_}{\DVar}]) \DTimes \InterpretConstForm{\mathcal{C}}
\end{align*}
We thus get a translation from his recursive types declaration to a code in our
universe:
\[
\InterpretConstForm{\mathbf{Ind}(\TypeAnn{x}{A})\langle C_0 \ldots C_n \rangle}
    \mapsto 
    \Dsigma[n\: \CollectionElim{i \mapsto \InterpretConstForm{C_i}}]
\]
Having done that, it is then a straightforward symbol-pushing exercise
to prove that Coq's elimination rules (Section~3.1.1,
\citep{gimenez:coq-induction}) can be reduced to our generic
elimination principle. The crux of the matter consists in showing that
the minor premises -- defined by \(\mathcal{E}_1\) in that paper --
maps to the induction hypothesis described by
\((\PiTo{\Var{d}}{\InterpretDesc{\Var{D}}[(\Mu{\Var{D}})]}
\All[\Var{D}\: (\Mu{\Var{D}})\: \Var{P}\: \Var{d}] \To \Var{P}
(\In{\Var{d}}))\) in our system.



\section{Elaborating inductive families}
\label{sec:elab-data-family}


In this Section, we extend our treatment of inductive definitions to
inductive families. To do so, we add support for indices and
computation on these indices. The resulting system subsumes the one
presented in the previous Section. Hence, we shall reuse many
notations used previously: this should not affect reasoning or
implementing this elaborator, since only that last system is
necessary. We shall however develop our intuition of this more
powerful presentation from the simpler translation of inductive types.


The syntax we elaborate is strongly inspired by the syntax used by
Agda and Coq. However, to support computation over indices, we support
the Epigram-style \emph{by} gadget (\(\DoBy\)). Our language of
inductive definition is therefore more complex, following the
skeleton:
\[
\Data{\Canonical{D}}
     {\overrightarrow{\Param{\Var{p}}{P}} 
      \overrightarrow{\Index{\Var{i}}{I}}}
     {\Set}{
\Emit{\Canonical{D}}
     {\vec{\Var{p}}\: 
      \overrightarrow{\Constraint{\Var{i}}{t_0}}}
     {c_{0,0} \overrightarrow{\PiTel{\Var{a_{0,0}}}{T_{0,0}}}}
\OrEmit{\ldots}
&\multicolumn{5}{@{}l}{\vdots} \\
\By{\Canonical{D} & 
    \vec{\Var{p}}\: 
    \vec{\Var{i}}}
   {\Function{elim}\: \Var{i_m}}{
\Emit{\quad\Canonical{D}}
     {\vec{\Var{p}}\: 
      \vec{\Var{k_0}}}
     {\ldots}
&\multicolumn{5}{@{}l}{\quad\vdots} \\
}}
\]

For anyone wanting to reason about Agda or Coq definitions, it is
straightforward to simply discard the elaboration of computation over
indices. Thus, with minor adjustement, our formalization of
elaboration gives a translation semantics to inductive definitions
used in these main-stream theorem provers.


\subsection{Description labels}

\newcommand{\idatatel}[1]{\mathrm{idatatel}_{#1}}
\newcommand{\TypeIDataTel}[2]{\Judgment{#1}{#2 \:\idatatel{}}}
\newcommand{\InterpretIDataTel}[1]{\green{\llbracket} #1 \green{\rrbracket_D}}


In Section~\ref{sec:desc-labels}, we have introduced the notion of
description labels. While it was enough to deal with
parameterized definitions, we have to extend this notion to
account for indexing. Besides, indexing can be either unconstrained --
some index \(\IndexBracket{i}\) -- or it can be constrained to some
particular value -- such as \(\Constraint{i}{t}\). Besides, a label is
not a phantom type around a description but around an \(\IDesc[I]\)
where \(I\) corresponds to the product of all index types the label is
taking as arguments. 


These requirements lead us to the definition of description labels
given in Figure~\ref{fig:idesc-label}. We define label's arguments
through a telescope extended with indices and constraints. To compute
the index type of the resulting description, we introduce the
\(\InterpretIDataTel{\_}\) function. Accordingly, we update the introduction
and elimination rule of labels to account for the index type computed
from the telescope.

\begin{figure}
\[\Code[c]{
\Code[llll]{
\Rule{\ContextValid{\Gamma}}
     {\TypeIDataTel{\Gamma}{\mathbf{D}}}
&
\Rule{\Code{
      \TypeIDataTel{\Gamma}{\Meta{l}} \\
      \TypeJudgment{\Gamma}{\Meta{p}}{\Meta{P}}}}
     {\TypeIDataTel{\Gamma}{\Meta{l}\: \ParamBracket{\Meta{p}}}}
&
\Rule{\Code{
      \TypeIDataTel{\Gamma}{\Meta{l}} \\
      \TypeJudgment{\Gamma}{\Meta{i}}{\Meta{I}}}}
     {\TypeIDataTel{\Gamma}{\Meta{l}\: \IndexBracket{\Meta{i}}}}
&
\Rule{\Code{
      \TypeIDataTel{\Gamma}{\Meta{l}} \\
      \TypeAnn{\Meta{i}}{\Meta{I} \in \Gamma}}}
     {\TypeIDataTel{\Gamma}{\Meta{l}\: \Constraint{\Meta{i}}{\Meta{t}}}}
\\
\\
\InterpretIDataTel{\mathbf{D}} \mapsto \Unit
&
\InterpretIDataTel{\Var{l}\: \ParamBracket{\Var{p}}} \mapsto \InterpretIDataTel{\Var{l}}
&
\InterpretIDataTel{\Var{l}\: \IndexBracket{\Var{i}}} \mapsto \InterpretIDataTel{\Var{l}} \Times \Var{I}
&
\InterpretIDataTel{\Var{l}\: \Constraint{\Var{i}}{\Var{t}}} \mapsto \InterpretIDataTel{\Var{l}} \Times \Var{I}
}
\\
\\
\Rule{\TypeDatatel{\Gamma}{l}}
     {\TypeJudgment{\Gamma}{\LabelDesc{l}}{\Set[1]}}
\qquad
\Rule{\Code{
    \TypeJudgment{\Gamma}{E}{\EnumU} \\
    \TypeJudgment{\Gamma}{T}{\EnumT[E] \To \IDesc[\InterpretIDataTel{\Meta{l}}]}}}
     {\TypeJudgment{\Gamma}{\ReturnDesc[E\: T]}{\LabelDesc{l}}}
\qquad
\Rule{\TypeJudgment{\Gamma}{t}{\LabelDesc{l}}}
     {\TypeJudgment{\Gamma}{\CallDesc{l}{t}}{\IDesc[\InterpretIDataTel{\Meta{l}}]}}
}
\]

\caption{Description label (indexed)}
\label{fig:idesc-label}

\end{figure}


\newcommand{\IDataPattern}{\mathrm{patt}}
\newcommand{\TypeIDataPattern}[2]{\Judgment{#1}{#2 \:\IDataPattern}}

\if 0

Unfortunately, this is not enough. Since we want to be explicit about
constraints on indices, we must introduce another syntactic artifact
to represent these constraints.

\[
\Rule{\ContextValid{\Gamma}}
     {\TypeIDataPattern{\Gamma}{\mathbf{D}}}
\qquad
\Rule{\TypeIDataPattern{\Gamma}{\Meta{pt}} \quad
      \TypeJudgment{\Gamma}{\Meta{t}}{\Meta{T}}}
     {\TypeIDataPattern{\Gamma}{\Meta{pt}\: \Meta{t}}}
\qquad
\Rule{\TypeIDataPattern{\Gamma}{\Meta{pt}} \quad
      \TypeAnn{\Meta{i}}{\Meta{I}} \in \Gamma \quad
      \TypeJudgment{\Gamma}{\Meta{t}}{\Meta{I}}}
     {\TypeIDataPattern{\Gamma}{\Meta{pt}\: \Constraint{\Meta{i}}{\Meta{t}}}}
\]

\fi


\renewcommand{\ElabChoices}[5]
           {#1 \vdash #3 \ni #2 
                 \ElabRel{Cs}
                     [#4 \mapsto #5]}

\begin{figure}[tbp]

\centering


\subfloat[][Elaboration of definition]{
\(\Code[c]{
\boxed{\ElabIData{\Gamma}
                  {D}
                  {\overrightarrow{\Param{\Var{p}}{P}}}
                  {\overrightarrow{\Index{\Var{i}}{I}}}
                  {\Set}
                  {\mathrm{choices}}
                  {\Delta}}
\\
\\
\Rule{\Code{
    \ElabCheck{\Gamma}
              {\overrightarrow{\PiTel{\Var{p}}{P}}
               \overrightarrow{\PiTel{\Var{i}}{I}} \To \Set}{\Set[1]}
              {\overrightarrow{\PiTel{\Var{p}}{P'}}
               \overrightarrow{\PiTel{\Var{i}}{I'}} \To \Set} \\
    \ElabDataPatts{\Gamma; 
                   \overrightarrow{\TypeAnn{p}{P'}};
                   \overrightarrow{\TypeAnn{i}{I'}}}
                  {\mathrm{choices}}
                  {\LabelDesc{\mathbf{D}\: 
                              \overrightarrow{\ParamBracket{p}}\: 
                              \overrightarrow{\IndexBracket{i}}}}
                  {\mathrm{code}}}}
     {\Code{
         \ElabIData{\Gamma}
                   {D}
                   {\overrightarrow{\Param{p}{P}}}
                   {\overrightarrow{\Index{i}{I}}}
                   {\Set}
                   {\mathrm{choices}}
                   {\\\qquad
                     \Gamma [D \mapsto 
                       \TypeAnn{\LamAnn{\vec{p}}{\vec{P'}}
                                    {\Mu[(\LamAnn{\vec{i}}{\vec{I'}}
                                          \CallDesc{D\: \overrightarrow{\ParamBracket{p}}\:
                                                        \overrightarrow{\IndexBracket{i}}}
                                                   {\mathrm{code}})]}}
                               {\overrightarrow{\PiTel{\Var{p}}{P'}}
                                \overrightarrow{\PiTel{\Var{i}}{I'}} \To \Set}]}}}
}\)
\label{fig:idesc-elab-idata}
}
\\
\subfloat[][Elaboration of patterns]{
\(\Code[c]{
\boxed{\ElabDataPatts{\Gamma}
                     {\mathrm{patts}}
                     {\LabelDesc{l}}
                     {\mathrm{code}}}
\\
\\
\Rule{
    \Code{
    \ElabIndices{l}{pt_i}{l_i} \\
    \ElabChoices{\Gamma}
                {cs_i}
                {\LabelDesc{l_i}}
                {\Collection{\vec{c_{i,j}}}}
                {\CollectionElim{\overrightarrow{c_{i,j} \mapsto as_{i,j}}}}}
    \qquad
    \Code{
    \ElabIndices{l}{pt_{i+1}}{l_{i+1}} \\
    \ElabEWM{\Gamma}{e}{e'}
            {\overrightarrow{(\overrightarrow{\PiTel{\Var{x_k}}{X_k}} \To \LabelDesc{l_k})} \To \LabelDesc{l_{i+1}}} \\
    \ElabDataPatts{\Gamma, \overrightarrow{\TypeAnn{x_k}{X_k}}}
                  {p_k}
                  {\LabelDesc{l_k}}
                  {\mathrm{code}_k}}}
     {\ElabDataPatts{\Gamma}
                    {\Code{
                        pt_0 \backepsilon cs_0 \\
                        \vdots \\
                        pt_i \backepsilon cs_i \\
                        pt_{i+1} \DoBy e \:\{ \vec{p_k} \}}}
                    {\LabelDesc{l}}
                    {\ReturnDesc[\Collection{\vec{c_{i,j}}, \Tag{\mathrm{elim}}}\:
                                 \CollectionElim{c_{i,j} \mapsto as_{i,j}, 
                                                \Tag{\mathrm{elim}} \mapsto e'\: \overrightarrow{(\LamAnn{\vec{x_k}}{\vec{X_k}}{\CallDesc{l}{\mathrm{code}_k}})}}]}}
}\)
\label{fig:idesc-elab-datapatts}
}
\\
\subfloat[][Elaboration of choices]{
\(\Code[c]{
\boxed{\ElabChoices{\Gamma}
                   {\mathrm{choices}}
                   {\LabelDesc{l}}
                   {E}
                   {T}}
\\
\\
\Rule{\ElabConstr{\Gamma}
                 {c_i}
                 {\LabelDesc{l}}
                 {t_i}
                 {\mathrm{code}_i}}
     {\ElabChoices{\Gamma}
                  {c_0 | \ldots | c_n}
                  {\LabelDesc{l}}
                  {\Collection{\vec{t_i}}}
                  {\CollectionElim{\overrightarrow{t_i \mapsto \mathrm{code}_i}}}}
}\)
\label{fig:idesc-elab-choices}
}
\\
\subfloat[][Elaboration of constructor]{
\(\Code[c]{
\boxed{\ElabConstr{\Gamma}
                  {c}
                  {\LabelDesc{l}}
                  {t}
                  {\mathrm{code}}}
\\
\\
\Rule{\ElabCheck{\Gamma}
                {\Tag{c}}
                {\UId}
                {c'}
      \qquad
      \ElabArg{\Gamma}
              {\mathrm{args}}
              {\LabelDesc{l}}
              {\mathrm{code}}}
     {\ElabConstr{\Gamma}
                 {c\: \mathrm{args}}
                 {\LabelDesc{l}}
                 {c'}
                 {\mathrm{code}}}
}\)
\label{fig:idesc-elab-constr}
}

\caption{Elaboration of inductive families (1)}
\label{fig:elab-families-1}

\end{figure}

\begin{figure}

\centering
\subfloat[][Elaboration of arguments]{
\(\Code[c]{
\boxed{
  \ElabArg{\Gamma}
          {\mathrm{args}}
          {\LabelDesc{l}}
          {\mathrm{code}}
}
\\
\\
\Rule{\ElabCheck{\Gamma}{T}{\Set}{T'} \qquad
      \ElabArg{\Gamma, \TypeAnn{x}{T'}}
              {\Delta}
              {\LabelDesc{l}}
              {\mathrm{code}_\Delta}}
     {\ElabArg{\Gamma}
              {\PiTel{\Var{x}}{T}\: \Delta}
              {\LabelDesc{l}}
              {\DSigma[T\: \Lam{\Var{x}}{\mathrm{code}_\Delta}]}} 
\qquad
\Rule{\ElabIndices{T}{l}{is} \qquad
      \ElabArg{\Gamma}
              {\Delta}
              {\LabelDesc{l}}
              {\mathrm{code}_\Delta}}
     {\ElabArg{\Gamma}
              {\PiTel{\Var{x}}{T} \Delta}
              {\LabelDesc{l}}
              {(\DVar[is]) \DTimes \mathrm{code}_\Delta}}
\\
\Rule{\ElabCheck{\Gamma}{T}{\Set}{T'} \qquad
      \Code{
        \ElabArg{\Gamma, \TypeAnn{x}{T'}}
                {\nabla}
                {\LabelDesc{l}}
                {\mathrm{code}_\nabla} \\
        \ElabArg{\Gamma}
                {\Delta}
                {\LabelDesc{l}}
                {\mathrm{code}_\Delta}}}
     {\ElabArg{\Gamma}
              {\PiTel{\Var{f}}{\PiTo{\Var{x}}{T} \nabla} \Delta}
              {\LabelDesc{l}}
              {(\DPi[T'\: \Lam{\Var{x}}{\mathrm{code}_\nabla}]) \DTimes \mathrm{code}_\Delta}}
\qquad
\Rule{\ElabEqs{\Gamma}
              {l}
              {q}}
     {\ElabArg{\Gamma}
              {\epsilon}
              {\LabelDesc{l}}
              {q}}
}\)
\label{fig:idesc-elab-arg}
}
\\
\subfloat[][Pattern validation]{
\(\Code[c]{
\boxed{
  \ElabIndices{l}
              {T}
              {l_T}
}
\\
\\
\Axiom{\ElabIndices{\mathbf{D}}
                   {D}
                   {\mathbf{D}}}
\qquad
\Rule{\ElabIndices{l}
                  {T}
                  {l_T}}
     {\ElabIndices{l\: \ParamBracket{p}}
                  {T\: p}
                  {l_T\: \ParamBracket{p}}}
\qquad
\Rule{\ElabIndices{l}
                  {T}
                  {l_T}}
     {\ElabIndices{l\: \IndexBracket{i}}
                  {T\: i}
                  {l_T\: \IndexBracket{i}}}
\\
\Rule{\ElabIndices{l}
                  {T}
                  {l_T}}
     {\ElabIndices{l\: \IndexBracket{i}}
                  {T\: \Constraint{i}{t}}
                  {l_T\: \Constraint{i}{t}}}
\qquad
\Rule{\ElabIndices{l}
                  {T}
                  {l_T}}
     {\ElabIndices{l\: \Constraint{i}{t}}
                  {T\: \Constraint{i}{t}}
                  {l_T\: \Constraint{i}{t}}}
}
\)
\label{fig:idesc-elab-indices}
}
\\
\subfloat[][Elaboration of constraints]{
\(\Code[c]{
\boxed{
  \ElabEqs{\Gamma}
          {l}
          {q}
}
\\
\\
\Rule{\ContextValid{\Gamma}}
     {\ElabEqs{\Gamma}
              {\mathbf{D}}
              {\DUnit}}
\qquad
\Rule{\ElabEqs{\Gamma}
              {l}
              {q}}
     {\ElabEqs{\Gamma}
              {l\: \ParamBracket{p}}
              {q}}
\qquad
\Rule{\ElabEqs{\Gamma}
              {l}
              {q}}
     {\ElabEqs{\Gamma}
              {l\: \IndexBracket{i}}
              {q}} 
\qquad
\Rule{\ElabEqs{\Gamma}
              {l}
              {q} 
      \qquad
      \Code{
        \TypeAnn{i}{I} \in \Gamma \\
        \ElabCheck{\Gamma}{t}{I}{t'}}}
     {\ElabEqs{\Gamma}
              {l\: \Constraint{i}{t}}
              {\DSigma[(i \PropEqual t')\: \Lam{\_}{q}]}}
}\)
\label{fig:idesc-elab-eqs}
}
\\
\subfloat[][Extraction of indices]{
\(\Code[c]{
\boxed{
  \ElabRecArgs{\Gamma}{T}{l}{is}
}
\\
\\
\Rule{\ContextValid{\Gamma}}
     {\ElabRecArgs{\Gamma}
                  {D}
                  {\mathbf{D}}
                  {\Void}}
\qquad
\Rule{\ElabRecArgs{\Gamma}
                  {T}
                  {l}
                  {is}}
     {\ElabRecArgs{\Gamma}
                  {T\: p}
                  {l\: \ParamBracket{p}}
                  {is}}
\qquad
\Rule{\ElabRecArgs{\Gamma}
                  {T}
                  {l}
                  {is}}
     {\ElabRecArgs{\Gamma}
                  {T\: i}
                  {l\: \IndexBracket{i'}}
                  {\Pair{is}{i}}}
\qquad
\Rule{\ElabRecArgs{\Gamma}
                  {T}
                  {l}
                  {is}}
     {\ElabRecArgs{\Gamma}
                  {T\: i}
                  {l\: \Constraint{i'}{t}}
                  {\Pair{is}{i}}}
}\)
\label{fig:idesc-elab-recargs}
}

\caption{Elaboration of inductive families (2)}
\label{fig:elab-families-2}

\end{figure}


As for inductive types, we shall present the elaboration process in a
top-down manner. This presentation shares some strong commonality with
the simpler elaboration of inductive types: we elaborate choices of
constructors (Fig.~\ref{fig:idesc-elab-choices}), followed by
individual constructors (Fig.~\ref{fig:idesc-elab-constr}), and
finally processing the telescope of arguments
(Fig.~\ref{fig:idesc-elab-arg}). However, the presence of indices
introduces some new steps too. We support constraint of indices and
computation over them through a new top-level rule
(Fig.~\ref{fig:idesc-elab-datapatts}). Besides, we must translate the
constraints to actual equalities (Fig.~\ref{fig:idesc-elab-eqs}) and
pass the correct indices when elaborating a recursive call
(Fig.~\ref{fig:idesc-elab-recargs}).


\subsection{Elaborating inductive families}


To give a better intuition of a perhaps intricate system, we should
illustrate every inference rule with two examples. Our first example
is the definition of vectors relying on constraints to enforce the
indexing discipline:
\[
\VectorDefEquality
\]
While our second example consists of the alternative definition of
vector, where we compute over the index to determine which constructor
to offer:
\[
\VectorDef
\]

\Spacedcommand{\FatVector}{\mathbf{Vec}}


\newcommand{\choicesVecEq}{\mathrm{choices}_{=}}
\newcommand{\codeVecEq}{\mathrm{code}_{=}}

\paragraph{Elaboration of inductive families (Fig.~\ref{fig:idesc-elab-idata}):} 
The elaboration of an inductive definitions sets up the environment to
trigger the elaboration of the choices of constructors. To do so, we
first elaborate the telescope of parameters and indices types. We can
then translate the choices by elaborating against the label type
corresponding to the given inductive type.

\begin{example}[Vector, constrained]
The elaboration of constraint-based vectors starts as follow:
\[\Code{
\Rule{\ElabDataPatts{\TypeAnn{\Var{A}}{\Set},
                     \TypeAnn{\Var{n}}{\Nat}}
                    {[\choicesVecEq]}
                    {\FatVector[\ParamBracket{\Var{A}}\IndexBracket{\Var{n}}]}
                    {[\codeVecEq]}}
     {\Code{
\ElabIData{}
          {\Vector{}}
          {\Param{\Var{A}}{\Set}}
          {\Index{\Var{n}}{\Nat}}
          {\Set}
          {[\choicesVecEq]}
          {\\ \qquad
            \Gamma [\Vector{} \mapsto 
                       \TypeAnn{\LamAnn{\Var{A}}{\Set}
                                   \Mu[(\LamAnn{\Var{n}}{\Nat}
                                          \CallDesc{\FatVector[\ParamBracket{\Var{A}}\: 
                                                               \IndexBracket{\Var{n}}]}
                                                   {[\codeVecEq]})]}
                               {\PiTel{\Var{A}}{\Set}
                                \PiTo{\Var{n}}{\Nat} \Set}]}}}
}\]
where
\begin{align*}
  \choicesVecEq &\triangleq 
  \Case{
\Emit{\Vector{\Var{A}}}{\Constraint{\Var{n}}{\Zero}}{\VNil}
\Emit{\Vector{\Var{A}}}{\Constraint{\Var{n}}{\Suc[\Var{m}]}}
     {\VCons[\PiTel{\Var{m}}{\Nat}
             \PiTel{\Var{a}}{\Var{A}}
             \PiTel{\Var{vs}}{\Vector{\Var{A}}[\Var{m}]}]}
}
\\
  \codeVecEq &\triangleq
\begin{array}[t]{l@{\:}l}
\ReturnDesc & \Collection{
                        \begin{array}{l}
                          \Tag{\CN{nil}},\\ 
                          \Tag{\CN{cons}}
                        \end{array}}
\\
&                    \CollectionElim{
                      \begin{array}{l@{\DoReturn}l}
                        \Tag{\CN{nil}} & \DSigma[(\Var{n} \PropEqual \Zero)\: \Lam{\_}{\DUnit}], \\
                        \Tag{\CN{cons}} & \DSigma[{\Nat\: \Lam{\Var{m}}
                                          \DSigma[{\Var{A}\: \Lam{\_}
                                          \DVar[\Pair{\Void}{\Var{m}}] \DTimes
                                          \DSigma[(\Var{n} \PropEqual \Suc[\Var{m}])\: \Lam{\_}\DUnit]}]}]
                      \end{array}
                    }
\end{array}
\end{align*}

\end{example}

\newcommand{\choicesVecComp}{\mathrm{choices}_{\to}}
\newcommand{\codeVecComp}{\mathrm{code}_{\to}}

\begin{example}[Vector, computed]

The same skeleton is used in the alternative definition of vectors,
but the choices of constructors -- and therefore the resulting code --
are different:
{\small
\begin{align*}
  \choicesVecComp &\triangleq 
  \Case{
\By{\Vector{\Var{A}} & \Var{n}}{\NatCase[\Var{n}]}{
\Emit{\quad\Vector{\Var{A}}}{\Zero}{\VNil}
\Emit{\quad\Vector{\Var{A}}}
     {(\Suc[\Var{m}])}
     {\VCons[\PiTel{\Var{a}}{\Var{A}}
             \PiTel{\Var{vs}}{\Vector{\Var{A}}[\Var{m}]}]}}
}
\\
  \codeVecComp &\triangleq
\begin{array}[t]{l@{\:}l}
\ReturnDesc& \Collection{\Tag{\CN{elim}}}\\
           & \CollectionElim{
              \Code{
              \Tag{\CN{elim}} \DoReturn
              \begin{array}[c]{l}
              \NatCase[{\Var{n}\:
                        (\Lam{\Var{n}}{\LabelDesc{\FatVector[\ParamBracket{\Var{A}}\IndexBracket{\Var{n}}]}})  \\
                      \qquad (\ReturnDesc[\Collection{\Tag{\CN{nil}}}\:
                                 \CollectionElim{\Tag{\CN{nil}} 
                                                   \DoReturn
                                                   \DUnit}])\\
                      \qquad (\Lam{\Var{m}}
                             {\ReturnDesc[\Collection{\Tag{\CN{cons}}}\:
                                      \CollectionElim{\Tag{\CN{cons}}
                                                      \DoReturn
                                                      \DSigma[{\Var{A}\: \Lam{\_}
                                                      \DVar[\Pair{\Void}{\Var{m}}] \DTimes \DUnit}]}]})}]
              \end{array}}}
\end{array}
\end{align*}
}

\end{example}


\paragraph{Elaboration of pattern choices (Fig.~\ref{fig:idesc-elab-datapatts}):}
This elaboration process is an extra step that was not necessary with
inductive types. With inductive families, we can both constrain the
index to some particular value or compute over the index to refine the
choice of constructors. Hence, an inductive definition is a list of
pattern choices, potentially ending with a computation over the
indices. Since the case where no index computation is performed is
merely a special case of the rule we give, we save space and do not
write it down explicitly.

The elaboration of pattern choices consists in interpreting the
datatype patterns of each constructor choices. Then, the resulting
labels are used to elaborate these constructors choices. If there is a
computation over indices, we rely on elimination with a motive
\citep{mcbride:elim-2,mcbride:elim} to generate a type theoretic term
from the elimination principle provided by the user. We then interpret
each resulting sub-branch as a pattern choice itself. All in all, this
elaboration step satisfies the following invariant:
\begin{lemma}\label{lemma:idesc-elab-datapatts}

If 
\(\left\{\Code{
\TypeIDataTel{\Gamma}{l} \\
\ElabDataPatts{\Gamma}
                     {\mathrm{patts}}
                     {\LabelDesc{l}}
                     {\mathrm{code}}
}\right.\), then
\(
\TypeJudgment{\Gamma}{\mathrm{code}}{\LabelDesc{l}}
\)

\end{lemma}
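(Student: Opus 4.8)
The plan is to treat this as the apex of a bottom-up stack of soundness lemmas that exactly parallels the inductive-types development of Section~\ref{sec:elab-data-types}. First I would establish the indexed analogues of Lemmas~\ref{lemma:elab-desc-arg}, \ref{lemma:elab-desc-constr} and \ref{lemma:elab-desc-choices}: under the hypothesis $\TypeIDataTel{\Gamma}{l}$, argument elaboration yields a code of type $\IDesc[\InterpretIDataTel{l}]$, constructor elaboration yields a label in $\UId$ together with a code in $\IDesc[\InterpretIDataTel{l}]$, and $\ElabChoices{\Gamma}{cs}{\LabelDesc{l}}{E}{T}$ yields $E : \EnumU$ and $T : \EnumT[E] \To \IDesc[\InterpretIDataTel{l}]$. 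These go by induction on their respective derivations, reusing soundness of type checking (Section~\ref{sec:elab-course}) for the elaborated types $T'$, plus the two auxiliary judgments: $\ElabEqs{\Gamma}{l}{q}$ gives $q : \IDesc[\InterpretIDataTel{l}]$ (by induction on the label, reading off the $\DSigma$ of an equality), and $\ElabRecArgs{\Gamma}{T}{l}{is}$ gives $is : \InterpretIDataTel{l}$, which together discharge the constraint and $\DVar[is]$ cases of argument elaboration. Alongside I would prove the bookkeeping fact that $\ElabIndices{l}{T}{l_T}$ leaves the computed index type unchanged, $\InterpretIDataTel{l_T} = \InterpretIDataTel{l}$: the only thing $\ElabIndices$ does is turn an unconstrained $\IndexBracket{i}$ into a constrained $\Constraint{i}{t}$, and by Figure~\ref{fig:idesc-label} both contribute the same factor $\Times I$ to $\InterpretIDataTel{\_}$.

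Granting all that, I would prove Lemma~\ref{lemma:idesc-elab-datapatts} by induction on the derivation of $\ElabDataPatts{\Gamma}{\mathrm{patts}}{\LabelDesc{l}}{\mathrm{code}}$; there is essentially one rule (Figure~\ref{fig:idesc-elab-datapatts}, the purely-constrained case without a $\DoBy$ being its obvious degenerate instance). Its conclusion has the form $\mathrm{code} = \ReturnDesc[E\: T]$, so by the introduction rule for $\ReturnDesc$ (Figure~\ref{fig:idesc-label}) --- legitimate because $\TypeIDataTel{\Gamma}{l}$ makes $\LabelDesc{l}$ a well-formed type --- it suffices to show (a) $\TypeJudgment{\Gamma}{E}{\EnumU}$ and (b) $\TypeJudgment{\Gamma}{T}{\EnumT[E] \To \IDesc[\InterpretIDataTel{l}]}$. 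For (a), $E = \Collection{\vec{c_{i,j}}, \Tag{\mathrm{elim}}}$: each $c_{i,j}$ is the label output by constructor elaboration, hence of type $\UId$ by the indexed analogue of Lemma~\ref{lemma:elab-desc-constr}, and $\Tag{\mathrm{elim}}$ is a literal label, so $E$ is a well-typed enumeration. For (b), the motive family is constant, so it is enough to check that every branch of the lookup table has type $\IDesc[\InterpretIDataTel{l}]$. The branches $c_{i,j}\mapsto as_{i,j}$ arise from $\ElabChoices{\Gamma}{cs_i}{\LabelDesc{l_i}}{\ldots}{\ldots}$ with $\ElabIndices{l}{pt_i}{l_i}$; the indexed analogue of Lemma~\ref{lemma:elab-desc-choices} gives $as_{i,j} : \IDesc[\InterpretIDataTel{l_i}]$, and the bookkeeping fact identifies $\InterpretIDataTel{l_i}$ with $\InterpretIDataTel{l}$.

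The one genuinely new branch is $\Tag{\mathrm{elim}}$. There the premise $\ElabEWM{\Gamma}{e}{e'}{\overrightarrow{(\overrightarrow{\PiTel{\Var{x_k}}{X_k}}\To\LabelDesc{l_k})}\To\LabelDesc{l_{i+1}}}$ supplies, by the standing assumption of \citet{mcbride:elim-2,mcbride:elim} that the generated term $e'$ inhabits the stated type, the judgment $\TypeJudgment{\Gamma}{e'}{\overrightarrow{(\overrightarrow{\PiTel{\Var{x_k}}{X_k}}\To\LabelDesc{l_k})}\To\LabelDesc{l_{i+1}}}$; inverting the $\Pi$- and $\LabelDesc{\_}$-formation rules on that type then yields, for each $k$, validity of $\Gamma,\overrightarrow{\TypeAnn{x_k}{X_k}}$ and $\TypeIDataTel{\Gamma,\overrightarrow{\TypeAnn{x_k}{X_k}}}{l_k}$ --- precisely the hypotheses needed to invoke the induction hypothesis on the sub-derivations $\ElabDataPatts{\Gamma,\overrightarrow{\TypeAnn{x_k}{X_k}}}{p_k}{\LabelDesc{l_k}}{\mathrm{code}_k}$, giving $\TypeJudgment{\Gamma,\overrightarrow{\TypeAnn{x_k}{X_k}}}{\mathrm{code}_k}{\LabelDesc{l_k}}$. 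Abstracting each $\mathrm{code}_k$ over $\vec{x_k}$, feeding the resulting families to $e'$, and wrapping its $\LabelDesc{l_{i+1}}$-typed output with $\CallDesc{l_{i+1}}{(\cdot)}$ produces a term of type $\IDesc[\InterpretIDataTel{l_{i+1}}]$, which --- using $\ElabIndices{l}{pt_{i+1}}{l_{i+1}}$ and the bookkeeping fact --- equals $\IDesc[\InterpretIDataTel{l}]$, discharging the last branch.

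I expect the main obstacle to be exactly this last branch: one must borrow enough from the black-boxed elimination-with-a-motive construction to obtain the well-formedness of its generated labels $l_k$ and binders $X_k$ that the induction hypothesis demands, and one must be scrupulous that every occurrence of the $\ReturnDesc$ and $\CallDesc{l}{t}$ plumbing in the elaborated code lands in the single type $\IDesc[\InterpretIDataTel{l}]$ --- which rests entirely on the invariance of $\InterpretIDataTel{\_}$ under $\ElabIndices$. Everything else is routine rule-chasing over Figures~\ref{fig:idesc-label} and \ref{fig:elab-families-1}.
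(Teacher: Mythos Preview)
Your proposal is correct and follows essentially the same route as the paper: the paper's own argument (given only as part of the proof of the final soundness theorem) is the same bottom-up chain through Lemmas~\ref{lemma:idesc-elab-recargs}, \ref{lemma:idesc-elab-eqs}, \ref{lemma:idesc-elab-arg}, \ref{lemma:idesc-elab-constr}, \ref{lemma:idesc-elab-choices}, and then Lemma~\ref{lemma:idesc-elab-datapatts} ``using this result and assuming the soundness of elimination with a motive''. You have simply made explicit two things the paper leaves implicit --- the invariance of $\InterpretIDataTel{\_}$ under $\ElabIndices$, and the need to extract well-formedness of the $l_k$ and $X_k$ from the elimination-with-a-motive black box --- both of which are indeed required for the types to line up.
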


Note that we rely on the translation from datatype patterns to data
telescope (Fig.~\ref{fig:idesc-elab-indices}): since we will be
elaborating each individual constructor against these labels, we will
generate the valid equality constraints at the end of each telescope
of arguments.


\begin{example}[Vector, constrained]

The elaboration of datatype patterns simply proceeds over the choices
of constructors, triggering the elaboration of choices on \(\VNil\)
and \(\VCons\):
\[
\Rule{\Code[c]{
      \ElabIndices{\FatVector[\ParamBracket{\Var{A}}\IndexBracket{\Var{n}}]}
                  {\Vector{\Var{A}}[\Constraint{\Var{n}}{\Zero}]}
                  {\FatVector[\ParamBracket{\Var{A}}\Constraint{\Var{n}}{\Zero}]} \\
      \ElabChoices{\TypeAnn{A}{\Set}, \TypeAnn{n}{\Nat}}
                  {\VNil}
                  {\LabelDesc{\FatVector[\ParamBracket{\Var{A}}\Constraint{\Var{n}}{\Zero}]}}
                  {\Collection{\Tag{\CN{nil}}}}
                  {\CollectionElim{\Tag{\CN{nil}} 
                      \DoReturn
                      \DSigma[(\Var{n} \PropEqual \Zero)\: \Lam{\_}{\DUnit}]}}
      \\
      \\
      \ElabIndices{\FatVector[\ParamBracket{\Var{A}}\IndexBracket{\Var{n}}]}
                  {\Vector{\Var{A}}[\Constraint{\Var{n}}{\Suc[\Var{m}]}]}
                  {\FatVector[\ParamBracket{\Var{A}}\Constraint{\Var{n}}{\Suc[\Var{m}]}]} \\
      \Code{
      \ElabChoices{\TypeAnn{A}{\Set}, \TypeAnn{n}{\Nat}}
                  {\VCons[\PiTel{\Var{m}}{\Nat}
                          \PiTel{\Var{a}}{\Var{A}}
                          \PiTel{\Var{vs}}{\Vector{\Var{A}}[\Var{m}]}] \\ \qquad\qquad}
                  {\LabelDesc{\FatVector[\ParamBracket{\Var{A}}\Constraint{\Var{n}}{\Suc[\Var{m}]}]}}
                  {\Collection{\Tag{\CN{cons}}}}
                  {\CollectionElim{\Tag{\CN{cons}} 
                      \DoReturn
                      \DSigma[{\Nat\: \Lam{\Var{m}}
                               \DSigma[{\Var{A}\: \Lam{\_}
                               \DVar[\Pair{\Void}{\Var{m}}] \DTimes
                               \DSigma[(\Var{n} \PropEqual \Suc[\Var{m}])\: \Lam{\_}{\DUnit}]}]}]}}}}}
     {\ElabDataPatts{\TypeAnn{\Var{A}}{\Set},
                     \TypeAnn{\Var{n}}{\Nat}}
                    {[\choicesVecEq]}
                    {\FatVector[\ParamBracket{\Var{A}}\IndexBracket{\Var{n}}]}
                    {[\codeVecEq]}}
\]
where \(\choicesVecEq\) and \(\codeVecEq\) are the same as above.

\end{example}


\begin{example}[Vector, computed]

Similarly for the alternative definition of vectors, this triggers the
elaboration of a motive, of the \(\VNil\) and \(\VCons\) patterns:
\[
\Rule{\Code[c]{
    \Code{
      \ElabIndices{\FatVector[\ParamBracket{\Var{A}}\IndexBracket{\Var{n}}]}
                  {\Vector{\Var{A}}[\Var{n}]}
                  {\FatVector[\ParamBracket{\Var{A}}\IndexBracket{\Var{n}}]} \\
      \ElabEWM{\TypeAnn{A}{\Set}, \TypeAnn{n}{\Nat}}
              {\NatCase[\Var{n}] \\\qquad\qquad}
              {\Lam{\Var{ih_0}\: \Var{ih_n}}
                \NatCase[\Var{n}\:
                        (\Lam{\Var{n}}{\LabelDesc{\FatVector[\ParamBracket{\Var{A}}\IndexBracket{\Var{n}}]}})\: 
                        \Var{ih0}\: \Var{ih_n}] \\\qquad\qquad\qquad\qquad}
              {\LabelDesc{\FatVector[\ParamBracket{\Var{A}}\IndexBracket{\Zero}]} \To
               (\PiTo{\Var{m}}{\Nat} \LabelDesc{\FatVector[\ParamBracket{\Var{A}}\IndexBracket{\Suc[\Var{m}]}]}) \To
               \LabelDesc{\FatVector[\ParamBracket{\Var{A}}\IndexBracket{\Var{n}}]}}} 
      \\
      \ElabDataPatts{\TypeAnn{\Var{A}}{\Set},
                     \TypeAnn{\Var{n}}{\Nat}}
                    {\VNil}
                    {\FatVector[\ParamBracket{\Var{A}}\IndexBracket{\Zero}]}
                    {\ReturnDesc[\Collection{\Tag{\CN{nil}}}\:
                                 \CollectionElim{\Tag{\CN{nil}}
                                                 \DoReturn
                                                 \DUnit}]}
      \\
      \Code{
      \ElabDataPatts{\TypeAnn{\Var{A}}{\Set},
                     \TypeAnn{\Var{n}}{\Nat},
                     \TypeAnn{\Var{m}}{\Nat}}
                    {\VCons[\PiTel{\Var{a}}{\Var{A}}
                            \PiTel{\Var{vs}}{\Vector{\Var{A}}[\Var{m}]}] \\\qquad\qquad}
                    {\FatVector[\ParamBracket{\Var{A}}\IndexBracket{\Suc[\Var{m}]}]}
                    {\ReturnDesc[\Collection{\Tag{\CN{cons}}}\:
                                 \CollectionElim{\Tag{\CN{cons}}
                                                 \DoReturn
                                                 \DSigma[{\Var{A}\: \Lam{\_}
                                                 \DVar[\Pair{\Void}{\Var{m}}] \DTimes \DUnit}]}]}}
      }}
     {\ElabDataPatts{\TypeAnn{\Var{A}}{\Set},
                     \TypeAnn{\Var{n}}{\Nat}}
                    {[\choicesVecComp]}
                    {\FatVector[\ParamBracket{\Var{A}}\IndexBracket{\Var{n}}]}
                    {[\codeVecComp]}}
\]
with \(\choicesVecComp\) and \(\codeVecComp\) defined above.
In turn, this triggers the elaboration of datatype choices for the
\(\VNil\) and \(\VCons\) patterns:
\[
\Rule{\ElabChoices{\TypeAnn{A}{\Set}, \TypeAnn{n}{\Nat}}
                  {\VNil}
                  {\LabelDesc{\FatVector[\ParamBracket{\Var{A}}\IndexBracket{\Zero}]}}
                  {\Collection{\Tag{\CN{nil}}}}
                  {\CollectionElim{\Tag{\CN{nil}} \DoReturn \DUnit}}}
     {\ElabDataPatts{\TypeAnn{\Var{A}}{\Set},
                     \TypeAnn{\Var{n}}{\Nat}}
                    {\VNil}
                    {\FatVector[\ParamBracket{\Var{A}}\IndexBracket{\Zero}]}
                    {\ReturnDesc[\Collection{\Tag{\CN{nil}}}\:
                                 \CollectionElim{\Tag{\CN{nil}}
                                                 \DoReturn
                                                 \DUnit}]}}
\]

\[
\Rule{\Code{
      \ElabChoices{\TypeAnn{A}{\Set}, \TypeAnn{n}{\Nat}, \TypeAnn{m}{\Nat}}
                  {\VCons[{\PiTel{\Var{a}}{\Var{A}}
                           \PiTel{\Var{vs}}{\Vector{\Var{A}}[\Var{m}]}}] \\\qquad\qquad}
                  {\LabelDesc{\FatVector[\ParamBracket{\Var{A}}\IndexBracket{\Suc[\Var{m}]}]}}
                  {\Collection{\Tag{\CN{cons}}}}
                  {\CollectionElim{\Tag{\CN{cons}} 
                      \DoReturn
                      \DSigma[{\Var{A}\: \Lam{\_}
                      \DVar[\Pair{\Void}{\Var{m}}] \DTimes \DUnit}]}}}}
     {\Code{
      \ElabDataPatts{\TypeAnn{\Var{A}}{\Set},
                     \TypeAnn{\Var{n}}{\Nat},
                     \TypeAnn{\Var{m}}{\Nat}}
                    {\VCons[\PiTel{\Var{a}}{\Var{A}}
                            \PiTel{\Var{vs}}{\Vector{\Var{A}}[\Var{m}]}] \\\qquad\qquad}
                    {\FatVector[\ParamBracket{\Var{A}}\IndexBracket{\Suc[\Var{m}]}]}
                    {\ReturnDesc[\Collection{\Tag{\CN{cons}}}\:
                                 \CollectionElim{\Tag{\CN{cons}}
                                                 \DoReturn
                                                 \DSigma[{\Var{A}\: \Lam{\_}
                                                 \DVar[\Pair{\Void}{\Var{m}}] \DTimes \DUnit}]}]}}}
\]

\end{example}


\paragraph{Elaboration of choices (Fig.~\ref{fig:idesc-elab-choices}):}
The elaboration of the choice of datatypes is the same as for
inductive types: we merely collect the tag and code of each individual
constructor and return these as enumerations. This step is subject to
the following soundness property:
\begin{lemma}\label{lemma:idesc-elab-choices}
If 
\(\left\{\Code{
\TypeIDataTel{\Gamma}{l} \\
\ElabChoices{\Gamma}
                   {\mathrm{choices}}
                   {\LabelDesc{l}}
                   {E}
                   {T}
}\right.\), then
\(\left\{\Code{
\TypeJudgment{\Gamma}{E}{\EnumU} \\
\TypeJudgment{\Gamma}{T}{\EnumT[\Meta{ts}] \To \IDesc[\InterpretIDataTel{\Meta{l}}]}
}\right.\)

\end{lemma}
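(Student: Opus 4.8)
The plan is to follow exactly the pattern by which Lemma~\ref{lemma:elab-desc-constr} feeds the non-indexed soundness argument, the only change being that the target universe is now $\IDesc[\InterpretIDataTel{l}]$ rather than $\Desc$. The derivation of $\ElabChoices{\Gamma}{\mathrm{choices}}{\LabelDesc{l}}{E}{T}$ can only end with the single rule of Figure~\ref{fig:idesc-elab-choices}, so by inversion $\mathrm{choices}$ has the shape $c_0 \mid \cdots \mid c_n$, we have $E = \Collection{\vec{t_i}}$ and $T = \CollectionElim{\overrightarrow{t_i \mapsto \mathrm{code}_i}}$, and for each $i$ there is a sub-derivation $\ElabConstr{\Gamma}{c_i}{\LabelDesc{l}}{t_i}{\mathrm{code}_i}$. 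The essential input is the indexed counterpart of Lemma~\ref{lemma:elab-desc-constr} --- the \emph{constructor-soundness lemma} --- which, from $\TypeIDataTel{\Gamma}{l}$ and $\ElabConstr{\Gamma}{c}{\LabelDesc{l}}{t}{\mathrm{code}}$, yields $\TypeJudgment{\Gamma}{t}{\UId}$ and $\TypeJudgment{\Gamma}{\mathrm{code}}{\IDesc[\InterpretIDataTel{l}]}$. Applying it to each of the $n+1$ sub-derivations (the hypothesis $\TypeIDataTel{\Gamma}{l}$ is the same throughout) gives $\TypeJudgment{\Gamma}{t_i}{\UId}$ and $\TypeJudgment{\Gamma}{\mathrm{code}_i}{\IDesc[\InterpretIDataTel{l}]}$ for every $i$.

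It then remains to assemble the two conclusions from these facts, which is pure bookkeeping over the enumeration universe. First, a preliminary observation: $\TypeJudgment{\Gamma}{\InterpretIDataTel{l}}{\Set}$, proved by induction on the derivation of $\TypeIDataTel{\Gamma}{l}$ using the defining clauses of $\InterpretIDataTel{\_}$ in Figure~\ref{fig:idesc-label} (each clause is $\Unit$ or a $\Times$ whose new component is an index type already typed in $\Gamma$ by the corresponding telescope premise); this is needed for $\EnumT[E] \To \IDesc[\InterpretIDataTel{l}]$ to be well formed at all. For $\TypeJudgment{\Gamma}{\Collection{\vec{t_i}}}{\EnumU}$, unfold $\Collection{\vec{t_i}}$ into the nested $\ConsEnum[t_0\:(\cdots\:\ConsEnum[t_n\:\NilEnum]\cdots)]$ and run an easy induction on the constructor list: base case $\TypeJudgment{\Gamma}{\NilEnum}{\EnumU}$, step case from $\TypeJudgment{\Gamma}{t_i}{\UId}$ and the tail's derivation via the $\ConsEnum$ formation rule. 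For $\TypeJudgment{\Gamma}{\CollectionElim{\overrightarrow{t_i \mapsto \mathrm{code}_i}}}{\EnumT[E] \To \IDesc[\InterpretIDataTel{l}]}$ (where $E = \Collection{\vec{t_i}}$), recall from Section~\ref{sec:type-theory} that $\CollectionElim{\ldots}$ abbreviates $\EnumElim[E\:(\Lam{\_}{\IDesc[\InterpretIDataTel{l}]})\:\Pair{\mathrm{code}_0}{\Pair{\cdots}{\Void}}]$; instantiating the constant motive collapses the codomain of $\EnumElim$ to the non-dependent arrow $\EnumT[E] \To \IDesc[\InterpretIDataTel{l}]$, so it suffices to show the lookup tuple inhabits $\spi[E\:(\Lam{\_}{\IDesc[\InterpretIDataTel{l}]})]$, which is another short induction on the list using $\TypeJudgment{\Gamma}{\mathrm{code}_i}{\IDesc[\InterpretIDataTel{l}]}$ --- precisely the tuple-elaboration pattern of Section~\ref{sec:elab-course}.

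The genuine difficulty is not in this lemma but in the constructor-soundness lemma it rests on, and below that in the argument-elaboration lemma over Figure~\ref{fig:idesc-elab-arg}, which is where the indexed setting departs from Lemma~\ref{lemma:elab-desc-arg}. Two new obstacles arise there: (i)~a recursive occurrence elaborated to $\DVar[is]$ is well typed only if $\TypeJudgment{\Gamma}{is}{\InterpretIDataTel{l}}$, which demands auxiliary soundness statements for pattern validation (Figure~\ref{fig:idesc-elab-indices}) and for the index-extraction judgment (Figure~\ref{fig:idesc-elab-recargs}), relating the $\ParamBracket{\cdot}$/$\IndexBracket{\cdot}$/$\Constraint{\cdot}{\cdot}$ structure carried by $l$ to the shape of $\InterpretIDataTel{l}$; and (ii)~the trailing constraint block produced by $\ElabEqs{\Gamma}{l}{q}$ (Figure~\ref{fig:idesc-elab-eqs}) must satisfy $\TypeJudgment{\Gamma}{q}{\IDesc[\InterpretIDataTel{l}]}$, which uses that each constrained index $i$ occurs in $\Gamma$ at exactly the type against which its witness $t'$ is checked, so that $i \PropEqual t'$ is a well-formed proposition and the $\DSigma$/$\DUnit$ nesting typechecks. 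Once those auxiliary statements are established, the constructor lemma follows by the same induction on the telescope of arguments as in the non-indexed case, and the present lemma is then immediate from the reasoning of the first paragraph.
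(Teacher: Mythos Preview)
Your proposal is correct and follows the same approach as the paper: invert the single rule of Figure~\ref{fig:idesc-elab-choices}, apply the constructor-soundness lemma (Lemma~\ref{lemma:idesc-elab-constr}) to each sub-derivation, and assemble the resulting tags and codes into the enumeration and the lookup function. The paper's own proof is a one-line sketch inside the overall soundness theorem (``Using that Lemma over each constructors, we thus obtain the soundness of the elaboration of choices''), so your write-up is considerably more detailed --- in particular your explicit treatment of why $\InterpretIDataTel{l}$ is a well-formed $\Set$ and how the $\spi$/$\EnumElim$ bookkeeping goes through --- but the structure is identical, and your final paragraph correctly anticipates the dependency chain (Lemmas~\ref{lemma:idesc-elab-recargs}, \ref{lemma:idesc-elab-eqs}, \ref{lemma:idesc-elab-arg}, \ref{lemma:idesc-elab-constr}) that the paper also uses.
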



\begin{example}[Vector, constrained]

In the particular example of vector, there is only one choice of
constructor when \(\Constraint{\Var{n}}{\Zero}\) -- namely \(\VNil\)
-- and when \(\Constraint{\Var{n}}{\Suc[\Var{m}]}\) -- namely
\(\VCons\). Therefore, we obtain the elaboration of choices from the
elaboration of the unique constructor, in both cases:
\[\Code{
\Rule{\ElabConstr{\TypeAnn{A}{\Set}, \TypeAnn{n}{\Nat}}
                 {\VNil}
                 {\LabelDesc{\FatVector[\ParamBracket{\Var{A}}\Constraint{\Var{n}}{\Zero}]}}
                 {\Tag{\CN{nil}}}
                 {\DSigma[(\Var{n} \PropEqual \Zero)\: \Lam{\_}{\DUnit}]}}
     {\ElabChoices{\TypeAnn{A}{\Set}, \TypeAnn{n}{\Nat}}
                  {\VNil}
                  {\LabelDesc{\FatVector[\ParamBracket{\Var{A}}\Constraint{\Var{n}}{\Zero}]}}
                  {\Collection{\Tag{\CN{nil}}}}
                  {\CollectionElim{\Tag{\CN{nil}} 
                      \DoReturn
                      \DSigma[(\Var{n} \PropEqual \Zero)\: \Lam{\_}{\DUnit}]}}}
\\
\\
\Rule{\Code{
      \ElabConstr{\TypeAnn{A}{\Set}, \TypeAnn{n}{\Nat}}
                 {\VCons[\PiTel{\Var{m}}{\Nat}
                         \PiTel{\Var{a}}{\Var{A}}
                         \PiTel{\Var{vs}}{\Vector{\Var{A}}[\Var{m}]}] \\ \qquad\qquad}
                 {\LabelDesc{\FatVector[\ParamBracket{\Var{A}}\Constraint{\Var{n}}{\Suc[\Var{m}]}]}}
                 {\Tag{\CN{cons}}}
                 {\DSigma[{\Nat\: \Lam{\Var{m}}
                           \DSigma[{\Var{A}\: \Lam{\_}
                           \DVar[\Pair{\Void}{\Var{m}}] \DTimes
                           \DSigma[(\Var{n} \PropEqual \Suc[\Var{m}])\: \Lam{\_}\DUnit]}]}]}}}
     {\Code{
      \ElabChoices{\TypeAnn{A}{\Set}, \TypeAnn{n}{\Nat}}
                  {\VCons[\PiTel{\Var{m}}{\Nat}
                          \PiTel{\Var{a}}{\Var{A}}
                          \PiTel{\Var{vs}}{\Vector{\Var{A}}[\Var{m}]}] \\ \qquad\qquad}
                  {\LabelDesc{\FatVector[\ParamBracket{\Var{A}}\Constraint{\Var{n}}{\Suc[\Var{m}]}]}}
                  {\Collection{\Tag{\CN{cons}}}}
                  {\CollectionElim{\Tag{\CN{cons}} 
                      \DoReturn
                      \DSigma[{\Nat\: \Lam{\Var{m}}
                               \DSigma[{\Var{A}\: \Lam{\_}
                               \DVar[\Pair{\Void}{\Var{m}}] \DTimes
                               \DSigma[(\Var{n} \PropEqual \Suc[\Var{m}])\: \Lam{\_}\DUnit]}]}]}}}}
}\]

\end{example}


\begin{example}[Vector, computed]

The same situation arises in the alternative definition of vector:
once we have determined which index we are dealing with, there is a
single constructor available. Hence, we move from the elaboration of
choices to the elaboration of the unique constructor:
\[\Code{
\Rule{\ElabConstr{\TypeAnn{A}{\Set}, \TypeAnn{n}{\Nat}}
                 {\VNil}
                 {\LabelDesc{\FatVector[\ParamBracket{\Var{A}}\IndexBracket{\Zero}]}}
                 {\Tag{\CN{nil}}}
                 {\DUnit}}
     {\ElabChoices{\TypeAnn{A}{\Set}, \TypeAnn{n}{\Nat}}
                  {\VNil}
                  {\LabelDesc{\FatVector[\ParamBracket{\Var{A}}\IndexBracket{\Zero}]}}
                  {\Collection{\Tag{\CN{nil}}}}
                  {\CollectionElim{\Tag{\CN{nil}} \DoReturn \DUnit}}}
\\
\\
\Rule{\Code{
      \ElabConstr{\TypeAnn{A}{\Set}, \TypeAnn{n}{\Nat}, \TypeAnn{m}{\Nat}}
                 {\VCons[{\PiTel{\Var{a}}{\Var{A}}
                          \PiTel{\Var{vs}}{\Vector{\Var{A}}[\Var{m}]}}] \\\qquad\qquad}
                 {\LabelDesc{\FatVector[\ParamBracket{\Var{A}}\IndexBracket{\Suc[\Var{m}]}]}}
                 {\Tag{\CN{cons}}}
                 {\DSigma[{\Var{A}\: \Lam{\_}
                  \DVar[\Pair{\Void}{\Var{m}}] \DTimes \DUnit}]}}}
     {\Code{
      \ElabChoices{\TypeAnn{A}{\Set}, \TypeAnn{n}{\Nat}, \TypeAnn{m}{\Nat}}
                  {\VCons[{\PiTel{\Var{a}}{\Var{A}}
                           \PiTel{\Var{vs}}{\Vector{\Var{A}}[\Var{m}]}}] \\\qquad\qquad}
                  {\LabelDesc{\FatVector[\ParamBracket{\Var{A}}\IndexBracket{\Suc[\Var{m}]}]}}
                  {\Collection{\Tag{\CN{cons}}}}
                  {\CollectionElim{\Tag{\CN{cons}} 
                      \DoReturn
                      \DSigma[{\Var{A}\: \Lam{\_}
                      \DVar[\Pair{\Void}{\Var{m}}] \DTimes \DUnit}]}}}}
}\]

\end{example}


\paragraph{Elaboration of constructor (Fig.~\ref{fig:idesc-elab-constr}):}
Again, the elaboration of constructor is not any different from the
inductive types case. It is subject to the following invariant:
\begin{lemma}\label{lemma:idesc-elab-constr}
If
\(\left\{\Code{
\TypeIDataTel{\Gamma}{l} \\
\ElabConstr{\Gamma}
           {c}
           {\LabelDesc{l}}
           {t}
           {\mathrm{code}}
}\right.\), then
\(\left\{\Code{
\TypeJudgment{\Gamma}{t}{\UId} \\
\TypeJudgment{\Gamma}{\mathrm{code}}{\IDesc[\InterpretIDataTel{l}]}
}\right.\)
\end{lemma}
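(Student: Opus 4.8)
The plan is to invert the single rule defining $\ElabConstr$ (Figure~\ref{fig:idesc-elab-constr}). Any derivation of $\ElabConstr{\Gamma}{c}{\LabelDesc{l}}{t}{\mathrm{code}}$ must arise from a constructor term written $c\,\mathrm{args}$, and it hands us two sub-derivations, $\ElabCheck{\Gamma}{\Tag{c}}{\UId}{t}$ and $\ElabArg{\Gamma}{\mathrm{args}}{\LabelDesc{l}}{\mathrm{code}}$. From the first, the soundness theorem for type checking immediately yields $\TypeJudgment{\Gamma}{t}{\UId}$, which is the first conjunct. The second conjunct, $\TypeJudgment{\Gamma}{\mathrm{code}}{\IDesc[\InterpretIDataTel{l}]}$, is precisely what the indexed analogue of Lemma~\ref{lemma:elab-desc-arg} asserts about the $\ElabArg$ sub-derivation. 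So the first real move is to isolate and prove that companion lemma: if $\TypeIDataTel{\Gamma}{l}$ and $\ElabArg{\Gamma}{\mathrm{args}}{\LabelDesc{l}}{\mathrm{code}}$, then $\TypeJudgment{\Gamma}{\mathrm{code}}{\IDesc[\InterpretIDataTel{l}]}$ --- it is needed anyway, bottom-up, for Lemma~\ref{lemma:idesc-elab-choices} and the eventual soundness theorem for inductive families.

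For that companion lemma I would do induction on the $\ElabArg$ derivation, one case per rule of Figure~\ref{fig:idesc-elab-arg}. The non-recursive argument case $\PiTel{\Var{x}}{T}\,\Delta$ uses soundness of type checking to get $\TypeJudgment{\Gamma}{T'}{\Set}$, the induction hypothesis in the context $\Gamma,\TypeAnn{x}{T'}$ for $\mathrm{code}_\Delta$, and then the formation rule for $\DSigma$; the exponential-argument case is the same story with $\DPi$ and $\DTimes$ and two uses of the induction hypothesis. The recursive case needs a small auxiliary lemma stating that the judgment extracting the index tuple of a recursive occurrence (Figure~\ref{fig:idesc-elab-recargs}) produces a well-typed tuple, i.e.\ $\TypeJudgment{\Gamma}{is}{\InterpretIDataTel{l}}$, proved by a routine induction following the clauses of $\InterpretIDataTel{\_}$; then $\DVar[is]$ and $\DTimes$ close the case. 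The base case $\epsilon$ needs a second auxiliary lemma, that $\ElabEqs{\Gamma}{l}{q}$ implies $\TypeJudgment{\Gamma}{q}{\IDesc[\InterpretIDataTel{l}]}$ (Figure~\ref{fig:idesc-elab-eqs}); this I would prove by induction on $l$, using that $\DUnit$ inhabits $\IDesc[J]$ for every well-formed $J$ and that in the $\Constraint{i}{t}$ clause the hypotheses $\TypeAnn{i}{I}\in\Gamma$ and $\ElabCheck{\Gamma}{t}{I}{t'}$ make $(i \PropEqual t')$ a well-formed type, so the $\DSigma$ rule applies.

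The step I expect to be the actual work is not a single deep argument but keeping the index type $\InterpretIDataTel{l}$ in sync across the three auxiliary judgments: $\ElabArg$ threads the label $l$ unchanged down to its $\epsilon$ end; $\ElabRecArgs$ must deliver, for each recursive occurrence, an element of \emph{exactly} $\InterpretIDataTel{l}$; and $\ElabEqs$ must produce the trailing equality block at that same $\IDesc[\InterpretIDataTel{l}]$. The subtle point is the unconstrained-index clause of $\ElabEqs$, which widens the target to $\IDesc[\InterpretIDataTel{l}\Times I]$ while reusing the same $q$: this still typechecks only because the $q$ that $\ElabEqs$ emits is built solely from $\DUnit$ and $\DSigma$ of propositional equalities and never contains a $\DVar$ node, hence lives in $\IDesc[J]$ for any $J$. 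Once these invariants are pinned down, the induction is structurally identical to the proof of Lemma~\ref{lemma:elab-desc-arg} in Section~\ref{sec:elab-data-types}, so I would write out only the cases that genuinely differ --- the recursive case and the $\epsilon$ case --- and refer back for the rest.
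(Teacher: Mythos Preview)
Your proposal is correct and follows the same route as the paper: invert the single $\ElabConstr$ rule, discharge the tag conjunct via soundness of type checking, and obtain the code conjunct from the soundness of $\ElabArg$, which in turn rests on the two auxiliary results about $\ElabRecArgs$ and $\ElabEqs$. The three lemmas you rediscover are already stated in the paper as Lemma~\ref{lemma:idesc-elab-arg}, Lemma~\ref{lemma:idesc-elab-recargs}, and Lemma~\ref{lemma:idesc-elab-eqs}, and the paper's soundness proof chains them in exactly the order you propose; your remark that the $q$ produced by $\ElabEqs$ contains no $\DVar$ and hence retypes at $\IDesc[J]$ for any $J$ is a genuine technical detail the paper leaves implicit.
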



\begin{example}[Vector, constrained]

Elaboration of the constructors is straightforward, simply switching
to the elaboration of the arguments:
\[\Code[c]{
\Rule{\ElabArg{\TypeAnn{A}{\Set}, \TypeAnn{n}{\Nat}}
        {\epsilon}
        {\LabelDesc{\FatVector[\ParamBracket{\Var{A}}\: \Constraint{\Var{n}}{\Zero}]}}
        {\DSigma[(\Var{n} \PropEqual \Zero)\: \Lam{\_}{\DUnit}]}}
     {\ElabConstr{\TypeAnn{A}{\Set}, \TypeAnn{n}{\Nat}}
                 {\VNil}
                 {\LabelDesc{\FatVector[\ParamBracket{\Var{A}}\: \Constraint{\Var{n}}{\Zero}]}}
                 {\Tag{\CN{nil}}}
                 {\DSigma[(\Var{n} \PropEqual \Zero)\: \Lam{\_}{\DUnit}]}}
\\
\\
\Rule{\Code{
      \ElabArg{\TypeAnn{A}{\Set}, \TypeAnn{n}{\Nat}}
              {\PiTel{\Var{m}}{\Nat}
               \PiTel{\Var{a}}{\Var{A}}
               \PiTel{\Var{vs}}{\Vector{\Var{A}}[\Var{m}]} \\ \qquad\qquad}
              {\LabelDesc{\FatVector[\ParamBracket{\Var{A}}\: \Constraint{\Var{n}}{\Suc[\Var{m}]}]}}
              {\DSigma[{\Nat\: \Lam{\Var{m}}
               \DSigma[{\Var{A}\: \Lam{\_}
               \DVar[\Pair{\Void}{\Var{m}}] \DTimes
               \DSigma[(\Var{n} \PropEqual \Suc[\Var{m}])\: \Lam{\_}\DUnit]}]}]}}}
     {\Code{
      \ElabConstr{\TypeAnn{A}{\Set}, \TypeAnn{n}{\Nat}}
                 {\VCons[\PiTel{\Var{m}}{\Nat}
                        \PiTel{\Var{a}}{\Var{A}}
                        \PiTel{\Var{vs}}{\Vector{\Var{A}}[\Var{m}]}]  \\ \qquad\qquad}
                 {\LabelDesc{\FatVector[\ParamBracket{\Var{A}}\: \Constraint{\Var{n}}{\Suc[\Var{m}]}]}}
                 {\Tag{\CN{cons}}}
                 {\DSigma[{\Nat\: \Lam{\Var{m}}
                  \DSigma[{\Var{A}\: \Lam{\_}
                  \DVar[\Pair{\Void}{\Var{m}}] \DTimes
                  \DSigma[(\Var{n} \PropEqual \Suc[\Var{m}])\: \Lam{\_}\DUnit]}]}]}}}
}\]

\end{example}


\begin{example}[Vector, computed]

Similarly for the alternative definition, we have:
\[\Code[c]{
\Rule{\ElabArg{\TypeAnn{A}{\Set}, \TypeAnn{n}{\Nat}}
              {\epsilon}
              {\LabelDesc{\FatVector[\ParamBracket{\Var{A}}\: \IndexBracket{\Zero}]}}
              {\DUnit}}
     {\ElabConstr{\TypeAnn{A}{\Set}, \TypeAnn{n}{\Nat}}
                 {\VNil}
                 {\LabelDesc{\FatVector[\ParamBracket{\Var{A}}\IndexBracket{\Zero}]}}
                 {\Tag{\CN{nil}}}
                 {\DUnit}}
\\
\\
\Rule{\ElabArg{\TypeAnn{A}{\Set}, \TypeAnn{n}{\Nat}, \TypeAnn{m}{\Nat}}
              {\PiTel{\Var{a}}{\Var{A}}
               \PiTel{\Var{vs}}{\Vector{\Var{A}}[\Var{m}]}}
              {\LabelDesc{\FatVector[\ParamBracket{\Var{A}}\: \IndexBracket{\Suc[\Var{m}]}]}}
              {\DSigma[{\Var{A}\: \Lam{\_}
               \DVar[\Pair{\Void}{\Var{m}}] \DTimes \DUnit}]}}
     {\Code{
      \ElabConstr{\TypeAnn{A}{\Set}, \TypeAnn{n}{\Nat}, \TypeAnn{m}{\Nat}}
                 {\VCons[\PiTel{\Var{a}}{\Var{A}}
                         \PiTel{\Var{vs}}{\Vector{\Var{A}}[\Var{m}]}]\\ \qquad\qquad}
                 {\LabelDesc{\FatVector[\ParamBracket{\Var{A}}\IndexBracket{\Suc[\Var{m}]}]}}
                 {\Tag{\CN{cons}}}
                 {\DSigma[{\Var{A}\: \Lam{\_}
                  \DVar[\Pair{\Void}{\Var{m}}] \DTimes \DUnit}]}}}
}\]

\end{example}


\paragraph{Elaboration of arguments (Fig.~\ref{fig:idesc-elab-arg}):}
The elaboration of arguments follows the same principle as for
inductive types. However, when elaborating a recursive argument, we
must extract the indices for which that recursive step is
taken. Besides, after having encoded the list of arguments, we must
switch to translating the potential equality constraints, which are
dictacted by the label type we are elaborating against. This step is
subject to the following soundness property:
\begin{lemma}\label{lemma:idesc-elab-arg}
If 
\(\left\{\Code{
  \TypeIDataTel{\Gamma}{\Meta{l}} \\
  \ElabArg{\Gamma}
          {\mathrm{args}}
          {\LabelDesc{l}}
          {\mathrm{code}}
}\right.\), then
\(
\TypeJudgment{\Gamma}{\mathrm{code}}{\IDesc[\InterpretIDataTel{\Meta{l}}]}
\)
\end{lemma}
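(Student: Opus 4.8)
The plan is to prove Lemma~\ref{lemma:idesc-elab-arg} by induction on the derivation of $\ElabArg{\Gamma}{\mathrm{args}}{\LabelDesc{l}}{\mathrm{code}}$, case-splitting on which of the four rules of Figure~\ref{fig:idesc-elab-arg} concludes it, and appealing in each case to the soundness theorem for type checking together with the formation rules for $\DSigma$, $\DPi$, $\DTimes$ and $\DVar$ from Figure~\ref{fig:universe-families}. Before starting that induction I would dispatch three routine auxiliary facts. The first is a weakening lemma for data telescopes: if $\TypeIDataTel{\Gamma}{l}$ and $\Gamma'$ extends $\Gamma$ by a fresh hypothesis, then $\TypeIDataTel{\Gamma'}{l}$, and moreover the index type $\InterpretIDataTel{l}$ is unchanged and still a well-formed $\Set$. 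This is needed because the $\DSigma$ and $\DPi$ cases recurse under an extended context $\Gamma, \TypeAnn{\Var{x}}{T'}$, whereas $\InterpretIDataTel{l}$ is built only from the declared parameter and index types, all of which already live in $\Gamma$ and therefore do not mention $\Var{x}$.

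The second auxiliary fact concerns the constraint-elaboration judgment of Figure~\ref{fig:idesc-elab-eqs}: if $\ContextValid{\Gamma}$ and $\ElabEqs{\Gamma}{l}{q}$, then $\TypeJudgment{\Gamma}{q}{\IDesc[J]}$ for \emph{every} $J : \Set$ in $\Gamma$. The key observation is that $q$ is assembled solely from $\DUnit$ and nested $\DSigma$-over-equality layers and never uses $\DVar$, so it is genuinely polymorphic in the index type; the induction on the $\ElabEqs$ derivation is immediate, using the soundness theorem for type checking in the constraint step to see that the elaborated witness $t'$ has the type $I$ of the constrained variable, hence that $(\Var{i} \PropEqual t')$ is a bona fide $\Set$. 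Instantiating $J$ with $\InterpretIDataTel{l}$ then settles the empty-telescope rule $\ElabArg{\Gamma}{\epsilon}{\LabelDesc{l}}{q}$ outright.

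The third auxiliary fact is the analogue for index extraction (Figure~\ref{fig:idesc-elab-recargs}): if $\TypeIDataTel{\Gamma}{l}$ and $\ElabRecArgs{\Gamma}{T}{l}{is}$ --- equivalently, $T$ is accepted by the pattern-matching judgment of Figure~\ref{fig:idesc-elab-indices} as a legitimate recursive occurrence of the datatype being defined --- then $\TypeJudgment{\Gamma}{is}{\InterpretIDataTel{l}}$. This is proved by induction along $l$: the base case produces $\Void : \Unit$; a $\ParamBracket{p}$ step leaves both $is$ and $\InterpretIDataTel{l}$ untouched; and each $\IndexBracket{i}$ or $\Constraint{i}{t}$ step pairs the collected index argument onto $is$, matching the $\InterpretIDataTel{l} \Times \Var{I}$ clause of the telescope interpretation in Figure~\ref{fig:idesc-label}, with the new component well-typed by the soundness theorem for type checking applied to that argument. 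With this lemma in hand the recursive-argument rule gives $\DVar[is] : \IDesc[\InterpretIDataTel{l}]$.

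Given these three facts, the four cases of the main induction are mechanical. For the non-recursive $\DSigma$ rule, soundness of type checking gives $\TypeJudgment{\Gamma}{T'}{\Set}$; the induction hypothesis, applied under $\Gamma, \TypeAnn{\Var{x}}{T'}$ (legitimate by the weakening lemma), gives $\TypeJudgment{\Gamma, \TypeAnn{\Var{x}}{T'}}{\mathrm{code}_\Delta}{\IDesc[\InterpretIDataTel{l}]}$; and the $\DSigma$ former closes the case, with the code formed over the elaborated $T'$. The exponential $\DPi$ rule follows the same pattern, using the $\DPi$ former for the $\nabla$-telescope and one further application of the $\DTimes$ former for $\mathrm{code}_\Delta$. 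The recursive $\DVar$ rule combines the index-extraction lemma with the $\DVar$ and $\DTimes$ formers. The empty rule is exactly the instantiation of the constraint-elaboration lemma noted above. The only step carrying real content --- and hence the expected main obstacle --- is the index-extraction lemma in the recursive case: it is there that the correspondence between the \emph{syntactic} shape of a recursive occurrence $\Canonical{D}\:\vec{p}\:\vec{i}$ and the \emph{computed} index type $\InterpretIDataTel{l}$ has to be pinned down, in particular checking that each index argument handed to the recursive call elaborates to a well-typed internal term of the expected index type, so that the $\DVar$ former is indeed applicable.
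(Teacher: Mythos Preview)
Your proposal is correct and follows essentially the same route as the paper: the paper proves Lemma~\ref{lemma:idesc-elab-arg} ``by induction over the telescope of arguments'' after first establishing Lemma~\ref{lemma:idesc-elab-eqs} (your second auxiliary fact, stated there for the specific index $\InterpretIDataTel{l}$ rather than polymorphically) and Lemma~\ref{lemma:idesc-elab-recargs} (your third auxiliary fact, verbatim). Your explicit weakening lemma for data telescopes is a detail the paper leaves implicit, and your closing remark that the real content lies in the index-extraction step matches the paper's ordering of the auxiliary lemmas.
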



\begin{example}[Vector, constrained]

We then elaborate the arguments by unfolding the telescope, at which
point we switch to elaborating the constraints. We do so immediatly in
the \(\VNil\) case:
\[
\Rule{\ElabEqs{\TypeAnn{A}{\Set}, \TypeAnn{n}{\Nat}}
              {\FatVector[\ParamBracket{\Var{A}}\: \Constraint{\Var{n}}{\Zero}]}
              {\DSigma[(\Var{n} \PropEqual \Zero)\: \Lam{\_}{\DUnit}]}}
     {\ElabArg{\TypeAnn{A}{\Set}, \TypeAnn{n}{\Nat}}
              {\epsilon}
              {\LabelDesc{\FatVector[\ParamBracket{\Var{A}}\: \Constraint{\Var{n}}{\Zero}]}}
              {\DSigma[(\Var{n} \PropEqual \Zero)\: \Lam{\_}{\DUnit}]}}
\]
While a few steps are necessary to elaborate the arguments in the
\(\VCons\) case, including the elaboration of the recursive call:
\[
\Rule{
  \Rule{
    \Rule{
      \Rule{
      \ElabEqs{\TypeAnn{A}{\Set}, \TypeAnn{n}{\Nat}, \TypeAnn{m}{\Nat}}
              {\FatVector[\ParamBracket{\Var{A}}\: \Constraint{\Var{n}}{\Suc[\Var{m}]}]}
              {\DSigma[(\Var{n} \PropEqual \Suc[\Var{m}])\: \Lam{\_}{\DUnit}]}}
           {\Code{
      \ElabRecArgs{\TypeAnn{A}{\Set}, \TypeAnn{n}{\Nat}, \TypeAnn{m}{\Nat}}
                  {\Vector{\Var{A}}[\Var{m}]}
                  {\FatVector[\ParamBracket{\Var{A}}\: \Constraint{\Var{n}}{\Suc[\Var{m}]}]}
                  {\Pair{\Void}{\Var{m}}}
      \\
      \ElabArg{\TypeAnn{A}{\Set}, \TypeAnn{n}{\Nat}, \TypeAnn{m}{\Nat}}
              {\epsilon}
              {\LabelDesc{\FatVector[\ParamBracket{\Var{A}}\: \Constraint{\Var{n}}{\Suc[\Var{m}]}]}}
              {\DSigma[(\Var{n} \PropEqual \Suc[\Var{m}])\: \Lam{\_}\DUnit]}}}}
         {
      \ElabArg{\TypeAnn{A}{\Set}, \TypeAnn{n}{\Nat}, \TypeAnn{m}{\Nat}}
              {\PiTel{\Var{vs}}{\Vector{\Var{A}}[\Var{m}]}}
              {\LabelDesc{\FatVector[\ParamBracket{\Var{A}}\: \Constraint{\Var{n}}{\Suc[\Var{m}]}]}}
              {\DVar[\Pair{\Void}{\Var{m}}] \DTimes
               \DSigma[(\Var{n} \PropEqual \Suc[\Var{m}])\: \Lam{\_}\DUnit]}}}
       {\Code{
      \ElabArg{\TypeAnn{A}{\Set}, \TypeAnn{n}{\Nat}, \TypeAnn{m}{\Nat}}
              {\PiTel{\Var{a}}{\Var{A}}
               \PiTel{\Var{vs}}{\Vector{\Var{A}}[\Var{m}]} \\ \qquad\qquad}
              {\LabelDesc{\FatVector[\ParamBracket{\Var{A}}\: \Constraint{\Var{n}}{\Suc[\Var{m}]}]}}
              {\DSigma[{\Var{A}\: \Lam{\_}
               \DVar[\Pair{\Void}{\Var{m}}] \DTimes
               \DSigma[(\Var{n} \PropEqual \Suc[\Var{m}])\: \Lam{\_}\DUnit]}]}}}}
     {\Code{
      \ElabArg{\TypeAnn{A}{\Set}, \TypeAnn{n}{\Nat}}
              {\PiTel{\Var{m}}{\Nat}
               \PiTel{\Var{a}}{\Var{A}}
               \PiTel{\Var{vs}}{\Vector{\Var{A}}[\Var{m}]} \\ \qquad\qquad}
              {\LabelDesc{\FatVector[\ParamBracket{\Var{A}}\: \Constraint{\Var{n}}{\Suc[\Var{m}]}]}}
              {\DSigma[{\Nat\: \Lam{\Var{m}}
               \DSigma[{\Var{A}\: \Lam{\_}
               \DVar[\Pair{\Void}{\Var{m}}] \DTimes
               \DSigma[(\Var{n} \PropEqual \Suc[\Var{m}])\: \Lam{\_}\DUnit]}]}]}}}
\]

\end{example}


\begin{example}[Vector, computed]

In the alternative definition, the process is similar:
\[\Code[c]{
\Rule{\ElabEqs{\TypeAnn{A}{\Set}, \TypeAnn{n}{\Nat}}
              {\FatVector[\ParamBracket{\Var{A}}\IndexBracket{\Zero}]}
              {\DUnit}}
     {\ElabArg{\TypeAnn{A}{\Set}, \TypeAnn{n}{\Nat}}
              {\epsilon}
              {\LabelDesc{\FatVector[\ParamBracket{\Var{A}}\IndexBracket{\Zero}]}}
              {\DUnit}}
\\
\\
\Rule{
  \Rule{
    \Rule{
      \ElabEqs{\TypeAnn{A}{\Set}, \TypeAnn{n}{\Nat}, \TypeAnn{m}{\Nat}}
              {\FatVector[\ParamBracket{\Var{A}}\IndexBracket{\Suc[\Var{m}]}]}
              {\DUnit}}
     {\Code{
      \ElabRecArgs{\TypeAnn{A}{\Set}, \TypeAnn{n}{\Nat}, \TypeAnn{m}{\Nat}}
            {\Vector{\Var{A}}[\Var{m}]}
            {\FatVector[\ParamBracket{\Var{A}}\IndexBracket{\Suc[\Var{m}]}]}
            {\Pair{\Void}{\Var{m}}}

      \\
      \ElabArg{\TypeAnn{A}{\Set}, \TypeAnn{n}{\Nat}, \TypeAnn{m}{\Nat}}
              {\epsilon}
              {\LabelDesc{\FatVector[\ParamBracket{\Var{A}}\IndexBracket{\Suc[\Var{m}]}]}}
              {\DUnit}}}}
     {\ElabArg{\TypeAnn{A}{\Set}, \TypeAnn{n}{\Nat}, \TypeAnn{m}{\Nat}}
              {\PiTel{\Var{vs}}{\Vector{\Var{A}}[\Var{m}]}}
              {\LabelDesc{\FatVector[\ParamBracket{\Var{A}}\IndexBracket{\Suc[\Var{m}]}]}}
              {\DVar[\Pair{\Void}{\Var{m}}] \DTimes \DUnit}}}
     {\ElabArg{\TypeAnn{A}{\Set}, \TypeAnn{n}{\Nat}, \TypeAnn{m}{\Nat}}
              {\PiTel{\Var{a}}{\Var{A}}
               \PiTel{\Var{vs}}{\Vector{\Var{A}}[\Var{m}]}}
              {\LabelDesc{\FatVector[\ParamBracket{\Var{A}}\IndexBracket{\Suc[\Var{m}]}]}}
              {\DSigma[{\Var{A}\: \Lam{\_}
               \DVar[\Pair{\Void}{\Var{m}}] \DTimes \DUnit}]}}
}\]

\end{example}


\paragraph{Elaboration of constraints (Fig.~\ref{fig:idesc-elab-eqs}):}
In order to generate the equality constraints, we simply traverse the
label we are elaborating against. On constraints, we generate the
corresponding equality constraint, using whatever propositional
equality the system has to offer. On parameters and (unconstrained)
indices, we simply go through. This step satisfies the following
property:
\begin{lemma}\label{lemma:idesc-elab-eqs}
If 
\(\left\{\Code{
\TypeIDataTel{\Gamma}{l} \\
\ElabEqs{\Gamma}
        {l}
        {q}
}\right.\), then
\(\TypeJudgment{\Gamma}{q}{\IDesc[\InterpretIDataTel{l}]}\).

\end{lemma}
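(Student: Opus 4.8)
The plan is to prove the lemma by structural induction on the label $l$ --- equivalently, by induction on the derivation of $\ElabEqs{\Gamma}{l}{q}$, since each rule of Figure~\ref{fig:idesc-elab-eqs} consumes exactly one telescope entry of $l$. Before starting, I would \emph{generalise} the statement so that the induction goes through: I will show that whenever $\TypeIDataTel{\Gamma}{l}$ and $\ElabEqs{\Gamma}{l}{q}$, then $\TypeJudgment{\Gamma}{q}{\IDesc[J]}$ for \emph{every} $J : \Set$ well-formed in $\Gamma$. The lemma as stated is then the instance $J = \InterpretIDataTel{l}$, which is a legitimate choice because inversion on $\TypeIDataTel{\Gamma}{l}$ exhibits every index type occurring in $l$ as a well-formed $\Set$, so that $\InterpretIDataTel{l} : \Set$ holds in $\Gamma$.

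With this generalisation the four cases are routine. For the base label $\mathbf{D}$ we have $q = \DUnit$, and $\TypeJudgment{\Gamma}{\DUnit}{\IDesc[J]}$ holds for every $J$ by the formation rule for $\DUnit$ (context validity coming from the base case of the $\idatatel{}$-judgment). For a parameter entry $l\:\ParamBracket{p}$ and for an unconstrained index entry $l\:\IndexBracket{i}$ the emitted code is literally the code emitted for $l$, so the generalised induction hypothesis delivers the goal immediately --- this is exactly where the generalisation pays off, since on the nose $\InterpretIDataTel{l\:\IndexBracket{i}} = \InterpretIDataTel{l} \Times I$ differs from $\InterpretIDataTel{l}$. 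For a constraint entry $l\:\Constraint{i}{t}$ we have $q = \DSigma[(i \PropEqual t')\: \Lam{\_}{q_0}]$ with $\ElabEqs{\Gamma}{l}{q_0}$; inversion on $\TypeIDataTel{\Gamma}{l\:\Constraint{i}{t}}$ gives $\TypeAnn{i}{I} \in \Gamma$, soundness of type checking applied to $\ElabCheck{\Gamma}{t}{I}{t'}$ gives $\TypeJudgment{\Gamma}{t'}{I}$, hence $i \PropEqual t'$ is a well-formed proposition in $\Set$; the induction hypothesis instantiated at the very same $J$ gives $\TypeJudgment{\Gamma}{q_0}{\IDesc[J]}$, so $\Lam{\_}{q_0} : (i \PropEqual t') \To \IDesc[J]$, and the $\DSigma$ formation rule of $\IDesc$ closes the case.

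The only point that needs a moment's thought --- and the sole real obstacle --- is precisely the mismatch flagged above: taken verbatim, the code produced by $\ElabEqs$ does not vary with the index type even though $\InterpretIDataTel{\_}$ grows at each index position, so a naive induction on the \emph{stated} formulation stalls at the index cases. The observation that $\ElabEqs$ only ever emits $\DUnit$ and $\DSigma$-over-equalities, never $\DVar$, and that such codes mention their index type only as an implicit parameter, is what justifies the "for every $J$" generalisation and dissolves the difficulty. Once in place, this lemma is the missing ingredient for the $\epsilon$-rule of argument elaboration and slots directly into the proof of Lemma~\ref{lemma:idesc-elab-arg}, and thence into the overall soundness argument, mirroring the bottom-up structure of Section~\ref{sec:elab-data-types}.
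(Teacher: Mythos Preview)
Your proof is correct, and in fact more detailed than what the paper provides: the paper does not give an explicit proof of this lemma, merely invoking it as one step in the bottom-up chain assembling the overall soundness theorem. Your identification of the index-type mismatch in the $\IndexBracket{i}$ and $\Constraint{i}{t}$ cases, and your remedy of generalising over the index $J$, is exactly the observation needed to make a structural induction go through; the paper leaves this implicit. The key enabling fact you isolate --- that the codes emitted by $\ElabEqs$ are built only from $\DUnit$ and $\DSigma$, never $\DVar$, and hence are parametric in the index type of $\IDesc$ --- is the substantive content of the lemma, and you have stated it clearly.
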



\begin{example}[Vector, constrained]

We elaborate the constraints in the \(\VNil\) case -- constraining
\(n\) to \(\Zero\) -- and in the \(\VCons\) case -- constraining \(n\)
to \(\Suc[m]\):
\[
\Rule{
  \Rule{
      \ElabEqs{\TypeAnn{A}{\Set}, \TypeAnn{n}{\Nat}}
              {\FatVector}
              {\DUnit}}{
      \ElabEqs{\TypeAnn{A}{\Set}, \TypeAnn{n}{\Nat}}
              {\FatVector[\ParamBracket{\Var{A}}]}
              {\DUnit}}}
     {\ElabEqs{\TypeAnn{A}{\Set}, \TypeAnn{n}{\Nat}}
              {\FatVector[\ParamBracket{\Var{A}}\: \Constraint{\Var{n}}{\Zero}]}
              {\DSigma[(\Var{n} \PropEqual \Zero)\: \Lam{\_}{\DUnit}]}}
\qquad
\Rule{
  \Rule{
      \ElabEqs{\TypeAnn{A}{\Set}, \TypeAnn{n}{\Nat}, \TypeAnn{m}{\Nat}}
              {\FatVector}
              {\DUnit}}
       {
      \ElabEqs{\TypeAnn{A}{\Set}, \TypeAnn{n}{\Nat}, \TypeAnn{m}{\Nat}}
              {\FatVector[\ParamBracket{\Var{A}}]}
              {\DUnit}}}
     {\ElabEqs{\TypeAnn{A}{\Set}, \TypeAnn{n}{\Nat}, \TypeAnn{m}{\Nat}}
              {\FatVector[\ParamBracket{\Var{A}}\: \Constraint{\Var{n}}{\Suc[\Var{m}]}]}
              {\DSigma[(\Var{n} \PropEqual \Suc[\Var{m}])\: \Lam{\_}{\DUnit}]}}
\]

\end{example}


\begin{example}[Vector, computed]

No equations are generated and, indeed, needed for the alternative
definition of vectors:
\[
\Rule{
  \Rule{\ElabEqs{\TypeAnn{A}{\Set}, \TypeAnn{n}{\Nat}}
                {\FatVector}
                {\DUnit}}
     {\ElabEqs{\TypeAnn{A}{\Set}, \TypeAnn{n}{\Nat}}
              {\FatVector[\ParamBracket{\Var{A}}]}
              {\DUnit}}}
     {\ElabEqs{\TypeAnn{A}{\Set}, \TypeAnn{n}{\Nat}}
              {\FatVector[\ParamBracket{\Var{A}}\IndexBracket{\Zero}]}
              {\DUnit}}
\qquad
\Rule{
  \Rule{
      \ElabEqs{\TypeAnn{A}{\Set}, \TypeAnn{n}{\Nat}, \TypeAnn{m}{\Nat}}
              {\FatVector}
              {\DUnit}}
     {\ElabEqs{\TypeAnn{A}{\Set}, \TypeAnn{n}{\Nat}, \TypeAnn{m}{\Nat}}
              {\FatVector[\ParamBracket{\Var{A}}]}
              {\DUnit}}}
     {\ElabEqs{\TypeAnn{A}{\Set}, \TypeAnn{n}{\Nat}, \TypeAnn{m}{\Nat}}
              {\FatVector[\ParamBracket{\Var{A}}\IndexBracket{\Suc[\Var{m}]}]}
              {\DUnit}}
\]

\end{example}


\paragraph{Extraction of indices (Fig.~\ref{fig:idesc-elab-recargs}):}
On the recursive calls, we must extract the indices at which the call
is made. To do so, we match the type definition with the datatype
label. Parameters are ignored while indices are paired together. On
the datatype name, we inhabit \(\Unit\). This ensures the following
soundness property:
\begin{lemma}\label{lemma:idesc-elab-recargs}
If 
\(\left\{\Code{
  \TypeIDataTel{\Gamma}{l} \\
  \ElabRecArgs{\Gamma}{T}{l}{is}
}\right.\), then
\(\TypeJudgment{\Gamma}{is}{\InterpretIDataTel{l}}\).
\end{lemma}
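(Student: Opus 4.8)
The plan is to argue by induction on the derivation of $\ElabRecArgs{\Gamma}{T}{l}{is}$, whose four rules are directed entirely by the shape of the label $l$ (namely $\mathbf{D}$, $l'\,\ParamBracket{p}$, $l'\,\IndexBracket{i'}$, $l'\,\Constraint{i'}{t}$). In each case I invert the remaining hypothesis $\TypeIDataTel{\Gamma}{l}$ --- whose formation rules are directed by $l$ in exactly the same way --- and I unfold $\InterpretIDataTel{l}$ by its matching defining clause, so that the case analyses on the last rule of $\ElabRecArgs$, on the last rule of $\TypeIDataTel$, and on the definition of $\InterpretIDataTel{\_}$ run in lockstep. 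This mirrors the bottom-up, structural-induction pattern already used for Lemmas~\ref{lemma:idesc-elab-arg}--\ref{lemma:idesc-elab-constr}.

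The individual cases are then short. For $l=\mathbf{D}$ we have $T=D$ and $is=\Void$; inverting $\TypeIDataTel{\Gamma}{\mathbf{D}}$ recovers $\ContextValid{\Gamma}$, $\InterpretIDataTel{\mathbf{D}}$ unfolds to $\Unit$, and $\TypeJudgment{\Gamma}{\Void}{\Unit}$ holds by the formation/introduction rules for the unit type. For $l=l'\,\ParamBracket{p}$ we have $T=T'\,p$ with $is$ unchanged; inversion of $\TypeIDataTel$ yields $\TypeIDataTel{\Gamma}{l'}$, the induction hypothesis gives $\TypeJudgment{\Gamma}{is}{\InterpretIDataTel{l'}}$, and since $\InterpretIDataTel{l'\,\ParamBracket{p}}$ unfolds to $\InterpretIDataTel{l'}$ we are done --- parameters contribute nothing to the extracted index tuple. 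For $l=l'\,\IndexBracket{i'}$ we have $T=T'\,i$ and $is=\Pair{is'}{i}$; inversion gives $\TypeIDataTel{\Gamma}{l'}$ together with $\TypeJudgment{\Gamma}{i'}{I}$ for the declared index type $I$, the induction hypothesis gives $\TypeJudgment{\Gamma}{is'}{\InterpretIDataTel{l'}}$, and $\InterpretIDataTel{l'\,\IndexBracket{i'}}$ unfolds to the non-dependent product $\InterpretIDataTel{l'}\Times I$; it then remains to pair $is'$ with $i$, which needs $\TypeJudgment{\Gamma}{i}{I}$. The constraint case $l=l'\,\Constraint{i'}{t}$ is identical, with $I$ the type of the constrained index variable $i'$ read off from $\Gamma$ in the corresponding $\TypeIDataTel$ rule.

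The one point that is not purely mechanical --- and the place I expect the real work to sit --- is discharging $\TypeJudgment{\Gamma}{i}{I}$ in the index and constraint cases: here $i$ is the actual argument occurring in the recursive reference $T$, whereas $I$ is the type that the datatype signature assigns to the corresponding positional index (as recorded in $l$). As literally stated, the hypotheses $\TypeIDataTel{\Gamma}{l}$ and $\ElabRecArgs{\Gamma}{T}{l}{is}$ do not pin this down, since $\ElabRecArgs$ only matches $T$ against $l$ syntactically. The clean fix, which I would adopt, is to read (or strengthen) the lemma as also carrying the invariant that $T$ is a well-formed recursive occurrence --- i.e.\ $\TypeJudgment{\Gamma}{T}{\Set}$ with $T$ a $D$-headed application whose parameter prefix agrees with $\Gamma$ --- an invariant which the recursive-argument rule of Figure~\ref{fig:idesc-elab-arg} is supposed to guarantee at the only site where $\ElabRecArgs$ is invoked. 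From $\TypeJudgment{\Gamma}{T}{\Set}$, repeated inversion of the application typing rule delivers $\TypeJudgment{\Gamma}{i}{I}$ at each index position, and the remaining step is the routine $\Sigma$-introduction; because $\InterpretIDataTel{l'}\Times I$ is non-dependent, pairing $is'$ with $i$ incurs no obligation beyond typing its two components, so the lemma follows and slots into Lemma~\ref{lemma:idesc-elab-arg} (and thence the overall soundness theorem) exactly as the analogous lemmas did in the non-indexed development.
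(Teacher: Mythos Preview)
Your approach --- induction on the derivation of $\ElabRecArgs{\Gamma}{T}{l}{is}$, with the case analysis on $l$ running in lockstep with inversion of $\TypeIDataTel{\Gamma}{l}$ and unfolding of $\InterpretIDataTel{l}$ --- is exactly the intended one; the paper gives no detailed proof of this lemma beyond listing it as the first step in the overall soundness argument, so your structural induction is the natural and expected reconstruction.

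More interestingly, you have put your finger on a genuine gap in the paper's specification rather than in your own argument. The rules for $\ElabRecArgs$ in Figure~\ref{fig:idesc-elab-recargs} indeed carry no premise that type-checks the actual index argument $i$ against the declared index type $I$: the index and constraint rules simply peel off $i$ syntactically and pair it into $is$. So as literally stated, the lemma's hypotheses do not suffice to conclude $\TypeJudgment{\Gamma}{i}{I}$, and hence $\TypeJudgment{\Gamma}{\Pair{is'}{i}}{\InterpretIDataTel{l'}\Times I}$. Your proposed fix --- carrying the side condition that the recursive occurrence $T$ elaborates to a well-typed application of the datatype former, which is what the calling rule in Figure~\ref{fig:idesc-elab-arg} is morally supposed to guarantee --- is the right repair; alternatively one could add an explicit $\ElabCheck{\Gamma}{i}{I}{i'}$ premise to the two offending rules. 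Either way, the paper's informal treatment elides this, and your proposal is more careful than the original on this point.
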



\begin{example}[Vector, constrained]

There is only one instance of recursive definition, in the \(\VCons\)
case. Its elaboration goes as follow:
\[
\Rule{
  \Rule{
      \ElabRecArgs{\TypeAnn{A}{\Set}, \TypeAnn{n}{\Nat}, \TypeAnn{m}{\Nat}}
                  {\Vector{}}
                  {\FatVector}
                  {\Void}}
     {\ElabRecArgs{\TypeAnn{A}{\Set}, \TypeAnn{n}{\Nat}, \TypeAnn{m}{\Nat}}
                  {\Vector{\Var{A}}}
                  {\FatVector[\ParamBracket{\Var{A}}]}
                  {\Void}}}
     {\ElabRecArgs{\TypeAnn{A}{\Set}, \TypeAnn{n}{\Nat}, \TypeAnn{m}{\Nat}}
                  {\Vector{\Var{A}}[\Var{m}]}
                  {\FatVector[\ParamBracket{\Var{A}}\: \Constraint{\Var{n}}{\Suc[\Var{m}]}]}
                  {\Pair{\Void}{\Var{m}}}}
\]

\end{example}


\begin{example}[Vector, computed]

Similarly, the elaboration of the index for the recursive definition
is as follow:
\[
\Rule{
  \Rule{
      \ElabRecArgs{\TypeAnn{A}{\Set}, \TypeAnn{n}{\Nat}, \TypeAnn{m}{\Nat}}
                  {\Vector{}}
                  {\FatVector}
                  {\Void}}
     {\ElabRecArgs{\TypeAnn{A}{\Set}, \TypeAnn{n}{\Nat}, \TypeAnn{m}{\Nat}}
                  {\Vector{\Var{A}}}
                  {\FatVector[\ParamBracket{\Var{A}}]}
                  {\Void}}}
     {\ElabRecArgs{\TypeAnn{A}{\Set}, \TypeAnn{n}{\Nat}, \TypeAnn{m}{\Nat}}
                  {\Vector{\Var{A}}[\Var{m}]}
                  {\FatVector[\ParamBracket{\Var{A}}\IndexBracket{\Suc[\Var{m}]}]}
                  {\Pair{\Void}{\Var{m}}}}
\]

\end{example}


Having stated the soundness properties of each individual elaboration
steps, we can now state and prove the soundness of the elaboration of
inductive families:
\begin{theorem}[Soundness of elaboration]

If 
\(\ElabIData{\Gamma}
            {D}
            {\overrightarrow{\Param{\Var{p}}{P}}}
            {\overrightarrow{\Index{\Var{i}}{I}}}
            {\Set}
            {\mathrm{choices}}
            {\Delta}\), then
\(\ContextValid{\Delta}\).

\end{theorem}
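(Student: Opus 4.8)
The plan is to replay, for inductive families, the bottom-up argument used in Section~\ref{sec:elab-data-types} for inductive types: we establish the soundness lemmas~\ref{lemma:idesc-elab-eqs}, \ref{lemma:idesc-elab-recargs}, \ref{lemma:idesc-elab-arg}, \ref{lemma:idesc-elab-constr}, \ref{lemma:idesc-elab-choices} and~\ref{lemma:idesc-elab-datapatts} from the leaves of a derivation up to the top-level rule of Figure~\ref{fig:idesc-elab-idata}. First I would prove Lemma~\ref{lemma:idesc-elab-eqs} by induction on the description label $l$: the cases $\mathbf{D}$, $l\:\ParamBracket{p}$ and $l\:\IndexBracket{i}$ are immediate (the code is threaded unchanged), while $l\:\Constraint{i}{t}$ discharges to the typing rule of propositional equality, using $\TypeAnn{i}{I}\in\Gamma$ and the fact that soundness of type checking gives $\TypeJudgment{\Gamma}{t'}{I}$, so $i \PropEqual t'$ is a well-formed $\Set$ and the generated code is well-typed at $\IDesc[\InterpretIDataTel{l}]$. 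Next I would prove Lemma~\ref{lemma:idesc-elab-recargs}, again by induction on $l$: parameters are dropped, indices and constraints are paired into the accumulated tuple, and the datatype name inhabits $\Unit$. Along the way I would record a small auxiliary fact about the pattern-validation rules of Figure~\ref{fig:idesc-elab-indices}: the label $l_T$ they produce is well-formed whenever the original one is, and has the same index type $\InterpretIDataTel{l_T}=\InterpretIDataTel{l}$, because replacing an $\IndexBracket{i}$ by a $\Constraint{i}{t}$ affects neither well-formedness (the pattern term inhabits the index type) nor $\InterpretIDataTel{\_}$.

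With these leaf lemmas in hand, Lemma~\ref{lemma:idesc-elab-arg} follows by induction on the telescope of arguments: a proper argument $\PiTel{\Var{x}}{T}$ maps to a $\Sigma$-code using soundness of type checking for $T'$ and the induction hypothesis in the extended context; a recursive argument maps to a variable code $\DVar[is]$ whose index tuple $is$ is well-typed at $\InterpretIDataTel{l}$ by Lemma~\ref{lemma:idesc-elab-recargs}; an exponential argument maps to a $\Pi$-code; and the empty telescope $\epsilon$ maps to the constraint code, whose well-typedness is Lemma~\ref{lemma:idesc-elab-eqs}. Lemma~\ref{lemma:idesc-elab-constr} is then immediate, the tag elaborating at $\UId$ and the arguments at $\IDesc[\InterpretIDataTel{l}]$, and Lemma~\ref{lemma:idesc-elab-choices} follows by collecting the constructor tags into an $\EnumU$ and assembling the per-constructor codes into a lookup function of type $\EnumT[E]\To\IDesc[\InterpretIDataTel{l}]$ through the enumeration eliminator.

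The genuinely new ingredient is Lemma~\ref{lemma:idesc-elab-datapatts}, which I would prove by well-founded induction on the size of the pattern-choice block --- say the number of nested \DoBy{} gadgets plus the number of constructor choices. A plain pattern choice $pt_j \backepsilon cs_j$ is handled by pushing the label through the rules of Figure~\ref{fig:idesc-elab-indices} (which, by the auxiliary fact above, keeps it well-formed and leaves $\InterpretIDataTel{\_}$ fixed) and then invoking Lemma~\ref{lemma:idesc-elab-choices}. The trailing computation $pt_{i+1}\DoBy e\:\{\vec{p_k}\}$ is handled by appealing to the assumed invariant of elimination with a motive, $\TypeJudgment{\Gamma}{e'}{\overrightarrow{(\overrightarrow{\PiTel{\Var{x_k}}{X_k}}\To\LabelDesc{l_k})}\To\LabelDesc{l_{i+1}}}$, and then applying the induction hypothesis to each sub-branch $p_k$ in the context $\Gamma,\overrightarrow{\TypeAnn{x_k}{X_k}}$; the sub-branches are strictly smaller, which is exactly where the well-founded order is used. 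Combining the two, the assembled $\ReturnDesc$ lands in $\LabelDesc{l}$: the enumeration $\Collection{\vec{c_{i,j}},\Tag{\mathrm{elim}}}$ and its lookup function live in $\EnumT[E]\To\IDesc[\InterpretIDataTel{l}]$, using that the elimination rule for $\LabelDesc{l}$ sends a $\LabelDesc{l}$ to an $\IDesc[\InterpretIDataTel{l}]$, so the $\Tag{\mathrm{elim}}$ branch $e'\:\overrightarrow{(\LamAnn{\vec{x_k}}{\vec{X_k}}{\CallDesc{l}{\mathrm{code}_k}})}$ type-checks. This step is the main obstacle I foresee: pinning down a termination measure that makes the recursively elaborated sub-branches genuinely smaller, and --- more delicately --- checking that $\InterpretIDataTel{\_}$ is stable under all the label refinements performed both by the pattern-validation rules and by the motive-generated case split, so that the codes produced for the various branches all inhabit a single $\IDesc[\InterpretIDataTel{l}]$ and can be glued together.

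Finally, the theorem is the top-level rule of Figure~\ref{fig:idesc-elab-idata}. From its premise, soundness of type checking gives that $\overrightarrow{\PiTel{\Var{p}}{P'}}\,\overrightarrow{\PiTel{\Var{i}}{I'}}\To\Set$ is a well-formed type, hence the context $\Gamma,\overrightarrow{\TypeAnn{p}{P'}},\overrightarrow{\TypeAnn{i}{I'}}$ is valid and in it the initial label $\mathbf{D}\:\overrightarrow{\ParamBracket{p}}\:\overrightarrow{\IndexBracket{i}}$ is well-formed. Lemma~\ref{lemma:idesc-elab-datapatts} then yields $\TypeJudgment{\Gamma,\ldots}{\mathrm{code}}{\LabelDesc{\mathbf{D}\:\ldots}}$, so $\CallDesc{\mathbf{D}\:\ldots}{\mathrm{code}}$ has type $\IDesc[\InterpretIDataTel{\mathbf{D}\:\overrightarrow{\ParamBracket{p}}\:\overrightarrow{\IndexBracket{i}}}]$; and since $\InterpretIDataTel{\_}$ of that label computes to the product of the index types $\vec{I'}$, abstracting over $\vec{i}$ yields a function into $\IDesc$ at that product, which feeds $\IMu$, and abstracting over $\vec{p}$ yields the promised inhabitant of $\overrightarrow{\PiTel{\Var{p}}{P'}}\,\overrightarrow{\PiTel{\Var{i}}{I'}}\To\Set$. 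Extending $\Gamma$ with this definition therefore gives a valid context $\Delta$, as required; everything beyond Lemma~\ref{lemma:idesc-elab-datapatts} is the routine symbol-pushing already carried out for inductive types.
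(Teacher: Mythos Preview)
Your proposal is correct and follows essentially the same bottom-up strategy as the paper: establish Lemmas~\ref{lemma:idesc-elab-recargs} and~\ref{lemma:idesc-elab-eqs} at the leaves, then Lemma~\ref{lemma:idesc-elab-arg} by induction on the argument telescope, then Lemmas~\ref{lemma:idesc-elab-constr} and~\ref{lemma:idesc-elab-choices}, then Lemma~\ref{lemma:idesc-elab-datapatts} assuming soundness of elimination with a motive, and finally conclude from the top-level rule. You are in fact more thorough than the paper's sketch: your auxiliary observation that pattern validation preserves both well-formedness of the label and the value of $\InterpretIDataTel{\_}$, and your explicit termination measure for the recursion in Lemma~\ref{lemma:idesc-elab-datapatts}, fill in details the paper leaves implicit.
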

\begin{proof}

First, we prove that labels elaborate to a type correct index
(Lemma~\ref{lemma:idesc-elab-recargs}). We then prove that the
constraints generated by interpreting the label are valid descriptions
(Lemma~\ref{lemma:idesc-elab-eqs}). From these lemmas, we can prove
the soundness of the elaboration of arguments by induction over the
telescope of arguments (Lemma~\ref{lemma:idesc-elab-arg}). This gives
straightforwardly the validity of the elaboration of a constructor
(Lemma~\ref{lemma:idesc-elab-constr}). Using that Lemma over each
constructors, we thus obtain the soundness of the elaboration of
choices (Lemma~\ref{lemma:idesc-elab-choices}). Using this result and
assuming the soundness of elimination with a motive, we prove
Lemma~\ref{lemma:idesc-elab-datapatts}. This gives the desired result.

\end{proof}


\section{Reflections on Inductives}
\label{sec:discussion}


Having described our infrastructure to elaborate inductive definitions
down to descriptions, we would like to give an overview of the
possibilities offered by such a system. Indeed, in a purely syntactic
presentation of inductives, we are stuck at the meta-level of the type
theory: if we want to provide support to manipulate inductive types,
it must be implemented as part of the theorem prover, out of the type
theory. Alternatively, a quoting/unquoting mechanism could be provided
but guaranteeing the safety of such an extension is likely to be
tricky.

Because our type theory reflects inductives in itself, the meta-theory
of inductive types is no more than a universe. What used to be
meta-theoretical constructions can now be implemented from within the
type theory, benefiting from the various amenities offered by a
dependently-typed programming language. In this Section, we present
two examples of such ``reflection on inductives''. Our first example,
reflecting constructions on
constructors~\citep{mcbride:construction-constructor}, will appeal to
the implementers: we hint at the possibility of implementing key
features of the type theory within itself, a baby step toward
bootstrapping. Our second example, providing a user defined
\texttt{deriving} mechanism, should appeal to programmers: we
illustrate how programmers could provide generic operations over
datatypes and see them automatically integrated in their development.
For simplicity and conciseness, we shall define these mechanisms over
our universe of inductive types, \(\Desc\). Nonetheless, it is
straightforward, but more verbose, to extend these constructions to
inductive families.


\subsection{A few constructions on constructors, internalized}
\label{sec:const-on-const}

\citet{mcbride:construction-constructor} describe a collection of
lemmas that theorem prover's implementer would like to export with
every inductive type. In that paper, the authors first show how one
can reduce case analysis and course-of-value recursion to standard
induction. Then, they describe two lemmas over datatype constructors:
\emph{no confusion} -- constructors are injective and disjoint -- and
\emph{acyclicity} -- we can automatically disprove equalities of the
form \(x = t\) where \(x\) appears constructor-guarded in \(t\).

\Spacedcommand{\CaseDesc}{\Function{case}}

However, since this paper works on the syntactic form of datatype
definitions, it is rife with ``\ldots'' definitions. For instance, the
authors reduce case analysis to induction with no less than ten
ellipsis in the construction. In our system, we generically derive
case analysis by a mere definition \emph{within the type theory}:
\[
  \Let{\CaseDesc 
        \PiTel{\Var{D}}{\Desc} 
        \PiTel{\Var{P}}{\Mu{\Var{D}} \To \Set} 
        \PiTel{\Var{cases}}{(\PiTo{\Var{d}}{\InterpretDesc{\Var{D}}[(\Mu{\Var{D}})]}
                               \Var{P} (\In{\Var{d}}))} 
        \PiTel{\Var{x}}{\Mu{\Var{D}}}}
      {\Var{P} \Var{x}}{
\CaseDesc[\Var{D}\:
          \Var{P}\:
          \Var{cases}\:
          \Var{x}] \DoReturn
       {\induction[\Var{D}\: \Var{P}\: (\Lam{\Var{d}}\Lam{\_}\Var{cases}\: \Var{d})\: \Var{x}]}
}
\]

\Spacedcommand{\NoConfusion}{\Function{NoConfusion}}
\Spacedcommand{\noConfusion}{\Function{noConfusion}}
\Spacedcommand{\DecideEqEnum}{\Function{decideEq-EnumT}}
\Spacedcommand{\DecideEqEqual}{\Constructor{equal}}
\Spacedcommand{\DecideEqNotEqual}{\Constructor{not-equal}}
\Spacedcommand{\DescEq}{\Function{DescEq}}

Similarly, the authors specify and prove the no confusion lemma over
the skeleton of an inductive definition. In our system, this result is
internalized through two definitions. In the following, we will assume
that \(D\) is a tagged description, \ie \(D = \Dsigma[\Meta{E}\:
  \Meta{T}]\) where \(\Meta{E}\) is the finite sets of constructor
labels. An inhabitant of \(\Mu[D]\) is therefore a pair
\(\In[\Pair{c}{a}]\) with \(c\) representing the constructor name and
\(a\) the tuple of arguments. We state the \(\NoConfusion\) lemma over
such code:
\[
\Let{\NoConfusion 
       & \PiTel{\Var{x}}{\Mu{\Var{D}}}
       & \PiTel{\Var{y}}{\Mu{\Var{D}}}}
    {\Set[1]}{
\With{\NoConfusion 
         & (\In[\Pair{\Var{x}}{\Var{a_x}}])
         & (\In[\Pair{\Var{y}}{\Var{a_y}}])}
     {\DecideEqEnum[\Var{x}\: \Var{y}]}{
\Return{\NoConfusion 
  & (\In[\Pair{\Var{x}}{\Var{a_x}}]) 
  & (\In[\Pair{\Var{x}}{\Var{a_y}}]) 
  & \WithArg{\DecideEqEqual[\Refl{}{}]}}
       {\PiTo{\Var{P}}{\Set} (\Var{a_x} \PropEqual \Var{a_y} \To \Var{P}) \To \Var{P}}
\Return{\NoConfusion
  & (\In[\Pair{\Var{x}}{\Var{a_x}}]) 
  & (\In[\Pair{\Var{y}}{\Var{a_y}}]) 
  & \WithArg{\DecideEqNotEqual[\Var{q}]}}
       {\PiTo{\Var{P}}{\Set} \Var{P}}

}}
\]
and then prove it by deciding the equality of enumeration index to
discriminate constructor names:
\[
\Let{\noConfusion 
      & \PiTel{\Var{x}}{\Mu{\Var{D}}}
      & \PiTel{\Var{y}}{\Mu{\Var{D}}}
      & \PiTel{\Var{q}}{\Var{x} \PropEqual \Var{y}}}
    {\NoConfusion[\Var{x}\: \Var{y}]}{
\With{\noConfusion 
         & (\In[\Pair{\Var{x}}{\Var{a_x}}])
         & (\In[\Pair{\Var{y}}{\Var{a_y}}])
         & \Var{q}}
     {\DecideEqEnum[\Var{x}\: \Var{y}]}{
\Return{\noConfusion 
  & (\In[\Pair{\Var{x}}{\Var{a_x}}]) 
  & (\In[\Pair{\Var{x}}{\Var{a_y}}]) 
  & \Var{q}
  & \WithArg{\DecideEqEqual[\Refl{}{}]}}
       {\Lam{\Var{P}}{\Lam{\Var{rec}}{\Var{rec}\: \Var{q}}}}
\Return{\noConfusion
  & (\In[\Pair{\Var{x}}{\Var{a_x}}]) 
  & (\In[\Pair{\Var{y}}{\Var{a_y}}]) 
  & \Var{q}
  & \WithArg{\DecideEqNotEqual[\Var{neq}]}}
       {\Lam{\Var{P}}{\EmptyElim[(\Var{neq}\: \Var{q})]}}

}}
\]
At this stage, we have proved this lemma generically, for all tagged
descriptions. Hence, after having defined a new datatype, a user can
directly use this lemma on her definition. For convenience, a
subsequent elaboration phase could also specialize such lemma to the
particular definition.


\subsection{Deriving operations on datatypes}

Another possible extension of our system is a generic
\texttt{deriving} mechanism. In the \textsc{Haskell} language, we can
write a definition such as
\[\Code{
\NatDef \\
\qquad\Kwd{deriving}\: \mathsf{Eq}
}\]
that automatically generates an equality test for the given
datatype. Again, since datatypes are a meta-theoretical entity, this
deriving mechanism has to be provided by the implementer and, template
programming aside, they cannot be implemented by the programmers
themselves. 

\Spacedcommand{\Derivable}{\Canonical{Derivable}}
\Spacedcommand{\Decidable}{\Canonical{Decidable}}
\Spacedcommand{\SemiDecidable}{\Canonical{SemiDecidable}}
\newcommand{\subDesc}{\CN{subDesc}}
\newcommand{\decideIn}{\CN{membership}}
\newcommand{\derive}{\CN{derive}}

In our framework, we could extend the elaborator for datatypes with a
\texttt{deriving} mechanism. However, for such a mechanism to work, we
must restrict ourselves to decidable properties: for example, if the
user asks to derive equality on a datatype that do not have a
decidable equality (\eg, Brouwer ordinals), the system should fail
immediately. To solve this issue, we add one level of
indirection\if 0 \footnote{Quoting David Wheeler, "all problems in computer
  science can be solved by another level of indirection"}\fi: while we
cannot decide equality for \emph{any} datatype, we can decide whether
the datatype belongs to a sub-universe \emph{for which} equality is
decidable. Hence, to introduce a derivable property \(P\) in the type
theory, the programmer would populate the following record structure:
\[
\Let{\Derivable & \PiTel{\Var{P}}{\Desc \To \Set}}
    {\Set[1]}{
\multicolumn{5}{@{}l}{
\Derivable[\Var{P}] \DoReturn
\left\{
\begin{array}{l@{\:}l}
  \TypeAnn{\Var{\subDesc} &}{\Desc \To \Set[1]} \\
  \TypeAnn{\Var{\decideIn} &}{\PiTo{\Var{D}}{\Desc} \Decidable[(\Var{\subDesc}\: \Var{D})]} \\
  \TypeAnn{\Var{\derive} &}{\Var{\subDesc}\: \Var{D} \To \Var{P}\: \Var{D}}
\end{array}
\right.
}}
\]

\newcommand{\eqDesc}{\Function{eqDesc}}
\newcommand{\decideInEq}{\Function{membershipEq}}
\newcommand{\deriveEq}{\Function{deriveEq}}

For example, in the case of equality, the programmer has first to
provide a function \(\TypeAnn{\eqDesc}{\Desc \To \Set[1]}\). One
possible (perhaps simplistic, but valid) sub-universe consists only of
products, finite sums, recursive call, and unit: it is enough to describe
natural numbers and variants thereof. She then implements a procedure
\(\TypeAnn{\decideInEq}{\PiTo{\Var{D}}{\Desc}
  \Decidable[(\Var{\eqDesc}\: \Var{D})]}\) deciding whether a given
\(\Desc\) code fits into this sub-universe or not. Recall that
\(\Decidable[\Meta{A}]\) corresponds to \(\Meta{A} \Sum
\Neg[\Meta{A}]\). It should be clear that the membership of a
\(\Desc\) code to our sub-universe of finite products and sums is
decidable. Finally, she implements the key operation
\(\TypeAnn{\deriveEq}{\eqDesc\: \Var{D} \To \PiTo{\Var{x}\:
    \Var{y}}{\Mu[\Var{D}]}{\Decidable[(\Var{x} \PropEqual \Var{y})]}}\) that
decides equality of two objects, assuming that they belong to the
sub-universe. This implements the structure
\(\TypeAnn{\Canonical{Eq}}
          {\Derivable[(\Lam{\Var{D}}
                          {\PiTo{\Var{x}\:\Var{y}}{\Mu[\Var{D}]}
                             \Decidable[\Var{x} \PropEqual \Var{y}]})]}\).

\Spacedcommand{\NatEq}{\Function{Nat-eq}}
\Spacedcommand{\Witness}{\Function{witness}}

While elaborating a datatype, it is then straightforward -- and
automatic -- for us to generate its derivable property, or reject it
immediately: we simply compute \(\decideIn\) on the specific code. If
we obtain a negative response, we report an error. If we obtain a
positive witness, we pass that witness to \(\derive\) and instantiate
the derived property. 

For example, since natural numbers fit into the \(\eqDesc\) universe,
the elaboration machinery would automatically generate the following
decision procedure
\[
\Let{\NatEq & \PiTel{\Var{x}\: \Var{y}}{\Nat}}
    {\Decidable[\Var{x} \PropEqual \Var{y}]}{
\Return{\NatEq & \Var{x}\: \Var{y}}
       {\deriveEq\: (\Witness[(\decideInEq\: \NatD)\: \Void])}
}
\]
without any input from the user but the \texttt{deriving}
\(\Canonical{Eq}\) clause. Here, \(\Witness\) is a library function
that extracts the witness from a true decidable property: applied to
\(\NatD\), the function \(\decideInEq\) computes a positive witness
that we can simply extract.


\newpage
\section{Conclusion}


In this paper, we have striven to give a coherent framework for
elaboration in type theory. We have organized our system around two
structuring ideas. First, by using the flow of typing information, we
obtain a richer and more flexible term language. Second, by using
types as presentation of high-level concepts, such as inductive
definition, we can effectively guide the elaboration process. This
technique is conceptually simple and therefore amenable to formal
reasoning. This simplicity together with the soundness proofs should
convince the reader of its validity.


From there, we believe that reasoning on inductive definitions can be
liberated from the elusive ellipsis: proofs and constructions on
inductives ought to happen within the type theory itself. After
\citet{harper:elaboration}, we claim that if the treatment of
datatypes is conceptually straightforward then it ought to be
technically straightforward and implemented as a generic program in
the type theory. For non-straightforward properties, our results
should be reusable across calculi -- such as the Calculus of Inductive
Constructions -- and not too rigidly tied to our universe of
datatypes. Besides, we were careful to present elaboration as a
relation rather than a mere program, making it more amenable to
abstract reasoning.


To support our claim that inductives should be bootstrapped, we have
presented two possible extensions to the elaboration process. While we
did not formalize these examples, our expectations seem reasonable and
our experience modeling them in Agda further supports this
impression. We have seen how generic theorems on inductive types can
be internalized as generic programs: besides the benefit of reducing
the trusted computing base, their validity is guaranteed by type
checking. Also, we have glimpsed at a generic \texttt{deriving}
mechanism: with no extension to the type theory, we are able to
let the user define sub-universes that support certain
operations. These operations could then be automatically specialized
to the datatypes that support them, all that without any user
intervention.


\paragraph{Future work:} A next step would be to adapt our syntax to coinductive
types. Similarly, it would be interesting to see if it scales to
inductive-recursive definitions. On the implementation side, we are
currently implementing these various systems in a toy type
theory. This step will require a formalization of our
\texttt{deriving} mechanism. Also, we will have to generate the
specialized induction principles (such as case analysis and
course-of-value recursion) as well as the constructions on
constructors. Finally, an interesting challenge would be to
internalize the elaboration process \emph{itself} in type theory,
hence obtaining a correct-by-construction translation.

\paragraph{Acknowledgements} 
We would like to thank Pierre Boutillier for pointing us to the
relevant literature on Coq's treatment of inductives. We also thank
our colleagues Guillaume Allais and Stevan Andjelkovic for many
stimulating discussions and for their input on this paper. The authors
are supported by the Engineering and Physical Sciences Research
Council, Grant EP/G034699/1.


\newpage
\bibliographystyle{abbrvnat}
\bibliography{paper,../thesis-2011-phd/levitation,../thesis-2011-phd/funorn}

               
\end{document}